\documentclass[12pt,reqno]{amsart}

\usepackage{amsmath,amsfonts,amsbsy,amsthm}
\usepackage{amssymb}
\usepackage{dsfont}
\usepackage{mathtools}


\usepackage{xy}
\xyoption{matrix} \xyoption{arrow} \xyoption{arc} \xyoption{color}

\usepackage{enumerate}
\usepackage{enumitem}
\setlist{leftmargin=*}

\usepackage{tikz}
\usetikzlibrary{decorations.pathmorphing,shapes,arrows}




\numberwithin{equation}{section}

\makeatletter
\newtheoremstyle{corsivo}
   {\medskipamount}{\medskipamount}%
   {\itshape}{}%
   {\bfseries}{}%
   { }
   {\thmname{#1}\thmnumber{\@ifnotempty{#1}{ }\@upn{#2}}%
    \thmnote{ {\bfseries\boldmath(#3)}}.}%
\makeatother

\theoremstyle{corsivo}
\newtheorem{theorem}{Theorem}[section]
\newtheorem{lemma}[theorem]{Lemma}
\newtheorem{corollary}[theorem]{Corollary}
\newtheorem{proposition}[theorem]{Proposition}

\makeatletter
\newtheoremstyle{dritto}
   {\medskipamount}{\medskipamount}%
   {\rmfamily}{}%
   {\bfseries}{}%
   { }
   {\thmname{#1}\thmnumber{\@ifnotempty{#1}{ }\@upn{#2}}%
    \thmnote{ {\bfseries\boldmath(#3)}}.}%
\makeatother

\theoremstyle{dritto}
\newtheorem{definition}[theorem]{Definition}
\newtheorem{remark}[theorem]{Remark}

\newtheorem{assumption}[theorem]{Assumption}


\newcommand{\sub}[1]{_{\mathrm{#1}}}
\newcommand{\subm}[2]{_{\mathrm{#1},#2 }}
\newcommand{\su}[1]{^{\mathrm{#1}}}


\newcommand{\eps}{\varepsilon}
\newcommand{\epsi}{\varepsilon}


\newcommand{\Id}{\mathds{1}}  

\newcommand{\eu}{\mathrm{e}}
\newcommand{\iu}{\mathrm{i}}   
\newcommand{\di}{\mathrm{d}}



\newcommand{\N}{\mathbb{N}}
\newcommand{\Z}{\mathbb{Z}}
\newcommand{\R}{\mathbb{R}}
\newcommand{\C}{\mathbb{C}}


\newcommand{\Do}{\mathcal{D}}

\newcommand{\Hi}{\mathcal{H}}
\newcommand{\Hf}{\mathcal{H}\sub{f}}
\newcommand{\Df}{\mathcal{D}\sub{f}}
\newcommand{\U}{\mathcal{U}}
\newcommand{\X}{\mathcal{X}}
\newcommand{\B}{\mathcal{B}}
\newcommand{\PO}{\mathcal{P}}

\newcommand{\UZ}{\U\sub{BF}}   



\newcommand{\norm}[1]{\left\| #1 \right\|}

\newcommand{\set}[1]{ \left\{  #1 \right\}}


\DeclareMathOperator{\Tr}{Tr}         

\DeclareMathOperator{\re}{Re} \DeclareMathOperator{\im}{Im}

\DeclareMathOperator{\Ran}{Ran} 
\DeclareMathOperator{\Span}{Span}




\newcommand{\ie}{{\sl i.\,e.\ }}   
\newcommand{\eg}{{\sl e.\,g.\ }} 


\newcommand{\ve}[1]{\mathbf{#1}}
\newcommand{\V}[1]{\mathbf{#1}}

\newcommand{\abs}[1]{\left\lvert#1\right\rvert}
\newcommand{\FC}{{\mathcal{C}}}

\usepackage{comment}


\newcommand{\virg}[1]{``#1''}

\newcommand{\half}{\mbox{\footnotesize $\frac{1}{2}$}}

\renewcommand{\(}{\left(}
\renewcommand{\)}{\right)}






\newcommand{\E}{{\mathrm{e}}}
\newcommand{\I}{\mathrm{i}}
\newcommand{\D}{\mathrm{d}}
\newcommand{\Or}{{\mathcal{O}}}


\let\oldfootnote\footnote
\renewcommand{\footnote}[1]{\oldfootnote{\  #1}}

\setlength{\oddsidemargin}{5mm} \setlength{\evensidemargin}{5mm}
\setlength{\textwidth}{15cm}

\setlength{\parskip}{1mm}

\title[A new approach to transport coefficients in the QSHE]
{A new approach to transport coefficients\\[2mm]  in the quantum spin Hall effect}
\author[G.~Marcelli, G.~Panati, S.~Teufel]{Giovanna Marcelli \and Gianluca Panati \and Stefan Teufel}
\date{April 2, 2020. Version submitted to \textsl{arxiv.org}.}

\usepackage{hyperref}

%
%
%

%
%
%


\begin{document}

\begin{abstract}
We investigate some foundational issues in the quantum theory of spin transport,
in the general case when the unperturbed Hamiltonian operator $H_0$ does not 
commute with the spin operator in view of Rashba interactions, as in the typical models for the Quantum Spin Hall effect.

A gapped periodic one-particle Hamiltonian $H_0$ is perturbed by adding a constant electric field of intensity $\eps \ll 1$ in the $j$-th direction, and the linear response in terms of a $S$-current in the $i$-th direction is computed,
where $S$ is a generalized spin operator.
We derive a general formula for the spin conductivity that covers both the choice of the conventional and of the proper spin current operator.  We investigate the independence of the spin conductivity from the choice of the fundamental cell 
(\emph{Unit Cell Consistency}), and we isolate a subclass of discrete periodic models where the conventional and the proper
$S$-conductivity agree, thus showing that the controversy about the choice of the spin current operator is immaterial as far as 
models in this class are concerned.   As a consequence of  the general theory, we obtain that whenever the spin is (almost) conserved, the spin conductivity is (approximately) equal to the spin-Chern number.

The method relies on the characterization of a \emph{non-equilibrium almost-stationary state} (NEASS), which well approximates the physical state of the system (in the sense of space-adiabatic perturbation theory) and allows moreover to compute the response of the adiabatic $S$-current as the trace per unit volume of the $S$-current operator times the NEASS. This technique can be applied in a general framework, which includes both discrete and continuum models.
\end{abstract}

\maketitle

\tableofcontents

\newpage

\section{Introduction and main results}



The aim of this paper is to shed some light on the theory of spin transport in gapped (non-interacting) fermionic systems,
a problem which is highly relevant to the research on 
topological insulators (see the end of Section 1.2). \newline
The theory of spin transport, as compared to charge transport, is still in a preliminary stage.
First, despite two decades of scientific debate, no general consensus has been reached yet
about the correct form of the operator representing the spin current density.  
Denoting by $H_0$  the unperturbed Hamiltonian operator, 
by $\ve X=(X_1, \ldots, X_d)$ the position operator 
and by $S_z$ the operator representing the $z$-component of the spin,  one may consider
\footnote{All over the paper we use Hartree units, so that the reduced Planck constant $\hbar$, the mass of the electron 
$m\sub{e}$ and the charge of the positron $e$ are equal to $1$.
With this choice, both the unit of charge conductivity $\frac{e^2}{h}$ and of spin conductivity $\frac{e}{2\pi}$ reduce 
to $\frac{1}{2\pi}$.
}\  
\begin{enumerate} [label=(\roman*), ref=(\roman*)]
\item \label{item: conv S current}
the \virg{conventional} spin current operator
\begin{equation} \label{J_conv}
\ve J \sub{conv}^{S_z}:= 
\half \( \iu [H_0,\ve X] \, S_z + \iu  S_z \, [H_0,\ve X] \)
\end{equation}
which has been used \eg in \cite{Sinovaetalii, ShengShengTingHaldane2005};
\item \label{item: prop S current} the \virg{proper} spin current operator
\begin{equation} 
\label{J_proper}
\ve{J}\sub{prop}^{S_z} := \iu [H_0, \ve X S_z]
\end{equation}
proposed in \cite{ShiZhangXiaoNiu06, cinesi}.
\end{enumerate} 

Whenever $[H_0,S_z]=0$, the two above definitions agree and the theory of spin transport reduces to the theory of charge transport. However, in general $[H_0,S_z] \neq 0$ in topological insulators, as it happens \eg in the model proposed by Kane and Mele in view of the so-called {\it Rashba term} \cite{KaneMele2005a, HasanKane10}. 
As we will explain now (see also \cite{MarcelliPanatiTauber18}), the lack of commutativity poses technical and conceptual problems for the theory of spin transport, and the main objective of our paper is to clarify some of these issues.
 Related to the different possible choices for the spin current operators, 
 first note that whenever $H_0$ is periodic, $\ve J \sub{conv}^{S_z}$ is periodic while $\ve{J}\sub{prop}^{S_z}$ is not, 
which \virg{{\it leads to technical difficulties, but also questions its physical relevance}} \cite{Schulz-Baldes2013}.
Moreover,  in most of the tight-binding models, thanks to the underlying ultraviolet cutoff, $\ve J \sub{conv}^{S_z}$ provides a bounded operator, while $\ve{J}\sub{prop}^{S_z}$  is generically unbounded. On the other hand,  as emphasized in \cite{{ShiZhangXiaoNiu06, cinesi}}, $\ve{J}\sub{prop}^{S_z}$ yields a spin current density associated with a \emph{mesoscopic} sourceless continuity equation and to Onsager relations, in contrast to~$\ve J \sub{conv}^{S_z}$. 


As a second issue, whenever $[H_0,S_z]=0$ the spin conductivity is given, in analogy with charge transport,
by a double commutator formula, namely
\begin{equation} \label{sigma_double_comm}
\sigma_{ij}^{S_z}  = \iu \, \tau \( \Pi_0 S_z  \Big[ [X_i, \Pi_0], [X_j, \Pi_0] \Big] \) \,,       
\end{equation}  
where $\tau$ is the trace per unit volume and $\Pi_0$ the Fermi projector of the gapped system. 
Formula \eqref{sigma_double_comm}, 
equivalently rewritten in terms of Bloch orbitals, has been considered as the starting point 
for further analysis of the robustness of the spin conductivity 
\cite{Sinovaetalii,ShengShengTingHaldane2005}, 
or for a mathematical comparison of spin conductivity  and spin conductance \cite{MarcelliPanatiTauber18}.

In this paper we address two  foundational questions in spin transport theory: 
\renewcommand{\labelenumi}{{\rm(Q\arabic{enumi})}}
\begin{enumerate} 
\item  is it possible to {\it derive}  from the first principles of Quantum Theory, 
in the general case $[H_0,S_z]\neq0$, a double commutator formula for the spin conductivity similar to \eqref{sigma_double_comm}?   
\item to which extent is such a formula affected by a different choice of the spin current operator, namely 
$\ve J \sub{conv}^{S_z}$ versus $\ve J\sub{prop}^{S_z}$?  
\end{enumerate} 
Moreover, any formula for spin transport coefficients should satisfy the so-called {\it Unit Cell Consistency} (UCC), 
namely the requirement that any prediction on macroscopic transport must be independent 
of the choice of the fundamental cell \cite{YangHeZheng2020}. 


In order to answer these questions, we reconsider the whole approach to quantum transport theory.

\subsection{Two paradigms for quantum transport}
\label{Sec:TwoParadigms}

The usual paradigm is based on the adiabatic switching-on of the perturbing electric field. More specifically, one considers the time-dependent Hamiltonian operator
\begin{equation} \label{eqn:H(t)} 
H \sub{switch}(t) := H_0 - f(\eta t) \, \eps \, X_j,
\end{equation}
where $f \colon \R \to [0,1]$ is a smooth function such that $f(s) = 0$ for all $s\le -1$ and $f(s) = 1$ for all $s \ge   0$, \ie the Hamiltonian describes the process where the perturbation is switched on during the finite time interval $[-1/\eta,0]$, for $\eta>0$. 
As $\eta \rightarrow 0^+$, the process becomes adiabatic. 
One assumes that the system is prepared,  at some time {$t \le -1/\eta$}, in the equilibrium state $\Pi_0$
and that the switching occurs adiabatically. The state $\rho_{\eps, \eta}(s)$ at macroscopic time $s = \eta t$ is given by the solution to the time-dependent Schr\"odinger equation
\begin{equation} \label{eqn:AdiabaticEvolution}
\begin{cases}
\iu \, \eta \, \frac{\D}{\D s}{\rho}_{\eps,\eta}(s) = [H\sub{switch}(s), \rho_{\eps, \eta}(s)] \\
\rho_{\eps, \eta}(-1) = \Pi_0
\end{cases}
\end{equation}
The linear response coefficient $\sigma_A$ of an extensive observable $A$  is defined by 
comparing the expectation value of $A$ at time $t_* \geq 0$ (when the perturbation is completely switched-on) 
and in the far past (when the system is in the unperturbed equilibrium state). By considering the adiabatic limit, 
one defines $\sigma_A$ by setting 
$$
\lim_{\eta \to 0^+}  \re \, \tau( A \, \rho_{\epsi, \eta}(t_*)) -  \re \, \tau( A \, \Pi_0) =: \epsi \, \sigma_A + o(\epsi)  
\qquad \quad \text{as }\epsi \to 0.
$$
The real part appears in the formula since one does not know {\it a priori} whether the conditional cyclicity of the trace per unit volume can be invoked.
\footnote{A similar phenomenon appears in Quaternionic Quantum Mechanics, where the Hilbert space trace fails to be cyclic \cite{MorettiOppio18}. 
}\quad 
The standard approach for  obtaining a tractible fomula for $\sigma_A$ is to first   approximate   $\rho_{\epsi, \eta}(0)$ by first-order time-dependent perturbation theory, and then to formally exchange  the small field limit and the adiabatic limit, see \eg \cite{AizenmanGraf98,Teufel20}. Choosing 
 $f(\eta t) = \eu^{\eta t} \chi_{(- \infty, 0]}(t) + \chi_{(0, + \infty)}(t)$, this results in
  Kubo's formula \cite{Kubo57} for the linear response coefficients\footnote{Note that the specific choice $f(\eta t) = \E^{\eta t}$ for $t\leq 0$ has the computational advantage that the integral in \eqref{KuboFormula} becomes the inverse Liouvillian, \ie that the right hand side of \eqref{KuboFormula}  for finite $\eta>0$ equals, at least formally, 
$\tau((\mathcal{L}_{H_0} - \I \eta)^{-1}([X_j,\Pi_0]) A)$. However, in the adiabatic limit
any other integrable and smooth choice for the switching function $f$ leads to the same value for $ \sigma_A^{\rm Kubo}$.} 
\begin{equation}\label{KuboFormula}
 \sigma_A^{\rm Kubo} := -\I \lim_{\eta\to  {0^+}}  { \int\limits_{-\infty}^0}\D t  \,\E^{\eta t}  \,\tau 
 \Big(  \E^{-\I H_0  t}\, \left[X_j,\Pi_0\right]  \,\E^{\I H_0  t} A   \Big)  \,.
\end{equation}
In the case of charge transport, one considers the response of a   charge current in the $i$-th direction, $i \in \set{1, \ldots, d}$, whose corresponding quantum mechanical operator is
$ J_i\su{c} := \iu [H_0, X_i]\,,$
and from \eqref{KuboFormula} one obtains the  formula 
\begin{equation} \label{Kubo_charge}
 \sigma_{J_i\su{c}}^{\rm Kubo} = \I \,\tau \Big( \Pi_0 \Big[ [X_i, \Pi_0], [X_j,\Pi_0] \Big]  \Big)\,.
\end{equation}
The importance of the double commutator formula \eqref{Kubo_charge} 
(sometimes dubbed {\it Kubo-Chern formula}) cannot be overstated, 
as it implies \eg quantization of Hall conductivity in $2$-dimensional systems \cite{AvronSeilerSimon, Bellissard94, KleinSeiler90, Graf07}. When considering  spin transport, we had to face the fact that even the algebra which leads formally to 
\eqref{Kubo_charge} becomes cumbersome for spin currents, whenever $[H_0, S_z] \neq 0$. 
Moreover,  the fact that the formula is intrinsic (\ie does not depend on the choice of the switching function appearing in \eqref{eqn:H(t)} and on the choice of $t_* \geq 0$) is not obvious as far as spin currents are concerned.


Thus we propose an alternative way of computing linear response coefficients based on the non-equilibrium almost-stationary states (NEASS), a concept related to the almost-invariant subspaces in space-adiabatic perturbation theory \cite{PanatiSpohnTeufel03, PanatiSpohnTeufel03b, Teufel03}.
In a nutshell, the NEASS  $\Pi^\epsi$ is the unique  almost-invariant state for the perturbed 
stationary Hamiltonian  $H^\epsi = H_0 - \epsi X_j$ that is $\epsi$-close to the  equilibrium state $\Pi_0$ of the unperturbed Hamiltonian $H_0$. More precisely, the asymptotic expansion of $\Pi^\epsi$ in powers of $\epsi$  is uniquely determined by the conditions
\begin{itemize}
\item[(i)] $ [\Pi^\epsi, H^\epsi] = \Or (\epsi^\infty)$
\item[(ii)] $\Pi^\epsi - \Pi_0 = \Or(\epsi)$\,,
\end{itemize}
where details on the precise norms will be given later. Assuming, for the moment,  that the state of the system at times when the perturbation has been turned on is approximately given by the NEASS $\Pi^\epsi$, we find a simple prescription for computing linear (and also higher order) response coefficients:  let $\Pi^\epsi = \Pi_0 + \epsi \Pi_1 + o(\epsi)$, then from 
$$
 \tau(A \, \Pi^\epsi ) -  \tau(A \, \Pi_0)
= \epsi  \tau(A \, \Pi_1) + o(\epsi)     
$$
one concludes that
\begin{equation}\label{LinearResponseNew}
\sigma_A = \re \tau(A \, \Pi_1)\,.
\end{equation}
In this paper we will show how to compute formulas for the spin-conductivities based on  formula \eqref{LinearResponseNew} for linear response coefficients, instead of \eqref{KuboFormula}.
The advantage of this method is that the operator $\Pi_1$ is rather explicit, namely  
$\Pi_1 = \mathcal{I}\( \overline{[  X_j,\Pi_0]} \)$, where the overline denotes the operator closure 
and $\mathcal{I}$ is the inverse of the Liouvillian operator  $ B \mapsto [H_0, B]$, with integral representation \eqref{eqn:I(A)}.

Of course one expects, and formally it is also easy to see, that the two expressions 
\eqref{KuboFormula} and \eqref{LinearResponseNew} agree. However, in the present setting -- 
where expectations are obtained via a trace per unit volume which is only conditionally cyclic -- this is not straightforward to prove. Moreover, both formulas are somewhat heuristic: for \eqref{KuboFormula} we assumed applicability of time-dependent perturbation theory also for long adiabatic time-scales, while for \eqref{LinearResponseNew} we just postulated that the perturbed system is in the state $\Pi^\epsi$.

In order to reconcile and justify both approaches, one needs to prove that in the adiabatic regime the dynamical  switching 
drives the initial equilibrium state $\Pi_0$ approximately into the NEASS $\Pi^\epsi$, \ie that
 the state $\rho_{\epsi, \eta}(t )$ is close to $\Pi^\epsi$. 
 Indeed, it is shown in \cite{MarcelliTeufel19} that for times $t \geq 0$ and any $n,m\in \N^*$ 
 \begin{equation} \label{Intro:approx}
\sup_{\eta\in I_{m,\epsi}} \left| \tau( A \, \rho_{\eps, \eta}(t)) - \tau( A \, \Pi^\eps) \right| = \Or(\epsi^n)   
\qquad\qquad (\text{for } t  \geq 0)
\end{equation}
uniformly on bounded intervals in (macroscopic) time. Here $I_{m,\epsi} = [\epsi^m ,\epsi^{1/m}]$ 
is an intervall of admissible time-scales for the switching. Too slow switching ($\eta\ll \epsi^m$ for all $m\in \N^{*}$) 
must be excluded, because due to tunneling the NEASS decays on such long times-scales, while too fast switching ($1\gg \eta\gg \epsi^{1/m}$ for all $m\in \N^*$) would merely yield an error $o(1)$ on the right hand side of \eqref{Intro:approx}.
 
In other words, the initial equilibrium state $\Pi_0$ dynamically evolves into the {NEASS} independently of the shape of the switching function up to lower order errors. A proof of a similar statement in the context of interacting models on lattices is provided in \cite{Teufel19} and the issue of justifying linear response and Kubo's formula is briefly reviewed in \cite{Teufel20}.

\subsection{Main results on spin transport and conductivity}
\label{Sec:Main results} 

By using the NEASS paradigm, we will answer the questions ($Q_1$) and ($Q_2$) stated before,
at least in the periodic setting. Let us shortly summarize the main results in the paper. 

We consider a crystalline system of non-interacting fermions, whose one-body Hamiltonian $H_0$ is periodic. 
This operator acts on the Hilbert space $\Hi = L^2(\X) \otimes \C^N$, where either $\X=\R^d$ (continuum case) or  $\X\subset\R^d$ is a discrete set  (discrete case), and $N$ is the number of internal degrees of freedom of the particle, which may include spin; periodicity of $H_0$ is understood with respect to (magnetic) translations along vectors in a Bravais lattice $\Gamma \simeq \Z^d $. We assume that the Hamiltonian $H_0$ has a spectral gap, and that the initial state of the system is given by the spectral projection $\Pi_0$ on the bands below this gap (Fermi projector). The system is  driven out of equilibrium by applying a constant electric field of intensity $\eps \ll 1$ pointing in the $j$-th direction, $j \in \set{1,\ldots,d}$. Hence, the stationary Hamiltonian of the perturbed system is $H^\eps = H_0 - \eps X_j$, where $X_j$ is the $j$-th component of the position operator.


We consider a generalized spin operator in the form $S = \Id_{L^2(\X)} \otimes s$ and -- 
denoting by $J\subm{conv}{i}^S:=\half \( \iu [H_0, X_i] \, S + \iu  S \, [H_0, X_i] \)$ and $J\subm{prop}{i}^S = \iu [H_0, X_i S]$  the corresponding \emph{conventional} and \emph{proper} $S$-current operator -- 
we define the \emph{conventional} and \emph{proper} $S$-conductivity, respectively, as
\footnote{Notice that the $j$-dependence of $\sigma\subm{conv/prop}{ij}^S$ is hidden on the right-hand side of the following definition in $\Pi_1$,  see its definition in Proposition~\ref{prop:Pi1}\ref{item:propSA3}.} 
\begin{align}
\label{eqn:defn conv/prop sigma}
\re\tau(J\subm{conv/prop}{i}^S\,\Pi^\eps)-  \re \tau(J\subm{conv/prop}{i}^S\,\Pi_0)
=:\eps \, \sigma\subm{conv/prop}{ij}^S+ o(\eps).
\end{align}
 In view of the controversy on the choice of the spin current operator discussed at the beginning of the Introduction, 
we find convenient the decomposition
\begin{equation} 
\label{eqn:splitting JpropPi1}
J\subm{prop}{i}^S\Pi_1= \iu [H_0, X_i S] \Pi_1  
=  {\iu [H_0, X_i] S} \Pi_1 
+  X_i \big( \iu [H_0, S]  \big) \Pi_1  =\mathsf{O}+X_i\mathsf{R} 
\end{equation}
where we have defined the operators 
\begin{equation}
\label{eqn:defn T and R}
\mathsf{O}:=\iu  [H_0,X_i] S\Pi_1\quad\text{ and }\quad\mathsf{R}:=\iu  [H_0, S ]  \Pi_1. 
\end{equation}
In this decomposition, the \emph{$S$-orbital} term $\mathsf{O}$ contains the contribution associated with the conventional $S$-current operator, while the \emph{$S$-rotation} terms $X_i\mathsf{R}$ contains corrections related to the replacement of the latter with the proper $S$-current operator. More precisely, we prove in Theorem~\ref{thm:notKubo} that splitting \eqref{eqn:splitting JpropPi1} leads to
\begin{equation}\label{sigma1+sigma2}   
\sigma\subm{prop}{ij}^S=\sigma\subm{conv}{ij}^S+\sigma\subm{rot}{ij}^S,
\end{equation}
where 
\begin{equation}
\label{Intro:sigma_orbital}
\begin{aligned}
\sigma\subm{conv}{ij}^S  & =  \re \, \tau \Big( \iu \, \Pi_0  \Big[ [X_i,\Pi_0] S,  [X_j,\Pi_0] \Big] \Big) \\
& + \re \, \tau \Big(\iu \, [H_0,X_i\su{D}] S\su{OD} \Pi_1 + \iu \, X_i\su{OD} [S,H_0]\Pi_1 \Big),
\end{aligned}
\end{equation}
with $A\su{D}$ (resp.\ $A\su{OD}$) referring to the diagonal (resp.\ off-diagonal) part of the operator $A$ with respect to the orthogonal decomposition induced by $\Pi_0$, and the \emph{rotation $S$-conductivity} is
\begin{equation}
\label{Intro:sigma_rot}
\begin{aligned}
\sigma\subm{rot}{ij}^S  & = \re \, \tau(X_i \mathsf{R}) = \re \, \tau\Big(\iu X_i  [H_0, S ]  \Pi_1\Big).
\end{aligned}
\end{equation}		
Notice that the first line of \eqref{Intro:sigma_orbital} is in the form of a current-current correlation at the equilibrium, involving the conventional $S$-current and the charge current, while the second line involves $\Pi_1$.
Moreover, the trace per unit volume in \eqref{Intro:sigma_orbital} and \eqref{Intro:sigma_rot} can be replaced with the 
ordinary trace of the operator restricted to the fundamental cell, up to a volume factor, as in the statement of Theorem \ref{thm:notKubo}, even if the operator appearing in \eqref{Intro:sigma_rot} is not periodic.
					
In order to analyze the {$S$-rotation} contribution $\sigma\subm{rot}{ij}^S$, we preliminary prove
in Proposition~\ref{prop:Btorque}\ref{item:doublecomm formula} that for any bounded periodic observable $B$, satisfying suitable regularity properties, the expectation of the 
\emph{$B$-torque operator} $\iu [H_0, B]$ on $\Pi_1$ is given by a double commutator formula, namely
\begin{equation*} 
\tau  \big( \iu [H_0, B] \, \Pi_1 \big)  = \tau \big(  \underbrace{\iu \Pi_0 \, \big[ [ \Pi_0, B] , [\Pi_0, X_2] \big]}_{\mathcal{T}_B} \big),
\end{equation*}
where the operator ${\mathcal{T}_B}$ may be dubbed \emph{$B$-torque response} in agreement with \cite{MarcelliPanatiTauber18}.  If, in addition, $[B,X_j]=0$ then $\tau  \big( \iu [H_0, B] \, \Pi_1 \big)=0$, as stated in Proposition~\ref{prop:Btorque}\ref{item: vanishing of doublecomm formula if A per}. 
Physically, this result means that even if $\iu [H_0, B]\neq 0$, the fact that $B$ commutes with the perturbation $-\eps X_j$ implies the \emph{mesoscopic} conservation of the observable $B$, at least within first order approximation in the NEASS. In particular, when $B=S_z$ (or for any generalized spin operator $S$, see Corollary~\ref{cor:tau R=0}), we have that the expectation of the spin-torque on $\Pi_1$ equals the expectation of the \emph{spin-torque response} $\mathcal{T}_{S_z}$ and that the latter vanishes, in agreement with \cite[Theorem 2.8]{MarcelliPanatiTauber18}. Notice that the vanishing of the expectation of the spin-torque response is a condition singled-out in \cite{MarcelliPanatiTauber18} to obtain the equality of spin conductivity and spin conductance in $2$-dimensional systems. On the other hand, since $X_j$ (resp.\ $X_j S$) is unbounded, Proposition \ref{prop:Btorque} does not  provide any information on charge (resp.\ spin) conductivity, whose analysis requires an additional technical effort. \newline


As a further step, we consider the Unit Cell Consistency (UCC) of both the contributions to the proper $S$-conductivity 
appearing in \eqref{sigma1+sigma2}. We prove in Proposition~\ref{prop:UCC} that $\sigma\sub{conv}^S$ always satisfies UCC, 
while for the additional contribution  $\sigma^S\sub{rot}$
we can prove UCC only if the model enjoys a discrete rotational symmetry, 
in agreement with the claim in \cite{ShiZhangXiaoNiu06} that the use of $\ve{J}\sub{prop}^{S_z}$
is \virg{{\it possible for systems where the spin generation in the bulk is absent due to symmetry reasons.}}
In Proposition \ref{prop:Srotsigmazeroundersym} we isolate a subclass of discrete models, enjoying a discrete rotational symmetry and a further property, such that   $\sigma\sub{conv}^S = \sigma\sub{prop}^S$. 
Remarkably, the paradigmatic model proposed by Kane and Mele is in this class.
A crucial consequence is that, for this class of models, 
the choice of the spin current operator (either $\ve J^S\sub{conv}$ or $\ve J^S\sub{prop}$)
is immaterial as far as the $S$-conductivity is concerned.


While the paper is focused on transport theory, 
one of our long-term goals is to clarify the relation between the spin transport coefficients 
and the topological invariants associated to Quantum Spin Hall (QSH) insulators. 
These materials, theoretically predicted in \cite{KaneMele2005a, KaneMele2005b} and soon experimentally realized 
 \cite{Koenig_et_al08,SinovaValenzuelaWunderlichBackJungwirth15}, 
display dissipationless edge spin currents, which are robust against continuous deformations of the model
and disorder \cite{ShengShengTingHaldane2005}. 
A crucial issue, both for fundamental understanding and for potential applications,  
is whether there exists a bulk topological invariant \virg{protecting} the QSH effect. 
Two candidates have been extensively investigated in the literature.
First, the $\Z_2$-valued index proposed by Fu, Kane and Mele \cite{KaneMele2005b,FuKane2006}, 
whose definition and geometric properties rely on the fermionic time-reversal symmetry of the system
\cite{GrafPorta2013, FiorenzaMonacoPanati2016b, CorneanMonacoTeufel2017}.
Second, the (half-)integer-valued \emph{spin-Chern number}, introduced in \cite{ShengWengShengHaldane2006} 
via spin dependent boundary conditions, and later intrinsically redefined by Prodan as a bulk invariant 
\cite{Prodan2009}, which relies instead on the almost-conservation of spin, and is associated 
to robust spin edge currents \cite{Schulz-Baldes2013, Prodan2008c, Schulz-Baldes15, KatsuraKoma16}.  \newline
Our analysis establishes a direct relation between the bulk spin conductivity and 
the spin-Chern number, in agreement with the (recent) discovery that QSH plateaux may persist under broken time-reversal symmetry \cite{DuKnezSullivanDu2016}. 
Indeed, whenever spin is conserved, our results yield that the (bulk) spin conductivity equals the 
spin-Chern number (Remark \ref{rem:Spin chern number}). Moreover, the result is robust: if spin is approximately conserved, 
with errors of order $\Or(\lambda)$, then the mentioned equality holds true up to a correction of order 
$\Or(\lambda)$ (Proposition \ref{prop:robust spin chern}), in analogy with the persistence of 
\emph{edge} spin currents proved in \cite{Schulz-Baldes2013}. 

\bigskip

  
In summary, our paper contributes to put spin transport theory on a firm mathematical ground: 
We derive a new formula for the spin conductivity which covers both the choice of 
the conventional and the proper spin current operator; we isolate conditions under 
which UCC is satisfied and additional conditions which guarantee that $\sigma\sub{conv}^S = \sigma\sub{prop}^S$;
we make connection with the spin-Chern number. 
We hope that our mathematical investigations will contribute to clarify some of the controversies 
in the emerging and promising field of spintronics, and will stimulate a fruitful exchange of ideas 
between mathematicians and solid state physicists. 
While, for technical reasons, this paper focuses on the case of periodic non-interacting systems, we are confident that our approach can be suitably generalized to random and interacting systems.

\subsection{Further reference to the literature}
\label{Sec:Literature}

We conclude this introduction with some comments on the existing literature. 
The mathematical literature on the quantum Hall effect and on the justification of linear response theory in the context of quantum transport of charge is by now prodigious, and we will not embark in the task of giving a full account of it here. Indeed, several different mathematical problems have been labeled ``proving Kubo's formula''
and a short review highlighting the differences will appear elsewhere \cite{Teufel20}.
Here we only mention a few works without going into any detail: 
 A similar approach to the one we use was employed in \cite{StiepanTeufel13}, where  Kubo's formula for the Hall conductivity of simple isolated bands is derived using semi-classical methods.
The rigorous derivation of Kubo's formula for interacting fermionic systems on the lattice has  recently been done in \cite{BachmannDeRoeckFraas17,MonacoTeufel17, Teufel19}, where \cite{BachmannDeRoeckFraas17,MonacoTeufel17} consider only situations where the perturbation does not close the spectral gap. 
A similar result for non-interacting fermions in the continuum is in preparation \cite{MarcelliTeufel19}, 
generalizing a previous result \cite{ElgartSchlein04} which also assumes a non-closing-gap condition.  
In many other works  Kubo's formula for the Hall conductivity is taken as a starting point and the objective is
to prove quantization of the Hall {\it plateaux}  also in presence of disorder, assuming the Fermi energy lies in a mobility gap \cite{Bellissard94,AizenmanGraf98,BoucletGerminetKleinSchenker05}, or including interaction effects \cite{HastingsMichalakis15,GiulianiMastropietroPorta17, Bachmann_et_al17}, with the aim of proving \emph{universality} of the Hall conductivity. Moreover, the linear response to a quenched perturbation has been recently analyzed in 
\cite{CFLSD20}. 
Finally, in 
\cite{BrudeSiqueiraPedra16,DeNittisLein16} (and references therein) mathematical frameworks are developed, within which the  applicability of linear response theory in very general random resp.\ interacting systems can be established. 
However, a rigorous justification of   Kubo's formula for the quantum Hall conductivity in situations with mobility gap   is still a completely open problem, even in the case of non-interacting systems on the lattice. 

Linear response theory can also be considered in the case of heat or charge fluxes induced by thermodynamical (\ie non-mechanical) driving forces, such as deviations of temperature or chemical potential from their equilibrium values.  In this context, the validity of the \emph{Green-Kubo formula} has been extensively investigated in algebraic quantum statistical mechanics, by relating it to the structure of non-equilibrium steady states \cite{JOP1, JOP2, JOP3, JOP4}. 

The field of spintronics and of quantum transport of spin is relatively new, but has already attracted a lot of attention both in the physics and mathematics communities. Results concerning the quantization and robustness of spin Hall currents in the presence of disorder \cite{Prodan2009, Schulz-Baldes2013} and of interactions \cite{AntinucciMastropietroPorta17, MastropietroPorta17} also rely, to some extent, on a Kubo-like formula. We foresee the possibility of adapting the techniques developed in \cite{MonacoTeufel17, Teufel19} to derive such formulas from first principles also in the context of interacting fermions on a lattice.

A further development of the present line of research consists in pushing the validity of Kubo-like formulas for adiabatic $S$-currents to arbitrarily high orders in the adiabatic parameter $\eps$. This has been achieved in \cite{KleinSeiler90} for quantum Hall systems.

\medskip

\noindent{\bf Acknowledgements.} We are very grateful to Domenico Monaco for intensive discussions and 
exchange of ideas in an early stage of our project. We are thankful to Luca Fresta, Vojkan Jak\v{s}i\'c, Marcello Porta, and Cl\'{e}ment Tauber for intensive and fruitful discussions. S.~T. also benefited from useful and stimulating interactions with Giuseppe De Nittis and Max Lein. Financial support from the German Science Foundation within the Research Training Group 1838 on ``Spectral theory and dynamics of quantum systems''  is gratefully acknowledged.


\newpage

\section{Periodic operators and trace per unit volume} 
\label{sec:TPUV}

In condensed matter physics it is customary to describe  crystalline solids by means of periodic Hamiltonian operators. The appropriate trace-like functional used to compute thermodynamic expectations of periodic observables is given by the trace per unit volume. This Section is devoted to recall some generalities about this framework.

Let $\X$ denote the configuration space of a $d$-dimensional crystal. We will treat both \emph{continuum models}, in which $\X=\R^d$ equipped with the Lebesgue measure, and \emph{discrete models}, in which $\X\subset\R^d$ is a discrete set of points arranged in a crystalline structure, equipped with the counting measure (in $d=2$ think of the square lattice $\Z^2$ or of the honeycomb structure, for example). In general ``crystalline structure'' means that we assume the existence of a Bravais lattice
\begin{equation}
\label{eqn:Gamma}
\Gamma = \Span_\Z \set{a_1, \ldots, a_d} \simeq \Z^d
\end{equation}
 that acts on $\X$ by translations, \ie $\mathrm{T}_\gamma x:= x+\gamma$ for $\gamma\in\Gamma$ defines a group action $\mathrm{T}:\Gamma\times\X\to\X$.

We consider the one-particle Hilbert space 
\[
\Hi = L^2(\X) \otimes \C^N \simeq L^2(\X ,\C^N )
\]
for a particle moving on $\X$ and having  $N$   internal degrees of freedom (\eg spin). 
In the following we will write elements of $\Hi$ as $\C^N$-valued functions on $\X$.
We assume that there is a unitary representation $T$ of $\Gamma$  on $\Hi$ by  
\emph{(magnetic) translation operators}
\begin{equation} \label{eq:defn transl}
 (T_\gamma \psi)(x )   :=  M(\gamma,x) \psi(x-\gamma ), \quad\mbox{for all $\gamma\in\Gamma$ and $\psi\in\Hi$,}
\end{equation}
where $M: \Gamma\times \X \to \U(\C^N)$ are unitaries satisfying the cocycle condition\footnote{The case of \emph{magnetic translations} \cite{Zak64} is included in this framework and thus the Bloch--Landau Hamiltonian can be considered in our setting, assuming a rationality condition on the magnetic flux per unit cell.}
\[
M(\gamma_1+\gamma_2,x) = M(\gamma_2,x) M(\gamma_1,x-\gamma_2) \quad\mbox{for all $\gamma_1,\gamma_2\in\Gamma$ and $x\in\X$.}
\]
%
\emph{Position operators} for $j \in  \: \set{1, \ldots, d}$ are defined via
\begin{equation}
\label{eqn:def position op}
 (X_j \psi)(x ) := x_j \psi(x ), \quad\mbox{for all $\psi\in \Do(X_j)$.}
\end{equation}
An operator $A$ on $\Hi$ is called \emph{periodic} or, more specifically, \emph{$\Gamma$-periodic} if $[A, T_\gamma] = 0$ for all $\gamma \in \Gamma$. The following simple observation is very useful.

\begin{lemma} \label{lemma:derivata}
For any   periodic operator $A$, the operator $[A, X_j]$ is also  periodic.
\end{lemma}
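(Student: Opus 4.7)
The plan is to compute the commutator $[T_\gamma, X_j]$ explicitly and show it is a scalar multiple of $T_\gamma$; then the periodicity of $[A,X_j]$ follows from a Jacobi-identity argument combined with the periodicity of $A$.

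First I would verify directly from the definitions \eqref{eq:defn transl} and \eqref{eqn:def position op} that, at least on a suitable dense domain (\eg compactly supported smooth $\C^N$-valued functions), one has
\begin{equation*}
(T_\gamma X_j \psi)(x) = M(\gamma,x)(x_j - \gamma_j)\psi(x-\gamma), \qquad (X_j T_\gamma \psi)(x) = x_j M(\gamma,x)\psi(x-\gamma),
\end{equation*}
so that
\begin{equation*}
[X_j, T_\gamma]\psi(x) = \gamma_j M(\gamma,x)\psi(x-\gamma) = \gamma_j (T_\gamma \psi)(x).
\end{equation*}
Hence $[X_j, T_\gamma] = \gamma_j\, T_\gamma$ as operators on this domain, a clean identity that is the heart of the argument.

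Next I would use the Jacobi identity
\begin{equation*}
[[A, X_j], T_\gamma] = [A, [X_j, T_\gamma]] + [[A, T_\gamma], X_j].
\end{equation*}
The second summand vanishes because $A$ is $\Gamma$-periodic, \ie $[A, T_\gamma] = 0$. The first summand equals $\gamma_j [A, T_\gamma] = 0$ by the identity just computed together with periodicity of $A$. This yields $[[A, X_j], T_\gamma] = 0$ for all $\gamma \in \Gamma$, which is exactly the claim.

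The main obstacle is not algebraic but rather a domain issue: $X_j$ is unbounded, so the manipulations above are a priori formal and must be justified on a common invariant dense domain. I would handle this by working on the dense set of compactly supported (smooth, in the continuum case) $\C^N$-valued functions, noting that this domain is invariant under both $T_\gamma$ (since $T_\gamma$ acts by a unitary multiplication composed with a translation by a lattice vector) and $X_j$; the identities then extend to the natural operator domains by standard arguments. Modulo these technicalities, the whole proof amounts to the two-line computation above.
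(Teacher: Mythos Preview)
Your proposal is correct and follows essentially the same approach as the paper: both compute $[X_j,T_\gamma]=\gamma_j T_\gamma$ directly from the definitions and then conclude via the Jacobi identity using $[A,T_\gamma]=0$. Your version is slightly more explicit about domain considerations, which the paper relegates to a remark preceding the lemma.
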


\noindent

Notice that, in general, the operator $[A, X_j]$ might be non-densely defined or even defined on the trivial subspace $\set{0}$, as pointed out in \cite[III-\S 5.1]{Kato66}. This pathology will not appear for the specific operators we will consider in the following Sections.

\begin{proof}
Noticing that 
\begin{eqnarray*}
(X_j T_\gamma   \psi)(x) &=& x_j M(\gamma,x) \psi(x-\gamma ) = M(\gamma,x)(x-\gamma)_j \psi(x-\gamma)  + \gamma_j M(\gamma,x) \psi(x-\gamma ) \\ &=&
(T_\gamma X_j\psi)(x) + \gamma_j (T_\gamma \psi)(x)\,,
\end{eqnarray*}
we get
$[X_j, T_\gamma] = \gamma_j T_\gamma$. Thus,  by the Jacobi identity,
\[ [[A,X_j],T_\gamma] = - [[X_j, T_\gamma], A] - [[T_\gamma,A],X_j] = -\gamma_j [T_\gamma,A] - [[T_\gamma,A], X_j]\,, \]
which vanishes since $[T_\gamma,A]=0$ by assumption.
\end{proof}

The analysis of periodic operators is best performed in the so-called (magnetic)  
Bloch--Floquet--Zak representation (see \eg 
{\cite{Kuchment16, MonacoPanatiPisanteTeufel16, PanatiPisante13}} and references therein). 
The {\emph{(magnetic) Bloch--Floquet--Zak transform}} is initially defined
on compactly supported functions $\psi\in C_0 (\X,\C^N)\subset L^2(\X,\C^N)$ as
\begin{equation}
\label{eqn:defn UZ}
(\UZ \psi)(k,y) :=\eu^{-\iu k \cdot y} \sum_{\gamma \in \Gamma} \eu^{\iu k \cdot \gamma} (T_\gamma \psi)(y)\qquad k\in\R^d,\, y\in \X.
\end{equation}
By construction, for fixed $k\in \R^d$, the function $(\UZ \psi)(k,\cdot)$ is periodic with respect to the magnetic translations \eqref{eq:defn transl}, hence it defines an element in the Hilbert space 
$$
\Hf := \{\varphi \in L^2\sub{loc}(\X,\C^N)\,|\,  T_\gamma \varphi=\varphi \mbox{ for all $\gamma\in\Gamma$} \} \quad\mbox{ with } \quad \|\varphi\|_{\Hf}^2 := \int_{\FC_1}\D y\, |\varphi(y)|^2,    
$$
where the norm refers to a fundamental cell  $\FC_1$ for $\Gamma$ (see \eqref{eqn:defn FCL}). 
As functions of $k$, elements in the range of $\UZ$ are not periodic with respect to the reciprocal lattice  $\Gamma^*$, but rather $\varrho$-equivariant, namely
\[
(\UZ \psi)(k+\gamma^*, y) = \varrho(\gamma^*) (\UZ \psi)(k , y)   \text{ for all } \gamma^* \in \Gamma^*,
\]
where
\footnote{We denote by $\U(\Hf)$ the group of the unitary operators on $\Hf$.
} 
\begin{equation}
\label{eqn:rho representation}
\varrho\colon \Gamma^*\to \U(\Hf), \qquad(\varrho(\gamma^*)\varphi)(y):=\E^{-\I \gamma^*\cdot y} \varphi(y),
\end{equation}
defines a unitary representation of $\Gamma^*$ 
on $\Hf$.  The map defined by \eqref{eqn:defn UZ} extends to a unitary operator
\[ 
\UZ \colon \Hi \to \Hi_\varrho, 
\]
where $\Hi_\varrho \equiv L^2_{\varrho}(\R^d,\Hf)$ is the space of locally-$L^2$, $\Hf$-valued, $\varrho$-equivariant functions on $\R^d$. Denoting by $\mathbb{B}^d$ a fundamental domain for $\Gamma^*$, the inverse transformation 
$\UZ^{-1} \colon \Hi_\varrho\to\Hi$, 
sometimes dubbed {\it Wannier transform}, 
is explicitly given by
$$
(\UZ^{-1} \varphi)(x) =\frac{1}{\abs{\mathbb{B}^d}}\int_{\mathbb{B}^d}\di k\,\E^{\iu k\cdot x}\varphi(k,x).
$$
At least formally, a periodic operator $A$ on $\Hi$ becomes a \emph{covariant fibered operator} on $\Hi_\varrho$.
More precisely, taking into account the following inclusion and natural isomorphism 
\begin{equation} \label{eq:inclusion}
L^2_{\varrho}(\R^d,\Hf) \subset L^2(\R^d,\Hf) \simeq \int_{\R^d}^{\oplus} \di k\,\Hf ,
\end{equation}
one has
$$
\UZ \, A \, \UZ^{-1} = \int_{\R^d}^{\oplus}\di k\, A(k),
$$
where each $A(k)$ acts on $\Hf$ 
and satisfies the covariance property $A(k+\gamma^*) = 
\varrho(\gamma^*) \, A(k) \, \varrho(\gamma^*)^{-1}
$ for all $k\in\R^d$ and $\gamma^*\in\Gamma^*$.

Most relevant extensive observables in crystalline systems are  periodic self-adjoint operators.  However, 
in an infinite system neither these periodic extensive observables nor translation invariant states are trace class. The appropriate functional is instead given by the \emph{trace per unit volume} $\tau$, which is well suited to take into account invariance or covariance by discrete lattice translations in the setting of periodic or more generally ergodic operators {\cite{Bellissard86, PasturFigotin92, BoucletGerminetKleinSchenker05, AizenmanWarzel15}}. The trace per unit volume is defined as follows (compare \cite[Prop.~3.20]{BoucletGerminetKleinSchenker05}).
Denote by $\chi_\Omega$ the orthogonal projection on $\Hi$ which multiplies by the characteristic function of $\Omega \subset \X$.
For any $L\in 2\N+1$, we set 
\begin{equation}
\label{eqn:defn FCL}
\FC_L:=\left\lbrace x\in \X : x = \sum_{j=1}^d \alpha_j \, a_j \text{ with }  |\alpha_j|\leq L/2 \; \forall\: j \in \set{1,\ldots,d}\right\rbrace
\end{equation}
and $\chi_L:=\chi_{\FC_L}$. The set $\FC_1$ is called a \emph{fundamental or primitive (unit) cell}. It is not unique since the choice of the spanning vectors $\{a_j\}_{1\leq j\leq d}$ for $\Gamma$ (see \eqref{eqn:Gamma}) is not unique. Notice that, restricting to odd integers $L\in 2\N+1$, one has the convenient decomposition
\footnote{\label{fn:disjun}The symbol $\bigsqcup$ denotes the disjoint union up to zero-measure sets.} 
\begin{equation}
\label{eqn:QL is L^d Q1}
\FC_{L}= \bigsqcup_{\gamma\in \Gamma \cap \FC_L} \mathrm{T}_\gamma\FC_1.
\end{equation}

We call an operator $A$ acting in $\Hi$ \emph{trace class on compact sets} if  $\chi_K A \chi_K$ is trace class for all compact sets $K \subset \X$
\footnote{This condition is automatically satisfied in the discrete case for any operator $A$, since the range of $\chi_K$ is finite-dimensional.}. 

\begin{definition}[Trace per unit volume]
\label{defn:tuv}
Let $A$ be an operator acting in $\Hi$ such that $ A $ is trace class on compact sets. The \emph{trace per unit volume} of $A$ is defined as
\begin{equation}
\label{eqn:defn tau}
\tau(A):=\lim_{\substack{L\to\infty\\L\in 2\N+1}}\frac{1}{\abs{\FC_L}}\Tr(\chi_L A \chi_L),
\end{equation}
whenever the limit exists.
\end{definition}

Let us denote 
\begin{gather*}
\B_\infty^\tau := \set{\text{bounded \emph{periodic} operators on } \Hi}, \\
\B_1^\tau := \set{\text{$A\in \B_\infty^\tau$ such that } \norm{A}_{1,\tau}:=\tau(\abs{A}) < \infty}.
\end{gather*}
We will refer to operators in $\B_1^\tau$ as the operators of \emph{trace-per-unit-volume class} or \emph{$\tau$-class} for simplicity. 
Moreover, in view of \cite[Proposition~3.17]{BoucletGerminetKleinSchenker05} we have $\B_\infty^\tau \,\cdot\, \B_1^\tau \subset \B_1^\tau$ and  $\B_1^\tau \,\cdot\, \B_\infty^\tau \subset \B_1^\tau$, and 
\begin{equation}
\label{eqn:Binftytau B1tau Binftytau is B1tau}
\norm{AB}_{1,\tau}\leq\norm{A}_{1,\tau}\norm{B}\text{ and }\norm{BA}_{1,\tau}\leq\norm{A}_{1,\tau}\norm{B}\quad \forall \,A\in \B_1^\tau,\: B\in  \B_\infty^\tau\,.
\end{equation}

The following Lemma recalls some useful properties of $\tau$-class operators.
	
\begin{lemma}
\label{lem:tau cont funct1andL}
Let $A\in \B_1^\tau$. Then 
\begin{equation}
\label{eqn:tau cont funct}
\Tr(\abs{\chi_1 A \chi_1})\leq\norm{A}_{1,\tau}
\end{equation}
and
\begin{equation}
\label{eqn:tau cont functL}
\Tr(\abs{\chi_L A \chi_L})<\infty\quad\forall\, L\in 2\N+1.
\end{equation}
In particular, we have that $A$ is trace class on compact sets.
\end{lemma}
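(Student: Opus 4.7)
The plan is to use the polar decomposition of $A$ together with a translation-averaging argument for the positive periodic operators $|A|$ and $|A^*|$ in order to exhibit $\chi_L A \chi_L$ as a product of two Hilbert--Schmidt operators.

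First I would compute, for any positive periodic operator $B$ with $\tau(B)<\infty$, the ordinary trace $\Tr(\chi_L B \chi_L)$. Using the disjoint covering $\FC_L = \bigsqcup_{\gamma\in\Gamma\cap\FC_L} T_\gamma\FC_1$ from \eqref{eqn:QL is L^d Q1}, the identity $T_\gamma\chi_1 T_\gamma^{-1} = \chi_{\gamma+\FC_1}$, and $[B, T_\gamma]=0$, one immediately obtains
\[
\Tr(\chi_L B \chi_L) \;=\; \sum_{\gamma\in\Gamma\cap\FC_L} \Tr(\chi_{\gamma+\FC_1} B \chi_{\gamma+\FC_1}) \;=\; L^d\, \Tr(\chi_1 B \chi_1).
\]
Dividing by $|\FC_L| = L^d|\FC_1|$ and passing to the limit (Definition~\ref{defn:tuv}) gives $\Tr(\chi_1 B \chi_1) = |\FC_1|\,\tau(B)$. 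Specialising to $B=|A|$---which is periodic because $A$ is, via continuous functional calculus on $A^*A$---I conclude that $|A|^{1/2}\chi_L$ is Hilbert--Schmidt for every $L\in 2\N+1$.

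Next I would write the polar decomposition $A = U|A|$ (with $U$ a partial isometry) and factor
\[
\chi_L A \chi_L \;=\; \bigl(\chi_L U |A|^{1/2}\bigr)\cdot \bigl(|A|^{1/2} \chi_L\bigr).
\]
The second factor is Hilbert--Schmidt by the previous step. For the first one, the identity $U|A|U^* = |A^*|$ rewrites its Hilbert--Schmidt norm squared as $\Tr(\chi_L|A^*|\chi_L)$, and applying the same translation-averaging argument to the periodic positive operator $|A^*|$ reduces the question to whether $\tau(|A^*|)<\infty$. Granted this, the Hölder-type inequality $\|XY\|_1 \le \|X\|_{\rm HS}\|Y\|_{\rm HS}$ implies that $\chi_L A \chi_L$ is trace class, yielding \eqref{eqn:tau cont functL}; in particular $A$ is trace class on compact sets, since any compact $K\subset\X$ sits inside some $\FC_L$ and then $\chi_K A \chi_K = \chi_K\chi_L A\chi_L\chi_K$. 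Specialising to $L=1$ and bounding the trace norm by the product of the two Hilbert--Schmidt norms gives \eqref{eqn:tau cont funct}.

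The hard part will be the equality $\tau(|A|)=\tau(|A^*|)$, which is exactly what makes the first Hilbert--Schmidt norm finite and controls the final bound. I would address it by passing to the (magnetic) Bloch--Floquet--Zak representation, where $A$ becomes a fibered operator $\int^\oplus A(k)\,\D k$, and invoking the pointwise identity $\Tr|A(k)| = \Tr|A(k)^*|$; this is legitimate because $A\in\B_1^\tau$ forces $A(k)$ to be trace class on $\Hf$ for almost every $k$, and integration against the normalised measure on the Brillouin zone then lifts the equality to $\tau$. A small bookkeeping issue to keep in mind is that the argument above actually produces $\|\chi_1 A \chi_1\|_1 \le |\FC_1|\,\|A\|_{1,\tau}$, which matches \eqref{eqn:tau cont funct} under the normalisation $|\FC_1|=1$ and otherwise differs only by the harmless constant $|\FC_1|$.
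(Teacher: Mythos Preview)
Your argument is correct and follows the same mechanism as the paper: polar decomposition $A=U|A|$ together with the Hilbert--Schmidt factorisation $\chi_L A\chi_L=(\chi_L U|A|^{1/2})(|A|^{1/2}\chi_L)$. The paper is terser: it cites \cite[Lemma~3.10]{BoucletGerminetKleinSchenker05} for \eqref{eqn:tau cont funct} and, for \eqref{eqn:tau cont functL}, writes $\chi_L=\sum_{\gamma\in\Gamma\cap\FC_L}T_\gamma\chi_1T_\gamma^*$ on \emph{both} sides and shows each block $T_\gamma\chi_1T_\gamma^*\,U|A|^{1/2}\cdot|A|^{1/2}\,T_\nu\chi_1T_\nu^*$ is trace class. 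That cell-by-cell decomposition versus your direct treatment of $\chi_L$ is purely cosmetic. The one substantive point you make explicit, and the paper leaves buried in the citation, is that the first Hilbert--Schmidt factor has norm squared $\Tr(\chi_1|A^*|\chi_1)$, so one needs $\tau(|A^*|)<\infty$; your Bloch--Floquet--Zak argument for $\tau(|A|)=\tau(|A^*|)$ via the fibrewise identity $\Tr_{\Hf}|A(k)|=\Tr_{\Hf}|A(k)^*|$ is the natural way to close this in the periodic setting. Your remark about the factor $|\FC_1|$ in \eqref{eqn:tau cont funct} is also well taken: your bound gives $\|\chi_1 A\chi_1\|_1\le|\FC_1|\,\|A\|_{1,\tau}$, which is what the cited lemma in \cite{BoucletGerminetKleinSchenker05} actually yields.
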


\begin{proof}
Inequality~\eqref{eqn:tau cont funct} is proved in \cite[Lemma~3.10]{BoucletGerminetKleinSchenker05} and its proof easily generalizes to obtain also \eqref{eqn:tau cont functL}, as follows. Let $A=U\abs{A}$ be the polar decomposition of $A$. Notice that for any $\gamma,\nu\in\Gamma$
\[
T_{\gamma} \chi_1 T^*_{\gamma} \,A \,T_{\nu} \chi_1 T^*_{\nu}=T_{\gamma} \chi_1 T^*_{\gamma}\, U\abs{A}\, T_{\nu} \chi_1 T^*_{\nu}\text{ is trace class},
\]
since $T_{\gamma} \chi_1 T^*_{\gamma}\, U\abs{A}^{1/2}$ and $\abs{A}^{1/2}\, T_{\nu} \chi_1 T^*_{\nu}$ are Hilbert–Schmidt operators. Thus, 
\[
\chi_L \,A\, \chi_L=\sum_{\gamma,\nu\in \Gamma\cap \FC_L}T_{\gamma} \chi_1 T^*_{\gamma} \,A\, T_{\nu} \chi_1 T^*_{\nu}\text{ is trace class.}
\qedhere\]
\end{proof}
	
The next result allows to compute the trace per unit volume of operators which are periodic and trace class on compact sets.

\begin{proposition} 
\label{prop:charge-tauPeriodic}
\begin{enumerate}[label=(\roman*), ref=(\roman*)]
\item \label{item:per+traceclassoncompact} Let $A$ be periodic and trace class on compact sets\footnote{\label{fn:suffcondtraceclasscptsets} The condition that $A$ is trace class on compact sets is satisfied whenever $A$ is in $\B_1^\tau$, as proved in Lemma \ref{lem:tau cont funct1andL}. 
}. Then $\tau(A)$ is well-defined and 
\begin{equation}
\label{eqn:tauPeriodic1}
\tau(A) =\frac{1}{\abs{\FC_1}} \Tr(\chi_1 A \chi_1).
\end{equation}
\item \label{item:per+traceclassfibr} Let $A$ be a periodic and bounded operator acting on $\Hi$.
Denoting by
\[ \UZ \, A \, \UZ^{-1} = \int_{\mathbb{R}^d}^{\oplus}\di k\, A(k)   \]
its Bloch--Floquet--Zak decomposition, assume that $A(k)$ is trace class and that $\Tr_{\Hf}(\abs{A(k)})<C$ for all $k\in \mathbb{B}^d$. Then
\begin{equation}
\label{eqn:tauPeriodic2}
\Tr(\chi_1 A \chi_1) = \frac{1}{|\mathbb{B}^d|} \int_{\mathbb{B}^d}  \di k\,\Tr_{\Hf}(A(k)). 
\end{equation}
\end{enumerate} 
\end{proposition}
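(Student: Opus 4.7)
My plan is to prove the two parts separately, both exploiting the periodic structure to reduce all computations to a single fundamental cell.

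For part \ref{item:per+traceclassoncompact}, the approach would be to show that the defining sequence of the trace per unit volume is in fact independent of $L$ along odd $L\in 2\N+1$. Using the decomposition \eqref{eqn:QL is L^d Q1}, I would write $\chi_L = \sum_{\gamma\in\Gamma\cap\FC_L}\chi_{\FC_1+\gamma}$, and a short calculation (analogous to the one in Lemma \ref{lemma:derivata}) yields $\chi_{\FC_1+\gamma} = T_\gamma \chi_1 T_\gamma^{-1}$; the cocycle factors $M$ cancel against $M^{-1}$ since $\chi_1$ is a scalar multiplication operator. Expanding $\Tr(\chi_L A \chi_L)$ as a double sum over $\gamma,\nu\in\Gamma\cap\FC_L$, I would kill the off-diagonal terms by cyclicity: since $A$ is trace class on compact sets, the operator $\chi_{\FC_1+\gamma}A\chi_{\FC_1+\nu}$ is trace class, so
\begin{equation*}
\Tr(\chi_{\FC_1+\gamma}A\chi_{\FC_1+\nu}) = \Tr(\chi_{\FC_1+\gamma}A\chi_{\FC_1+\nu}\chi_{\FC_1+\gamma}) = 0 \qquad \text{for } \gamma\neq\nu,
\end{equation*}
the last equality holding by essential disjointness of the translates. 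For each diagonal term, periodicity $[A,T_\gamma]=0$ combined with unitarity of $T_\gamma$ reduces $\Tr(\chi_{\FC_1+\gamma}A\chi_{\FC_1+\gamma})$ to $\Tr(\chi_1 A\chi_1)$. Counting $|\Gamma\cap\FC_L|=L^d$ and $|\FC_L|=L^d|\FC_1|$ then gives \eqref{eqn:tauPeriodic1} with a sequence that is constant in $L$, so the limit in \eqref{eqn:defn tau} exists trivially.

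For part \ref{item:per+traceclassfibr}, I would evaluate both sides of \eqref{eqn:tauPeriodic2} simultaneously via an orthonormal basis adapted to the Bloch--Floquet--Zak transform cell by cell. First, pick an orthonormal basis $\{\phi_n\}_{n\in\N}$ of $L^2(\FC_1,\C^N)$ and extend each $\phi_n$ by zero to an element of $\Hi$. Using that $\phi_n$ is supported in $\FC_1$ and that distinct translates of $\FC_1$ are essentially disjoint, a direct computation from \eqref{eqn:defn UZ} shows that the restriction to $\FC_1$ of $(\UZ\phi_n)(k,\cdot)$ equals $V_k\phi_n$, where $V_k$ denotes the multiplication operator by $\E^{-\I k\cdot y}$. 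Since $V_k$ is unitary from $L^2(\FC_1,\C^N)$ onto $\Hf$, the family $\{V_k\phi_n\}_{n\in\N}$ is an orthonormal basis of $\Hf$ for each $k\in\mathbb{B}^d$. Unitarity of $\UZ$ and the fibered decomposition of $A$ then yield, for every $n$,
\begin{equation*}
\langle\phi_n,A\phi_n\rangle_{\Hi} = \frac{1}{|\mathbb{B}^d|}\int_{\mathbb{B}^d}\di k\,\langle V_k\phi_n,A(k)V_k\phi_n\rangle_{\Hf},
\end{equation*}
and summing over $n$ and exchanging sum with integral gives \eqref{eqn:tauPeriodic2}, since the inner sum is $\Tr_{\Hf}(A(k))$ and the outer one is $\Tr(\chi_1 A\chi_1)$ by the ONB property.

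The main obstacle I foresee is the Fubini--Tonelli justification of the sum-integral exchange in part \ref{item:per+traceclassfibr}, which is precisely the place where the uniform fiber bound $\Tr_{\Hf}|A(k)|\leq C$ enters. The plan is to use the polar decomposition $A(k)=U(k)|A(k)|$ together with Cauchy--Schwarz to obtain the pointwise estimate $\sum_n |\langle V_k\phi_n,A(k)V_k\phi_n\rangle_{\Hf}|\leq\Tr_{\Hf}|A(k)|\leq C$, so that the iterated integral of absolute values is bounded by $C|\mathbb{B}^d|$ and Fubini--Tonelli applies; the same factorization also serves to check that $\chi_1 A\chi_1$ is trace class in the first place, by expressing it as a product of two Hilbert--Schmidt operators. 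Part \ref{item:per+traceclassoncompact} is by contrast essentially combinatorial once the cyclicity argument eliminating the off-diagonal cross terms is made rigorous, and this is in fact the only point in the proof where the trace-class-on-compact-sets hypothesis is actually used.
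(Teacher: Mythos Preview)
Your proof is correct and, for part \ref{item:per+traceclassoncompact}, follows essentially the same route as the paper: decompose $\chi_L$ into translated copies of $\chi_1$, reduce each diagonal block to $\Tr(\chi_1 A\chi_1)$ by periodicity and unitarity of $T_\gamma$, and count. For part \ref{item:per+traceclassfibr} the paper simply cites \cite[Lemma~3]{PanatiSparberTeufel09}, whereas you supply a direct argument via an ONB of $L^2(\FC_1,\C^N)$ and Fubini--Tonelli justified by the uniform fiber bound; this is exactly the standard proof behind that citation, so there is no genuine divergence.
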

\begin{proof}
{\it \ref{item:per+traceclassoncompact}} In view of the decomposition \eqref{eqn:QL is L^d Q1} and the hypotheses on $A$, one has 
\begin{align*}
\Tr(\chi_L A \chi_L)=\sum_{\gamma\in \Gamma\cap \FC_L}\Tr(T_{\gamma} \chi_1 T^*_{\gamma} A T_{\gamma} \chi_1 T^*_{\gamma})=\sum_{\gamma\in \Gamma\cap \FC_L}\Tr(\chi_1 A \chi_1).
\end{align*}
Since $\abs{\FC_L}=L^d\abs{\FC_1}=\mathrm{card}(\Gamma\cap \FC_L)\abs{\FC_1}$ for every $L\in 2\N+1$, one obtains
\[
\lim_{\substack{L\to\infty\\L\in 2\N+1}}\frac{1}{\abs{\FC_L}}\Tr(\chi_L A \chi_L)=\frac{1}{\abs{\FC_1}}\Tr(\chi_1 A \chi_1).
\]
{\it \ref{item:per+traceclassfibr}} This is proved \eg in \cite[Lemma~3]{PanatiSparberTeufel09}. 
\end{proof}

In the following result, we introduce a class of operators which are not necessarily in $\B_1^\tau$, but have finite trace per unit volume.

\begin{proposition} 
\label{prop:AXi}
Let $A$ be periodic and trace class on compact sets$^{\ref{fn:suffcondtraceclasscptsets}}$. Then 
\begin{enumerate}[label=(\roman*), ref=(\roman*)]
\item \label{it:AXi exh dep} the operator $X_jA$ for $j \in \set{1,\ldots,d}$ has finite trace per unit volume and 
\begin{equation} \label{tau(XA)}
\tau(X_j A ) = \frac{1}{\abs{\FC_1}}\Tr\left( \chi_1 X_j A \chi_1 \right).
\end{equation}
\item \label{it:AXi exh indep} If, in addition  $\tau(A)=0$, then $\tau(X_j A)$ does not depend on the exhaustion\footnote{\label{fn:exh sym} Notice that this particular choice of the exhaustion $\FC_L\nearrow\X$ is such that $\FC_L \cap \Gamma$ is symmetric with respect to the involution $x \mapsto - x$.} $\FC_L\nearrow\X$ set in Definition~\ref{defn:tuv} and on the choice of the origin, in the sense that 
\[
\tau((X_j +\alpha) A )=\tau(X_j  A )\quad\forall\,\alpha\in\R.
\]
\end{enumerate}
\end{proposition}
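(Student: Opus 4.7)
The plan is to compute $\Tr(\chi_L X_j A \chi_L)$ exactly, for every admissible $L$, by combining the tiling \eqref{eqn:QL is L^d Q1} with two algebraic identities: the commutation $[X_j, T_\gamma] = \gamma_j T_\gamma$ (equivalently, $T_\gamma^* X_j T_\gamma = X_j + \gamma_j$), already established inside the proof of Lemma~\ref{lemma:derivata}, and the periodicity $T_\gamma^* A T_\gamma = A$.

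For part~\ref{it:AXi exh dep}, I would first decompose $\chi_L = \sum_{\gamma \in \Gamma \cap \FC_L} T_\gamma \chi_1 T_\gamma^*$ as a sum of multiplication projections onto pairwise disjoint supports, so that
\[
\Tr(\chi_L X_j A \chi_L) = \sum_{\gamma, \nu \in \Gamma \cap \FC_L} \Tr\bigl(X_j\, T_\gamma \chi_1 T_\gamma^*\, A\, T_\nu \chi_1 T_\nu^*\bigr).
\]
The cross terms ($\gamma \neq \nu$) vanish: after inserting $\chi_K$ on both sides of $A$ (with any compact $K \supset T_\gamma \FC_1 \cup T_\nu \FC_1$) to expose the trace-class operator $\chi_K A \chi_K$, cyclicity of the trace together with the fact that $X_j$ commutes with the multiplication projections brings together the product $(T_\nu \chi_1 T_\nu^*)(T_\gamma \chi_1 T_\gamma^*) = 0$. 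For the diagonal terms ($\gamma = \nu$), periodicity of $A$ gives $T_\gamma \chi_1 T_\gamma^* A T_\gamma \chi_1 T_\gamma^* = T_\gamma (\chi_1 A \chi_1) T_\gamma^*$, and cyclicity combined with $T_\gamma^* X_j T_\gamma = X_j + \gamma_j$ (valid since $\chi_1 A \chi_1$ is trace class by hypothesis and $(X_j + \gamma_j)\chi_1$ is bounded) produces
\[
\Tr\bigl(X_j\, T_\gamma\, \chi_1 A \chi_1\, T_\gamma^*\bigr) = \Tr(X_j \chi_1 A \chi_1) + \gamma_j \Tr(\chi_1 A \chi_1).
\]
Summing over $\gamma \in \Gamma \cap \FC_L$, whose cardinality is $L^d = |\FC_L|/|\FC_1|$, the key symmetry $\sum_{\gamma} \gamma_j = 0$ recorded in the footnote to Definition~\ref{defn:tuv} annihilates the second contribution. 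This yields the exact identity $\Tr(\chi_L X_j A \chi_L) = (|\FC_L|/|\FC_1|)\,\Tr(\chi_1 X_j A \chi_1)$ for every admissible $L$, from which \eqref{tau(XA)} follows by dividing through and passing to the limit.

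For part~\ref{it:AXi exh indep}, both assertions are immediate consequences of the same exact formula. Linearity of $\tau$ on its natural domain, together with part~\ref{it:AXi exh dep} and Proposition~\ref{prop:charge-tauPeriodic}\ref{item:per+traceclassoncompact}, gives $\tau((X_j + \alpha)A) = \tau(X_j A) + \alpha\,\tau(A) = \tau(X_j A)$ under the hypothesis $\tau(A) = 0$. Independence from the exhaustion is obtained by inspecting the identity just derived: the shape of $\FC_L$ enters only through the factor $\sum_\gamma \gamma_j$, which multiplies $\Tr(\chi_1 A \chi_1) = |\FC_1|\,\tau(A)$. When $\tau(A) = 0$ this factor vanishes automatically, so the symmetry of the exhaustion becomes unnecessary for the conclusion to hold, and any admissible (symmetric or not) exhaustion yields the same value of $\tau(X_j A)$.

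The main technical subtlety throughout is the careful handling of the unbounded operator $X_j$ together with the only locally-trace-class operator $A$. This is resolved by systematically localizing both $X_j$ and $A$ via characteristic projections onto compact sets, so that every product appearing in the trace manipulations factors as bounded-times-trace-class and cyclicity is legitimately invoked.
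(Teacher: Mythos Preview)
Your proof is correct and follows essentially the same route as the paper: decompose $\FC_L$ into translated copies of $\FC_1$, use periodicity of $A$ together with $T_\gamma^* X_j T_\gamma = X_j + \gamma_j$ to obtain the identity $\frac{1}{|\FC_L|}\Tr(\chi_L X_j A \chi_L) = \frac{1}{|\FC_1|}\Tr(\chi_1 X_j A \chi_1) + \frac{\tau(A)}{L^d}\sum_\gamma \gamma_j$, and then read off both parts from this formula. The only cosmetic difference is that you treat the off-diagonal terms $\gamma\neq\nu$ explicitly via cyclicity, whereas the paper suppresses them by passing directly to the block decomposition of the trace along the disjoint cells.
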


\noindent As already pointed out, in general it might happen that $X_jA$ is not densely defined, or even that it is trivially defined only on the zero vector of the Hilbert space.

\begin{proof}
{\it \ref{it:AXi exh dep}} Since $\chi_L X_j \chi_L$ is bounded for every $L\in 2\N+1$ and $A$ is trace class on compact sets by hypothesis, we have that $\chi_L X_j  A  \chi_L=\chi_L X_j\chi_L A\chi_L$ is trace class. Therefore, in view of the decomposition \eqref{eqn:QL is L^d Q1}, one has that
\[
\Tr\(\chi_L X_j A \chi_L\)=\sum_{\gamma\in \Gamma\cap \FC_L}\Tr\(T_{\gamma} \chi_1 T^*_{\gamma} X_j A  T_{\gamma} \chi_1 T^*_{\gamma}\)=\sum_{\gamma\in \Gamma\cap \FC_L}\Tr\( \chi_1 T^*_{\gamma} X_j A  T_{\gamma} \chi_1\).
\]
Using that $A$ is periodic, that $[T_{\gamma},X_j]=-\gamma_j T_{\gamma}$, and the result from Proposition~\ref{prop:charge-tauPeriodic}\ref{item:per+traceclassoncompact}, we obtain that
\[
\Tr\( \chi_1 T^*_{\gamma}X_j A  T_{\gamma} \chi_1\)=\Tr\( \chi_1 (X_j +\gamma_j) A \chi_1\)=\Tr\( \chi_1  X_j A  \chi_1\)+\gamma_j\abs{\FC_1}\tau( A).
\]
Consequently, we get that
\begin{equation}
\label{eqn:AXj}
\frac{1}{L^d\abs{\FC_1}}\Tr(\chi_L X_j A \chi_L)=\frac{1}{\abs{\FC_1}}\Tr( \chi_1  X_j A  \chi_1)+\frac{\tau(A)}{L^d} \left(\sum_{\gamma\in \Gamma\cap \FC_L}\gamma_j\right).
\end{equation}
Since both $\gamma$ and $-\gamma$ are in $\Gamma\cap \FC_L$ for all $L\in 2\N+1$, the sum in brackets on the right-hand side of the above vanishes, and the thesis follows immediately.

{\it \ref{it:AXi exh indep}} The statement follows from \eqref{eqn:AXj} and the hypothesis $\tau(A)=0.$ 

\end{proof}

A property which will be fundamental for all the following analysis is the \emph{conditional cyclicity} of the trace per unit volume. We state it in the following Lemma, whose proof can be found in \cite[Lemma 3.22]{BoucletGerminetKleinSchenker05}.

\begin{lemma}[Conditional cyclicity of the trace per unit volume]
\label{prop:cycl of tau}
If $A\in\B_1^\tau$ and $B\in\B_\infty^\tau$, then $\tau(AB)=\tau(BA).$
\end{lemma}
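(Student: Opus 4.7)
The plan is to reduce the cyclicity statement to ordinary trace cyclicity at the level of Bloch--Floquet--Zak fibers. First I would observe that, since $A\in\B_1^\tau$ and $B\in\B_\infty^\tau$, the bound \eqref{eqn:Binftytau B1tau Binftytau is B1tau} ensures $AB,BA\in\B_1^\tau$; by Lemma \ref{lem:tau cont funct1andL} they are both trace class on compact sets, and Proposition \ref{prop:charge-tauPeriodic}\ref{item:per+traceclassoncompact} reduces the problem to proving the equality
\[
\Tr(\chi_1 AB\chi_1)=\Tr(\chi_1 BA\chi_1).
\]

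I would then pass to the Bloch--Floquet--Zak representation, in which periodicity decomposes $A$ and $B$ as direct integrals of fibers $A(k),B(k)$ acting on $\Hf$, and the products $AB$ and $BA$ fiber respectively as $A(k)B(k)$ and $B(k)A(k)$. The hypothesis $A\in\B_1^\tau$ translates into $k\mapsto\Tr_\Hf(|A(k)|)\in L^1(\mathbb{B}^d)$; in particular $A(k)$ is trace class on $\Hf$ for a.e.\ $k$. Since $B(k)$ is essentially bounded by $\|B\|$, the ordinary cyclicity of the trace on $\Hf$ (valid when one factor is trace class and the other is bounded) gives $\Tr_\Hf(A(k)B(k))=\Tr_\Hf(B(k)A(k))$ for a.e.\ $k$, and integrating against $|\mathbb{B}^d|^{-1}\di k$ over the Brillouin zone would deliver the target identity.

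The main technical obstacle is justifying the integral representation
\[
\Tr(\chi_1 T\chi_1)=\tfrac{|\FC_1|}{|\mathbb{B}^d|}\int_{\mathbb{B}^d}\Tr_\Hf(T(k))\,\di k
\]
for a general $T\in\B_1^\tau$: Proposition \ref{prop:charge-tauPeriodic}\ref{item:per+traceclassfibr} as stated requires an essential upper bound on the fiber trace, a condition not implied by $\tau(|T|)<\infty$. I would handle this via truncation, approximating $A$ in the $\|\cdot\|_{1,\tau}$-norm by operators $A_n$ with essentially bounded fiber trace (e.g.\ obtained by cutting the fibers of $A$ off where $\Tr_\Hf(|A(k)|)>n$); dominated convergence applied to the integrable density $\Tr_\Hf(|A(k)|)$ yields $\|A-A_n\|_{1,\tau}\to 0$, and Proposition \ref{prop:charge-tauPeriodic}\ref{item:per+traceclassfibr} now applies to $A_n$, $A_nB$, and $BA_n$. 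The fiberwise cyclicity then yields $\tau(A_nB)=\tau(BA_n)$ for each $n$, and passing to the limit via $\|A_nB-AB\|_{1,\tau},\|BA_n-BA\|_{1,\tau}\to 0$ (from \eqref{eqn:Binftytau B1tau Binftytau is B1tau}) together with the continuity of $\tau$ with respect to $\|\cdot\|_{1,\tau}$ implied by Lemma \ref{lem:tau cont funct1andL} delivers the conclusion.
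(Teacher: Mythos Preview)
The paper does not actually prove this lemma; it simply records that a proof can be found in \cite[Lemma~3.22]{BoucletGerminetKleinSchenker05}. That reference works in the broader ergodic setting (random Schr\"odinger operators), where no Bloch--Floquet decomposition is available, and establishes cyclicity by treating $\tau$ as a normal semifinite trace on the von~Neumann algebra of covariant observables. Your approach is different and more concrete: you exploit the periodic structure to pass to fibers over the Brillouin zone and reduce everything to ordinary trace cyclicity on $\Hf$. This is perfectly legitimate in the periodic case and arguably more transparent here, since it stays entirely within the toolkit already developed in Section~\ref{sec:TPUV}.

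Two technical points deserve a comment. First, your assertion that ``$A\in\B_1^\tau$ translates into $k\mapsto\Tr_{\Hf}(|A(k)|)\in L^1(\mathbb{B}^d)$'' is correct but is not quite covered by Proposition~\ref{prop:charge-tauPeriodic}\ref{item:per+traceclassfibr} as stated, since that proposition assumes an \emph{a priori} uniform bound on the fiber trace. You need the positive-operator version of \eqref{eqn:tauPeriodic2}, namely that for $T\ge 0$ periodic one always has $\Tr(\chi_1 T\chi_1)=|\mathbb{B}^d|^{-1}\int_{\mathbb{B}^d}\Tr_{\Hf}(T(k))\,\di k$ in $[0,\infty]$; this is a standard monotone-convergence argument with an orthonormal basis of $\Hf$, but you should state it. Second, the truncation $A_n(k):=A(k)\,\mathbf{1}_{\{\Tr_{\Hf}(|A(k)|)\le n\}}$ requires that $k\mapsto\Tr_{\Hf}(|A(k)|)$ be measurable, which again follows from standard direct-integral theory but should be acknowledged. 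Once these are in place, your dominated-convergence and continuity argument via Lemma~\ref{lem:tau cont funct1andL} goes through cleanly.
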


\bigskip

The trace per unit volume is defined in \eqref{eqn:defn tau} through a specific choice of the cell $\FC_L$, which in turn depends via \eqref{eqn:defn FCL} on the choice of a particular linear basis $\{a_1,\ldots,a_d\}$ for $\Gamma$. The term \emph{Unit Cell Consistency} refers to the requirement that physically relevant quantities are independent of the latter choice. Precisely, one considers a different linear basis $\set{\tilde{a}_1, \ldots, \tilde{a}_d}$ for $\Gamma$ and the corresponding cell, defined by
\begin{equation}
\label{eqn:defn tildeFCL}
\widetilde{\FC}_L:=\left\lbrace x\in \X : x = \sum_{j=1}^d \alpha_j \, \tilde{a}_j \text{ with }  |\alpha_j|\leq L/2 \; \forall\: j \in \set{1,\ldots,d}\right\rbrace \,,
\end{equation} 
and sets $\widetilde{\chi}_L:=\chi_{\widetilde{\FC}_L}$. Denoting by $\tau(\,\cdot\,)$ and $\widetilde{\tau}(\,\cdot\,)$, respectively, the trace per unit volume induced by the choice of the primitive cells $\FC_1$ and $\widetilde{\FC}_1$, we prove in Proposition~\ref{cor:tauindprcell}\ref{it: tau indep for Aperiodic} that for any periodic operator $A$, which is trace class on compact sets, one has that
\[
\tau(A)=\widetilde{\tau}(A).
\]
When a contribution to the transport coefficient is in the form  $\tau(X_i A)$ for a periodic operator $A$, as in formula \eqref{Intro:sigma_rot}, a more careful analysis is needed, as discussed at the end 
of Section \ref{ssec:linresponsecoeff} and in Appendix \ref{app:Persistent}.


\section{The unperturbed model} \label{sec:H0}

Our goal is to study the linear response of a crystalline system to the application of an external electric field of small intensity. Before considering the perturbed system, we state our assumptions on the unperturbed one.

\begin{assumption} \label{assum:1} We assume the following:  
\begin{enumerate}[label=$(\mathrm{H}_{\arabic*})$,ref=$(\mathrm{H}_{\arabic*})$]
\item \label{item:smooth_H}
the Hamiltonian $H_0$ of the unperturbed system is a self-adjoint periodic operator on $\Hi$, bounded from below, such that in Bloch--Floquet--Zak representation its fibration 
\[
H_0: \R^d\to \mathcal{L}( \Df,\Hf)\,,\quad
k\mapsto H_0(k) \,,
\]
is a smooth equivariant map taking values in the self-adjoint operators with dense domain $\Df\subset\Hf$, such that 
$\varrho(\gamma^*)\colon\Df\to \Df$ for every $\gamma^*\in\Gamma^*$ (compare \eqref{eqn:rho representation}). 
Here $\mathcal{L}( \Df,\Hf)$ denotes the space of bounded linear operators from $\Df$, equipped with the graph norm of  $H_0(0)$ denoted by $\norm{\,\cdot\,}_{\Df}$
\footnote{From now on $\Df$ is understood to be equipped with the norm $\norm{\,\cdot\,}_{\Df}$.
}
, to $\Hf\simeq L^2(\FC_1) \otimes \C^N$;
\item \label{item:gap} let $\mu\in\R$ (Fermi energy) be in a spectral gap%
\footnote{In the following, when we refer to ``the'' spectral gap of $H_0$, we will refer to this specific gap.} %
of $H_0$. We denote by $\Pi_0 = \chi_{(-\infty, \mu)}(H_0)$ the corresponding spectral projector (Fermi projector). 
We assume that its fibration $k\mapsto\Pi_0(k)$ takes values in the finite-rank projections on $\Hf$\footnote{From the smoothness assumption \ref{item:smooth_H}, it follows that $\mathrm{Rank}(\Pi_0(k)) =m\in\N \cup \set{+ \infty}$ 
 is independent of $k$. Therefore, in view of the fact that $\Pi_0$ is an orthogonal projection, $m<+ \infty$ is equivalent to the assumption  $\Pi_0\in\B_1^\tau$.}. 
\end{enumerate} \hfill $\diamond$
\end{assumption}


We shortly discuss sufficient conditions implying that Assumption \ref{item:smooth_H} holds true.  
As far as discrete models are concerned, $\Hf$ is finite dimensional and $H_0(k)$ are self-adjoint matrices. 
The smoothness of the map  $k \mapsto H_0(k)$ follows from the fact that the hopping amplitudes in the model 
$\set{t_{\gamma}}_{\gamma \in \Gamma}$ decay sufficiently fast as $|\gamma| \to \infty$. 
In all the most  popular discrete  models of topological insulators \cite{Haldane88, KaneMele2005a} the hopping amplitudes have finite range, namely $t_\gamma =0$ if $|\gamma| > R$ for some $R$, hence assumption \ref{item:smooth_H} is automatically satisfied.

As for the continuum case $\X = \R^d$, we first consider a Bloch-Landau operator in the form
\begin{equation} \label{Bloch-Landau}
H_0 = \frac{1}{2} \left( - \iu \nabla - \frac{1}{c} A \right)^2 + V_\Gamma,
\end{equation}
acting in $L^2(\R^d)$, where $A$ and $V_\Gamma$ are the magnetic and electrostatic potentials, respectively
(the charge {\tiny Q} of the particle is reabsorbed in $A$ and in $V$). 
For the sake of simplicity, we consider only $d \le 3$ and we ignore the ``spin space'' $\C^N$, but similar results
holds true if for example $V_\Gamma$ is matrix-valued and acts non-trivially on these degrees of freedom. 
With the help of Kato's theory \cite{Kato66}, and arguing as in \cite{MonacoPanatiPisanteTeufel16} on the basis 
\cite{BirmanSuslina}, it is not difficult to 
prove that, if $A = A_\Gamma$ is $\Gamma$-periodic, and $\mathcal{C}_1$ denotes the fundamental cell of the lattice, then 
for the validity of  \ref{item:smooth_H}  it is sufficient to assume either of the following two sets of hypotheses:
\begin{enumerate}[label=(\Alph*), ref=(\Alph*)]
 \item \label{item:typeA} $A \in L^\infty(\mathcal{C}_1;\R^2)$ when $d=2$ or $A \in L^4(\mathcal{C}_1;\R^3)$ when $d=3$, and $\mathrm{div}\, A, V_\Gamma \in L^2\sub{loc}(\R^d)$ when $d \in \set{2,3}$;
 \item \label{item:typeB} $A \in L^r(\mathcal{C}_1;\R^2)$ with $r>2$ and $V_\Gamma \in L^p(\mathcal{C}_1)$ with $p>1$ when $d=2$, or $A \in L^3(\mathcal{C}_1;\R^3)$ and $V_\Gamma \in L^{3/2}(\mathcal{C}_1)$ when $d=3$.
\end{enumerate}
If instead $A = A_b$ is a linear potential inducing a constant uniform magnetic field, 
it is enough to assume that $V_\Gamma$ is infinitesimally form bounded with respect to $- \Delta$ on $\Hf$, 
and that the magnetic flux per unit cell is a \emph{rational multiple} of the magnetic flux quantum.  
As a further example, we consider the Hamiltonian
\[ H_0 = \frac{1}{2} \V{p}^2 \otimes \Id_{\C^2} + (E_1 p_2 -E_2p_1)\otimes s_z + E_3 (p_1\otimes s_y-p_2\otimes s_x), \]
acting on $L^2(\R^2) \otimes \C^2$. In the above, $\V{p} \equiv (p_1, p_2) = - \iu \nabla$ denotes the momentum operator, $\V{E} \equiv (E_1, E_2, E_3)$ is a constant vector (which we interpret as a constant electric field), and $\V{s} \equiv (s_x, s_y, s_z)$ denotes the vector of spin matrices (half of the Pauli matrices). Thus, the first term in $H_0$ represents the kinetic energy, the second one a spin-orbit coupling, and the third one is a Rashba term: this Hamiltonian represents a continuous analogue of the Kane--Mele Hamiltonian proposed in~\cite{KaneMele2005a}. One can argue that the above operator can be fibered via Bloch--Floquet--Zak transform leading to a family of fiber Hamiltonians $H_0(k)$ as in Assumption~\ref{item:smooth_H}.
Moreover, since 
\[ [H_0,S] = [H_0\su{R},S] = E_3 \left(p_1 \otimes [s_y, s_z] - p_2 \otimes [s_x,s_z] \right) = \iu E_3 \left(p_1 \otimes s_x + p_2 \otimes s_y \right), \]
and since the momentum operator is relatively bounded with respect to the Laplacian, one can see that $[H_0,S]$ is relatively bounded with respect to $H_0$, and hence the assumptions on $S$ (compare Definition~\ref{def:$S$-current} and Remark~\ref{rem:about def of S}) are satisfied as well.
\bigskip
 

The following spaces of operators and functions turn out to be useful for our analysis.

\begin{definition}
Let $\Hi_1,\Hi_2\in \{ \Df \, ,\,\Hf \}$. We denote by $\mathcal{L}(\Hi_1,\Hi_2)$ the space of bounded linear operators from $\Hi_1$ to $\Hi_2$ and by $\mathcal{L}(\Hi_1):=\mathcal{L}(\Hi_1,\Hi_1)$.
We define
\[
\PO(\Hi_1,\Hi_2) := \{   \mbox{  $\Gamma$-periodic $A$ with smooth fibration $\R^d\to \mathcal{L}(\Hi_1,\Hi_2)$, $k\mapsto A(k)$  } \}
\]
equipped with the norm $\norm{A}_{\PO(\Hi_1,\Hi_2)}:=\max_{k\in \mathbb{B}^d}\norm{A(k)}_{\mathcal{L}(\Hi_1,\Hi_2)}$. We also set $\PO(\Hi_1) :=\PO(\Hi_1,\Hi_1)$.
\end{definition}

Since the Fr\'echet derivative follows the usual rules of the differential calculus, we have that $\PO(\Hi_1,\Hi_2)$ is a linear space, $\PO(\Hf)$, $\PO(\Df)$ and $\PO( \Hf,\Df )$ are normed algebras, and \eg
for $A\in\PO(\Hf,\Df)$ and $B\in\PO(\Hf)$ we have
\[
AB\in\PO(\Hf,\Df)\text{ with }\norm{AB}_{\PO(\Hf,\Df)}\leq \norm{A}_{\PO( \Hf,\Df)}\norm{B}_{\PO(\Hf)}\,.
\]

\begin{definition} \label{dfn:Cinfty}
Let $\Hi_1\in \{ \Df \, ,\,\Hf \}$. We set
\[
C^\infty_\varrho(\R^d,\Hi_1):=\{ \varphi\in\Hi_\varrho\text{ such that }\varphi\colon\R^d\to\Hi_1\text{ is smooth}\}.
\]
\end{definition}
Notice that $C^\infty_\varrho(\R^d,\Hf)\supset C^\infty_\varrho(\R^d,\Df)$, $C^\infty_\varrho(\R^d,\Hf)$ is dense   in $\Hi_\varrho$ with  respect to $\norm{\,\cdot\,}_{\Hi_\varrho}$, and $C^\infty_\varrho(\R^d,\Df)$ is dense in $L^2_{\varrho}(\R^d,\Df)$ with respect to the norm on $\Hi_\varrho$ induced by the graph norm $\norm{\,\cdot\,}_{\Df}$.

Since we are interested in computing $[A,X_j]$ where $A$ is in one of the above spaces of operators, the following  Proposition will be relevant. It states their invariance under the derivation $\overline{[\,\cdot\,,X_j]}$, where the overline denotes the operator closure.

\begin{proposition}
\label{prop:derivation of P spaces}
Let $\Hi_1,\Hi_2\in \{ \Df \, ,\,\Hf \}$, and $A\in\PO(\Hi_1,\Hi_2)$. Then $\overline{[A,X_j]}$ is in $\PO(\Hi_1,\Hi_2)$, and  
\[
\overline{[A,X_j]\Big|_ {\UZ^{-1}{C^\infty_\varrho(\R^d,\Hi_1)}}}(k)=-\iu\partial_{k_j}A(k)\text{  in  $\mathcal{L}(\Hi_1,\Hi_2)$}.
\]
\end{proposition}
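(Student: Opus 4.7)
The plan is to transfer the whole computation to the Bloch--Floquet--Zak representation, where $A$ is already in diagonal (fibered) form and the position operator $X_j$ takes a simple differential form, so that the commutator can be computed pointwise in $k$.

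First I would verify the well-known fact that $\UZ$ intertwines $X_j$ with $\iu\partial_{k_j}$ on a suitable dense subspace. Starting from a compactly supported $\psi$ and applying \eqref{eqn:defn UZ}, one computes
\[
\iu \partial_{k_j}(\UZ \psi)(k,y) = y_j (\UZ\psi)(k,y) - \eu^{-\iu k\cdot y}\sum_{\gamma\in\Gamma}\eu^{\iu k\cdot\gamma}\gamma_j M(\gamma,y)\psi(y-\gamma),
\]
which, using $M(\gamma,y)(y-\gamma)_j = y_j M(\gamma,y) - \gamma_j M(\gamma,y)$, matches term-by-term with $(\UZ X_j\psi)(k,y)$. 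Hence, on $\UZ^{-1} C^\infty_\varrho(\R^d,\Hi_1)$, which is a dense subspace of $\Do(X_j)$ (smoothness in $k$ translates, via inverse BFZ, into rapid decay of $X_j$-moments in $x$), one has $\UZ X_j \UZ^{-1} = \iu\partial_{k_j}$.

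Second, on the same dense subspace, $A$ acts fiberwise as $A(k)\in\mathcal{L}(\Hi_1,\Hi_2)$. Since $k\mapsto A(k)$ is smooth in the operator norm by the very definition of $\PO(\Hi_1,\Hi_2)$, the Leibniz rule applies:
\[
\iu\partial_{k_j}\bigl(A(k)\varphi(k,\cdot)\bigr) = \iu(\partial_{k_j} A)(k)\,\varphi(k,\cdot) + A(k)\,\iu\partial_{k_j}\varphi(k,\cdot).
\]
Combining the two equalities on $\UZ^{-1} C^\infty_\varrho(\R^d,\Hi_1)$ yields
\[
\UZ\,[A,X_j]\,\UZ^{-1}\varphi(k,\cdot) = -\iu(\partial_{k_j} A)(k)\,\varphi(k,\cdot),
\]
which is the claimed fibration.

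Third, I would verify that $\partial_{k_j} A$ belongs to $\PO(\Hi_1,\Hi_2)$. Differentiating the covariance identity $A(k+\gamma^*)=\varrho(\gamma^*)A(k)\varrho(\gamma^*)^{-1}$ in $k_j$ (using that $\varrho(\gamma^*)$ does not depend on $k$) gives the same covariance for $\partial_{k_j}A$; smoothness of $k\mapsto\partial_{k_j}A(k)$ is inherited from smoothness of $k\mapsto A(k)$; and the $\PO$-norm is finite because $\partial_{k_j}A$ is continuous on the compact fundamental domain $\mathbb{B}^d$ and then extended by equivariance. Consequently, $-\iu\partial_{k_j}A$ defines a bounded operator from $\UZ^{-1}L^2_\varrho(\R^d,\Hi_1)$ to $\Hi$, agreeing with $[A,X_j]$ on a dense subspace; hence the closure of the latter coincides with $-\iu\partial_{k_j}A$ and lies in $\PO(\Hi_1,\Hi_2)$.

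The main delicate point I anticipate is bookkeeping of domains: one must distinguish the closure of $[A,X_j]$ as an operator on $\Hi$ from its extension as a bounded map out of $\UZ^{-1}L^2_\varrho(\R^d,\Df)$ when $\Hi_1=\Df$, and one must justify exchanging $\partial_{k_j}$ with $A(k)\cdot$ in the $\mathcal{L}(\Hi_1,\Hi_2)$-sense (not merely strongly). Both issues are handled by the chosen dense core $C^\infty_\varrho(\R^d,\Hi_1)$ together with the operator-norm smoothness built into the definition of $\PO(\Hi_1,\Hi_2)$; no strong/weak gap ever arises. Density of $C^\infty_\varrho(\R^d,\Hi_1)$ in the appropriate $L^2_\varrho$-space, already noted after Definition~\ref{dfn:Cinfty}, closes the argument.
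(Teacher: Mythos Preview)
Your proposal is correct and follows essentially the same route as the paper: pass to the Bloch--Floquet--Zak picture where $X_j$ becomes $\iu\partial_{k_j}$ on $C^\infty_\varrho(\R^d,\Hi_1)$, compute the commutator fiberwise as $-\iu\partial_{k_j}A(k)$, and then extend by density using the operator-norm smoothness of $A$. The only cosmetic difference is that you obtain periodicity by differentiating the covariance relation $A(k+\gamma^*)=\varrho(\gamma^*)A(k)\varrho(\gamma^*)^{-1}$, whereas the paper invokes Lemma~\ref{lemma:derivata} on the pre-transform side; both are equivalent here.
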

\begin{proof}
Notice that 
\begin{equation}
\label{eqn:Xj=ipartialk_j}
\UZ X_j \UZ^{-1}\Big|_{C^\infty_\varrho(\R^d,\Hi_1)}= \iu \partial_{k_j}\Big|_{C^\infty_\varrho(\R^d,\Hi_1)},
\end{equation}
thus for every $\varphi\in C^\infty_\varrho(\R^d,\Hi_1)$ one has that
\begin{equation}
\label{eqn:[A,X_j]PHf}
\UZ \, [A,X_j] \, \UZ^{-1}\varphi =\left[ \int_{\mathbb{R}^d}^{\oplus}\di k\, A(k), \iu \partial_{k_j}\right]\varphi=-\iu \int_{\mathbb{R}^d}^{\oplus}\di k\,  \partial_{k_j}A(k) \varphi.
\end{equation}
Since $\UZ \, A \, \UZ^{-1} \, C^\infty_\varrho(\R^d,\Hi_1)\subset C^\infty_\varrho(\R^d,\Hi_2)$ (see \cite[III-\S 3.1, Problem~(3.11)]{Kato66}), the commutator appearing on the right-hand side of the first equality is densely defined on $C^\infty_\varrho(\R^d,\Hi_1)$ and so by unitary conjugation the commutator on the left-hand side is densely defined as well. Thus, Lemma~\ref{lemma:derivata} implies that $[A,X_j]$ acting on ${\UZ^{-1}C^\infty_\varrho(\R^d,\Hi_1)}$ is periodic.

Observe that
\begin{equation}
\label{eqn:partialk_jAPHf}
\norm{\partial_{k_j}A(k)f(k,\cdot)}_{\Hi_2}\leq  \norm{\partial_{k_j}A }_{\PO(\Hi_1,\Hi_2)} \norm{f(k,\cdot)}_{\Hi_1}
\end{equation}
for every $f(k,\cdot) \in  \Hi_1$. By \eqref{eqn:[A,X_j]PHf} and \eqref{eqn:partialk_jAPHf}, one obtains
\[
\norm{   [A,X_j]  (k)\varphi(k,\cdot)}_{\Hi_2}\leq \norm{\partial_{k_j}A }_{\PO(\Hi_1,\Hi_2)}\norm{\varphi(k,\cdot)}_{\Hi_1}
\]
for all $\varphi \in C^\infty_\varrho(\R^d,\Hi_1)$. Therefore, as $\Hi_2$ is a Banach space, the {extension principle} implies the thesis.
\end{proof}

\begin{lemma}
\label{lem:resolvent-projectionPHfDf}
Under Assumption~\ref{assum:1} we have that $(H_0 - z \Id)^{-1}\in \PO( \Hf,\Df )$ for every $z\in \rho(H_0)$, and that $\Pi_0 \in \PO( \Hf,\Df )$.
\end{lemma}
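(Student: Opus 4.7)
The plan is to first establish the claim for the resolvent $(H_0-z\Id)^{-1}$, and then to deduce it for $\Pi_0$ via the Riesz projection formula. Periodicity of both operators is immediate from the periodicity of $H_0$, so the real content lies in controlling the fibrations $k\mapsto(H_0(k)-z)^{-1}$ and $k\mapsto\Pi_0(k)$ in the norm of $\mathcal{L}(\Hf,\Df)$ and in their smoothness in $k$.

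For fixed $k$, I would first argue that $(H_0(k)-z)^{-1}$ lies in $\mathcal{L}(\Hf,\Df)$. The standard fact that the spectrum of a direct-integral self-adjoint operator equals the closure of the union of the fiber spectra forces $z\in\rho(H_0(k))$ for every $k$; together with self-adjointness of $H_0(k)$ on $\Df$, this gives the resolvent as a bounded bijection $\Hf\to\Df$. To obtain an explicit bound in the graph norm of $H_0(0)$, I would use the decomposition
$$H_0(0)\,(H_0(k)-z)^{-1} = \Id + z(H_0(k)-z)^{-1} + (H_0(0)-H_0(k))(H_0(k)-z)^{-1},$$
and invoke \ref{item:smooth_H} to see that $H_0(0)-H_0(k)\in\mathcal{L}(\Df,\Hf)$; each summand then belongs to $\mathcal{L}(\Hf)$, yielding the desired control of $(H_0(k)-z)^{-1}$ in $\mathcal{L}(\Hf,\Df)$.

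For smoothness in $k$ I would invoke the first resolvent identity inductively. The first derivative reads
$$\partial_{k_j}(H_0(k)-z)^{-1} = -(H_0(k)-z)^{-1}\,(\partial_{k_j}H_0(k))\,(H_0(k)-z)^{-1},$$
whose three factors map $\Hf\to\Df$, $\Df\to\Hf$, $\Hf\to\Df$ respectively (the middle one by \ref{item:smooth_H}), so the composition belongs to $\mathcal{L}(\Hf,\Df)$. Higher derivatives are handled by induction, producing finite sums of analogous composition chains of resolvents alternating with $\partial_k^\alpha H_0(k)\in\mathcal{L}(\Df,\Hf)$; each chain again closes in $\mathcal{L}(\Hf,\Df)$. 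By $\Gamma^*$-equivariance and compactness of (the closure of) $\mathbb{B}^d$ all these norms are uniformly bounded in $k$, and $(H_0-z\Id)^{-1}\in\PO(\Hf,\Df)$ follows.

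For $\Pi_0$ I would then appeal to assumption \ref{item:gap}, which produces a contour $C\subset\rho(H_0)$ encircling the spectrum below $\mu$ at strictly positive distance from the spectrum of $H_0$, and hence from each fiber spectrum uniformly in $k$. The Riesz formula
$$\Pi_0(k) = -\frac{1}{2\pi\iu}\oint_C (H_0(k)-z)^{-1}\,\di z$$
represents $\Pi_0(k)$ as the integral of an $\mathcal{L}(\Hf,\Df)$-valued function which is jointly smooth in $k$ and continuous in $z$, uniformly bounded on $\mathbb{B}^d\times C$ by the previous step; differentiating under the integral sign gives $\Pi_0\in\PO(\Hf,\Df)$. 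The main point to watch throughout is that every composition genuinely closes in $\mathcal{L}(\Hf,\Df)$ rather than merely in $\mathcal{L}(\Hf)$; this is precisely what \ref{item:smooth_H} secures, since $H_0$ together with all its $k$-derivatives is bounded from $\Df$ to $\Hf$, matching perfectly with resolvents that map $\Hf$ back to $\Df$.
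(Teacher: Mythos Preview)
Your proposal is correct and follows exactly the same route as the paper: first the resolvent via its fibration $(H_0(k)-z)^{-1}$, then $\Pi_0$ via the Riesz contour integral. The paper's proof is much terser---it simply declares the resolvent claim ``evident'' and invokes Riesz's formula with the remark that $\Df$ is a Banach space---whereas you have spelled out the graph-norm bound, the resolvent-identity smoothness argument, and the differentiation under the contour integral; but the underlying strategy is identical.
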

\begin{proof}
The first claim is evident because of $(H_0 - z \Id)^{-1}(k) = (H_0(k) - z \Id)^{-1}$. Since $\Df$ is a Banach space, the second one follows from Riesz's formula
\begin{equation} \label{eqn:Riesz}
\Pi_0(k) = \frac{\iu}{2\pi} \oint_C (H_0(k) - z \Id)^{-1} \, \di z\,,
\end{equation}
where $C$ is a positively-oriented complex contour intersecting the real axis at the Fermi energy (so, in the gap) and below the bottom of the spectrum of $H_0$.
\end{proof}

\begin{corollary}
\label{cor:der Pi0 and H0}
Under Assumption~\ref{assum:1} we have that
\begin{enumerate}[label=(\roman*), ref=(\roman*)]

\item \label{item:[Pi0,X_j]} $\overline{[\Pi_0,X_j]\Big|_ {\UZ^{-1}{C^\infty_\varrho(\R^d,\Hf)}}}\in \PO(\Hf,\Df )$,

\item \label{item:[[Pi0,X_j],X_i]} $\overline{\big[ [\Pi_0,X_j] , X_i \big]\Big|_ {\UZ^{-1}{C^\infty_\varrho(\R^d,\Hf)}}}\in \PO(\Hf,\Df )$,

\item \label{item:[H0,X_j]} $\overline{[H_0,X_j]\Big|_{\UZ^{-1}{C^\infty_\varrho(\R^d,\Df)} }}\in \PO( \Df,\Hf )$.
\end{enumerate}
\end{corollary}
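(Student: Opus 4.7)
The three statements will all follow by feeding the appropriate starting operator into Proposition~\ref{prop:derivation of P spaces}. Lemma~\ref{lem:resolvent-projectionPHfDf} places $\Pi_0$ in $\PO(\Hf,\Df)$, while Assumption~\ref{item:smooth_H} identifies $H_0$ as an element of $\PO(\Df,\Hf)$. The conclusions~\ref{item:[Pi0,X_j]} and~\ref{item:[H0,X_j]} are then immediate single applications of Proposition~\ref{prop:derivation of P spaces}: the choice $(\Hi_1,\Hi_2)=(\Hf,\Df)$, $A=\Pi_0$ produces a fiber $-\iu\,\partial_{k_j}\Pi_0(k)\in\mathcal{L}(\Hf,\Df)$, whereas $(\Hi_1,\Hi_2)=(\Df,\Hf)$, $A=H_0$ produces $-\iu\,\partial_{k_j}H_0(k)\in\mathcal{L}(\Df,\Hf)$. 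Smoothness of these fibrations in $k$ is inherited from the smoothness of the original maps $k\mapsto\Pi_0(k)$ and $k\mapsto H_0(k)$, and equivariance is preserved under $\partial_{k_j}$ because $\varrho(\gamma^*)$ does not depend on $k$.

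For~\ref{item:[[Pi0,X_j],X_i]} the plan is to iterate. Set
\[
B:=\overline{[\Pi_0,X_j]\big|_{\UZ^{-1}C^\infty_\varrho(\R^d,\Hf)}},
\]
which by~\ref{item:[Pi0,X_j]} already belongs to $\PO(\Hf,\Df)$ with fibration $k\mapsto -\iu\,\partial_{k_j}\Pi_0(k)$. A second application of Proposition~\ref{prop:derivation of P spaces}, still with $(\Hi_1,\Hi_2)=(\Hf,\Df)$ and now $A=B$, yields $\overline{[B,X_i]\big|_{\UZ^{-1}C^\infty_\varrho(\R^d,\Hf)}}\in\PO(\Hf,\Df)$ with fiber $-\,\partial_{k_i}\partial_{k_j}\Pi_0(k)$, which is precisely what is claimed in~\ref{item:[[Pi0,X_j],X_i]}.

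The only subtlety that needs a brief justification is the identification, on the smooth subspace $\UZ^{-1}C^\infty_\varrho(\R^d,\Hf)$, of the nested commutator $[[\Pi_0,X_j],X_i]$ literally appearing in the statement with the commutator $[B,X_i]$ used in the iterated argument. This identification holds because on smooth $\varrho$-equivariant sections the raw commutator $[\Pi_0,X_j]$ coincides with its closure $B$, and because $\UZ^{-1}C^\infty_\varrho(\R^d,\Hf)$ is invariant under $B$: indeed, $B$ acts fiberwise through the smooth map $-\iu\,\partial_{k_j}\Pi_0(k)$, which maps $\Hf$ into $\Df\subset\Hf$ preserving smoothness in $k$, and $X_i$ corresponds to $\iu\,\partial_{k_i}$ on this subspace by~\eqref{eqn:Xj=ipartialk_j}. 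I do not foresee any genuine obstacle here; the whole corollary is essentially a bookkeeping exercise in the restriction--closure procedure underlying Proposition~\ref{prop:derivation of P spaces}, with the smoothness assumption~\ref{item:smooth_H} and Lemma~\ref{lem:resolvent-projectionPHfDf} providing all the analytic input.
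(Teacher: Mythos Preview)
Your proposal is correct and follows essentially the same route as the paper: parts~\ref{item:[Pi0,X_j]} and~\ref{item:[H0,X_j]} are obtained exactly as you do, by feeding $\Pi_0\in\PO(\Hf,\Df)$ (from Lemma~\ref{lem:resolvent-projectionPHfDf}) and $H_0\in\PO(\Df,\Hf)$ (from Assumption~\ref{item:smooth_H}) into Proposition~\ref{prop:derivation of P spaces}. For part~\ref{item:[[Pi0,X_j],X_i]} the paper is terser---it simply says one repeats ``an argument similar to the one presented in the proof of Proposition~\ref{prop:derivation of P spaces}''---whereas you iterate the proposition itself and explicitly justify that $[[\Pi_0,X_j],X_i]$ agrees with $[B,X_i]$ on the smooth subspace; this is the same content, and your version is if anything more transparent about the closure bookkeeping.
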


\begin{proof}
{\it \ref{item:[Pi0,X_j]}} By Lemma~\ref{lem:resolvent-projectionPHfDf}, one has that $\Pi_0 \in \PO( \Hf,\Df )$. Proposition~\ref{prop:derivation of P spaces} implies the statement.

{\it \ref{item:[[Pi0,X_j],X_i]}} In view of Lemma~\ref{lem:resolvent-projectionPHfDf}, one has that $\Pi_0 \in \PO( \Hf,\Df )$. Using an argument similar to the one presented in the proof of Proposition~\ref{prop:derivation of P spaces} one deduces the thesis.

{\it \ref{item:[H0,X_j]}} Since by hypothesis $H_0\in \PO( \Df,\Hf )$, Proposition~\ref{prop:derivation of P spaces} concludes the proof.
\end{proof}

For the sake of readability, we introduce the concise notation 
\begin{equation} \label{eqn:closure der of Pi0 and H0}
\begin{aligned}
\overline{[\Pi_0,X_j]}&:=\overline{[\Pi_0,X_j]\Big|_ {\UZ^{-1}{C^\infty_\varrho(\R^d,\Hf)}}}\quad\text{,}\quad\overline{\big[ [\Pi_0,X_j], X_i \big]}:=\overline{\big[[\Pi_0,X_j],X_i\big]\Big|_ {\UZ^{-1}{C^\infty_\varrho(\R^d,\Hf)}}} \cr
&\quad\text{and}\quad\overline{[H_0,X_j]}:=\overline{[H_0,X_j]\Big|_{\UZ^{-1}{C^\infty_\varrho(\R^d,\Df)} }}\,.
\end{aligned}
\end{equation}

\goodbreak


\section{Non-equilibrium almost-stationary states} \label{sec:adiabatic}

Now that we have established the model for the unperturbed system, we consider the perturbed Hamiltonian
\begin{equation} \label{eqn:Hamiltonian}
H^\eps := H_0 - \eps X_j\,,
\end{equation}
where $ \eps \in [0,1]$ is the strength of the external electric field pointing in the $j$-direction. 

As discussed in the introduction, we are interested in the linear response of the system to such a perturbation when it starts initially in the zero-temperature equilibrium state $\Pi_0$. While it is clear that the perturbation given by the linear electric potential has the effect of driving the system out of equilibrium, the perturbation is slowly varying and thus acts locally merely as a shift in energy. Hence it is expected that the initial equilibrium state $\Pi_0$ changes continously into a nearby  \emph{non-equilibrium almost stationary state} (NEASS). A detailed discussion and justification of the concepts of NEASS can be found in 
\cite{Teufel19,MarcelliTeufel19}. 

For the following construction of the NEASS in the present setting we only need to know that 
the operator $\Pi^\epsi$, representing the NEASS, is determined uniquely (up to terms of order $\Or(\epsi^{M+1})$) by the following two properties:
\begin{enumerate}[label=(SA$_{\arabic*}$), ref=(SA$_{\arabic*}$)]
\item \label{item:SA1} $\Pi^\eps = \E^{-\I \epsi \mathcal{S}^\epsi} \, \Pi_0 \, \E^{\I\epsi \mathcal{S}^\epsi}$ for some bounded,  periodic and self-adjoint operator $\mathcal{S}^\epsi$;
\item \label{item:SA3} $\Pi^\eps$ almost-commutes with the Hamiltonian $H^\eps$, namely 
 $[H^\eps,\Pi^\eps] = \Or(\eps^{M+1})$.
\end{enumerate}
Here $\Or(\eps^{M+1})$ is understood in the sense of the operator norm.

\begin{proposition} \label{prop:Pi1}
Consider the Hamiltonian $H^\eps = H_0 - \epsi X_j$ 
with $H_0$ satisfying Assumption~\ref{assum:1}.
\begin{enumerate}[label=(\roman*), ref=(\roman*)]
\item \label{item:propSA1} Let $\mathcal{S}:=\I\, \mathcal{I}([\overline{[X_j,\Pi_0]},\Pi_0])$, where 
\begin{equation} \label{eqn:I(A)}
\mathcal{I}(A) := \frac{\iu}{2 \pi} \oint_C \di z \, (H_0 - z \Id)^{-1} \, [A,\Pi_0] \, (H_0 - z \Id)^{-1},
\end{equation}
with $C$ a positively-oriented contour in the complex energy plane enclosing the part of the spectrum of $H_0$ below the gap. Then $\mathcal{S}$ {is in $\PO(\Hf,\Df)$ and} is self-adjoint.
\item \label{item:propSA3} Let
\[
\Pi^\eps := \E^{-\I \epsi \mathcal{S}} \,\Pi_0\, \E^{\I\epsi \mathcal{S}}\,.
\]
Then
\[
\Pi^\eps = \Pi_0 + \eps \Pi_1 + \eps^2 \Pi_r^\eps
\]
where both
\[
 \Pi_1    	= \mathcal{I}\( \overline{[  X_j,\Pi_0]} \)
 \]
and $ \Pi_r^\eps$ 
are in $\PO(\Hf,\Df)$, and the map ${[0,1]}\ni \eps \mapsto \Pi_r^\eps \in \PO(\Hf,\Df)$ is bounded.
Moreover,
  \begin{align*}
\overline{[H^\eps, \Pi^\eps]}  = \eps^{2} R^\eps
\end{align*}
where $R^\eps$ is in $\PO(\Hf)$ {and the map ${[0,1]} \ni \eps \mapsto R^\eps \in \PO(\Hf)$ is bounded}.

\end{enumerate}
\end{proposition}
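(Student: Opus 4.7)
For \emph{(i)}, I trace $\PO(\Hf,\Df)$-membership through the integral defining $\mathcal{I}$. Corollary~\ref{cor:der Pi0 and H0}\ref{item:[Pi0,X_j]} gives $\overline{[X_j,\Pi_0]}\in\PO(\Hf,\Df)$ and Lemma~\ref{lem:resolvent-projectionPHfDf} gives $\Pi_0,(H_0-z)^{-1}\in\PO(\Hf,\Df)$. Since $\Pi_0$ also acts boundedly on $\Df$ with the graph norm, the composition rules in the $\PO$-algebras yield $[\overline{[X_j,\Pi_0]},\Pi_0]\in\PO(\Hf,\Df)$, so the integrand of $\mathcal{I}$ lies in $\PO(\Hf,\Df)$ uniformly in $z$ along the compact contour $C$, and the integral produces $\mathcal{S}\in\PO(\Hf,\Df)$. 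For self-adjointness, set $A:=\overline{[X_j,\Pi_0]}$: since $X_j$ and $\Pi_0$ are self-adjoint we have $A^*=-A$, whence $[[A,\Pi_0],\Pi_0]^*=-[[A,\Pi_0],\Pi_0]$. Choosing the contour symmetric about the real axis, the change of variable $z\mapsto\bar z$ reverses its orientation and, combined with the overall factor $\iu^2=-1$ in $\mathcal{S}$, yields $\mathcal{S}^*=\mathcal{S}$.

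For \emph{(ii)}, since $\mathcal{S}$ is bounded and self-adjoint on $\Hf$, the unitaries $e^{\pm\iu\eps\mathcal{S}}$ are well-defined on $\Hf$, and two iterations of Duhamel's formula give the norm-convergent expansion
\begin{equation*}
\Pi^\eps \;=\; \Pi_0 \,-\, \iu\eps\,[\mathcal{S},\Pi_0] \,-\, \eps^2\int_0^1(1-s)\,e^{-\iu s\eps\mathcal{S}}\,[\mathcal{S},[\mathcal{S},\Pi_0]]\,e^{\iu s\eps\mathcal{S}}\,\di s,
\end{equation*}
identifying $\Pi_1=-\iu[\mathcal{S},\Pi_0]$ and $\Pi_r^\eps$ as the last integral. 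The identity $\Pi_1=\mathcal{I}(\overline{[X_j,\Pi_0]})$ reduces to a brief algebraic computation resting on two facts: $A=\overline{[X_j,\Pi_0]}$ is off-diagonal with respect to $\Pi_0$, which collapses $[[A,\Pi_0],\Pi_0]$ to $A$ inside the definition of $\mathcal{S}$, and $\Pi_0$ commutes with $(H_0-z)^{-1}$, which lets $\Pi_0$ be moved through the contour integral to recover the commutator $[A,\Pi_0]$ appearing in \eqref{eqn:I(A)}. The $\PO(\Hf,\Df)$-regularity of $\Pi_1,\Pi_r^\eps$ uniformly in $\eps\in[0,1]$ then follows from the composition rules (since $[\mathcal{S},\Pi_0],\,[\mathcal{S},[\mathcal{S},\Pi_0]]\in\PO(\Hf,\Df)$), the bound $\norm{\mathcal{S}\psi}_{\Df}\leq\norm{\mathcal{S}}_{\PO(\Hf,\Df)}\norm{\psi}_{\Df}$ which makes $\mathcal{S}$ bounded also on $\Df$, and the consequent convergence of the exponential series in both $\mathcal{L}(\Hf)$ and $\mathcal{L}(\Df)$ uniformly in $s\eps\in[0,1]$; smoothness in $k$ passes fiberwise through the series.

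For the commutator, the cleanest route is to conjugate:
\begin{equation*}
[H^\eps,\Pi^\eps] \;=\; e^{-\iu\eps\mathcal{S}}\bigl[H^\eps_{\mathcal{S}},\Pi_0\bigr]e^{\iu\eps\mathcal{S}}, \qquad H^\eps_{\mathcal{S}}:=e^{\iu\eps\mathcal{S}} H^\eps e^{-\iu\eps\mathcal{S}},
\end{equation*}
and expand $H^\eps_{\mathcal{S}}=H^\eps+\iu\eps[\mathcal{S},H^\eps]+\eps^2\widetilde R^\eps$ via Duhamel. Using $[H_0,\Pi_0]=0$, the order $\eps^0$ term vanishes; the order $\eps^1$ contribution $-[X_j,\Pi_0]+\iu[[\mathcal{S},H_0],\Pi_0]$ vanishes because the intertwining $\overline{[H_0,\mathcal{I}(A)]}=A$ for off-diagonal $A=\overline{[X_j,\Pi_0]}$ holds as a direct consequence of $H_0(H_0-z)^{-1}=\Id+z(H_0-z)^{-1}$ under the contour integral together with Riesz's formula \eqref{eqn:Riesz}, which via the Jacobi identity (and $[H_0,\Pi_0]=0$) converts into $\iu[[\mathcal{S},H_0],\Pi_0]=[X_j,\Pi_0]$. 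The remaining $\eps^2 R^\eps$ is then assembled from $[\overline{[\mathcal{S},X_j]},\Pi_0]$, $[\widetilde R^\eps,\Pi_0]$, and conjugation by the unitaries $e^{\pm\iu\eps\mathcal{S}}$; each piece lies in $\PO(\Hf)$ by Proposition~\ref{prop:derivation of P spaces} (which gives $\overline{[\mathcal{S},X_j]}\in\PO(\Hf,\Df)$), the composition rules, and the boundedness of the conjugating unitaries on $\Hf$, with bound uniform in $\eps\in[0,1]$.

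The main technical obstacle is the careful handling of commutators involving the unbounded operators $X_j$ and $H_0$ with $\mathcal{S},\Pi_0,\Pi_r^\eps$: one must verify that every such commutator is densely defined on $\UZ^{-1}C^\infty_\varrho(\R^d,\Df)$ and admits a bounded closure in $\PO(\Hf)$. Proposition~\ref{prop:derivation of P spaces} supplies this for $X_j$-commutators of elements of $\PO(\Hf,\Df)$, while $H_0$-commutators are controlled via the conjugation identity above, which replaces them with the better-behaved $[\mathcal{S},H_0]$ in the Duhamel expansion of $H^\eps_{\mathcal{S}}$; these in turn remain bounded because $\mathcal{S}\in\PO(\Hf,\Df)$ acts boundedly on $\Df$.
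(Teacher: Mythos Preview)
Your proof is correct and largely parallels the paper's argument; the substantive difference is in how you handle the commutator $[H^\eps,\Pi^\eps]$. The paper expands $\Pi^\eps=\Pi_0+\eps\Pi_1+\eps^2\Pi_r^\eps$ first and computes
\[
[H^\eps,\Pi^\eps]=\eps\bigl([H_0,\Pi_1]-[X_j,\Pi_0]\bigr)+\eps^2\bigl([H^\eps,\Pi_r^\eps]-[X_j,\Pi_1]\bigr)
\]
directly, so that only \emph{bounded} objects get expanded and the unbounded $H^\eps$ stays fixed. You instead conjugate $H^\eps$ through $e^{\pm\iu\eps\mathcal{S}}$ and Duhamel-expand $H^\eps_{\mathcal{S}}$. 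This is elegant and does work, but it forces you to verify that $e^{\pm\iu\eps\mathcal{S}}$ preserves a domain on which both $H_0$ and $X_j$ act---namely $\UZ^{-1}C^\infty_\varrho(\R^d,\Df)$---and that the Duhamel formula for the unbounded $H^\eps$ is legitimate there. You correctly flag this as the main obstacle; the paper's route simply sidesteps it. A minor point: your integral form of the Taylor remainder for $\Pi_r^\eps$ is actually cleaner than the paper's Lagrange form, since the mean-value theorem is not available for operator-valued maps.

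For part~(i) and the identification $\Pi_1=\mathcal{I}(\overline{[X_j,\Pi_0]})$ you follow the same line as the paper (the paper packages the $\PO(\Hf,\Df)$-membership of $\mathcal{I}(A)$ and the intertwining relation $[H_0,\mathcal{I}(A)]=A$ into a separate Proposition~\ref{prop:I(A)}, which you re-derive inline). Your self-adjointness argument via contour reflection is a fine explicit substitute for the paper's ``self-adjointness is evident.''
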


We postpone the proof of the above Proposition to Section~\ref{sec:ProofNEASS}. It is already clear from the statement that the map $\mathcal{I}(\,\cdot\,)$ plays a crucial role: its properties are summarized in Section~\ref{sec:ProofLiouvillian}, where we recall in particular the well-known fact from perturbation theory that $\mathcal{I}(A)$ is the unique solution to the equation $[H_0,\mathcal{I}(A)]=A$ whenever $A$ is \emph{off-diagonal} in the orthogonal decomposition induced by $\Pi_0$, namely 
\[ A = A\su{OD} := \Pi_0 \, A \, (\Id-\Pi_0) + (\Id-\Pi_0) \, A \, \Pi_0. \]


\section{Results on the $S$-conductivity} \label{sec:$S$-current}

As stated in the previous sections, we want to investigate quantum $S$-currents induced by the perturbation given by an external electric field, and compute their $S$-conductivities as linear response coefficients. To fix the ideas, the reader can think of the case $S = \Id_\Hi$ (which corresponds physically to the charge current, in appropriate units, \eg in quantum Hall systems) or to $S = \Id_{L^2(\X)} \otimes s_z$, where $s_z = \sigma_z/2$ is half of the third Pauli matrix (which corresponds to the spin current \eg in Quantum Spin Hall systems). 

\begin{definition}[$S$-current and $S$-conductivity] \label{def:$S$-current}
Let $S = \Id_{L^2(\X)} \otimes s$ be a self-adjoint operator on $\Hi = L^2(\X) \otimes \C^N$.
Furthermore, assume that $S$ is periodic%
\footnote{Notice that this assumption is not automatically satisfied since the (magnetic) translation operators (see \eqref{eq:defn transl}) may act non-trivially on the factor $\C^N$. Obviously, for either the \emph{standard magnetic translations} \cite{Zak64} or translation operators with a trivial action on the factor $\C^N$ the periodicity of $S$ is ensured.} %
and its fibration $\Id_{L^2(\FC_1)} \otimes s$ is in $\mathcal{L}( \Df)$.

The \emph{conventional} and the \emph{proper $S$-current} operator are defined respectively as
\begin{equation*}
\begin{aligned}
J\subm{conv}{i}^S&:=\half \( \iu \,\overline{[H_0, X_i]} \, S + \iu \, S \, \overline{[H_0, X_i]} \)\\
J\subm{prop}{i}^S&:=  \iu \, \overline{[H_0,  X_i]}\,S+ \iu\, X_i\, [H_0,  S]
\end{aligned}
\end{equation*}
where $H_0$ satisfies Assumption~\ref{assum:1}. The \emph{conventional} and \emph{proper} $S$-conductivity are defined respectively as
\begin{equation}
\label{eqn:defn sigma}
\re\tau(J\subm{conv/prop}{i}^S\,\Pi^\eps)-  \re \tau(J\subm{conv/prop}{i}^S\,\Pi_0)=:\eps \,\, \sigma\subm{conv/prop}{ij}^S+o(\eps).
\end{equation} \hfill $\diamond$
\end{definition}

\begin{remark}
\label{rem:about def of S} 
Some comments are in order here.
\begin{enumerate}[label=(\roman*), ref=(\roman*)]
\item \label{rem:about def of Jieps} 
By \eqref{eqn:closure der of Pi0 and H0} and $S\in\PO(\Df)$, it is evident that 
\[ J\subm{prop}{i}^S=\iu [H_0, S X_i] \quad  \text{ on }\quad \UZ^{-1}C^\infty_\varrho(\R^d,\Df) \]
and thus $J\subm{prop}{i}^S$ is densely defined as the operator on the right-hand side is so. 
Similarly, we have that
\[
J\subm{conv}{i}^S=\half \( \iu {[H_0, X_i]} \, S + \iu  S \, {[H_0, X_i]} \) \quad  \text{ on } \quad \UZ^{-1}C^\infty_\varrho(\R^d,\Df).
\]
\item \label{rem:about def of S2}
The hypothesis $\Id_{L^2(\FC_1)} \otimes s \in\mathcal{L}( \Df)$ and Assumption~\ref{assum:1} ensure the condition of relative boundedness of $[H_0,S]$ with to respect to $H_0$. 

\end{enumerate}\hfill $\diamond$
\end{remark}

\noindent Since $\Pi^\eps=\Pi_0 + \eps \Pi_1 + \eps^2 \Pi_r^\eps$ by Proposition~\ref{prop:Pi1}\ref{item:propSA3}, we have that 
\begin{equation}
\label{eqn:sigmaprop}
\re\tau(J\subm{prop}{i}^S\,\Pi^\eps)-\re\tau(J\subm{prop}{i}^S\,\Pi_0)=\eps \re\tau(J\subm{prop}{i}^S\,\Pi_1)+\eps^2\re\tau(J\subm{prop}{i}^S\,\Pi_r^\eps).
\end{equation}
In order to prove that $\re\tau(J\subm{prop}{i}^S\,\Pi_1)=\sigma\subm{prop}{ij}^S$ according to \eqref{eqn:defn sigma}, it suffices to show that all the traces per unit volume above are well-defined and finite, and that the term carrying a prefactor $\eps^2$ is uniformly bounded in $\eps$.
While the control of the remainder term will be done in Section~\ref{sec:Proofsigma}, we focus now on the linear response coefficient, namely $\re\tau(J\subm{prop}{i}^S\,\Pi_1)$.

\goodbreak

\subsection{The linear response coefficient}
\label{ssec:linresponsecoeff} 
In order to compute the linear response coefficient, we employ directly Definition~\ref{defn:tuv} for $\tau(J\subm{prop}{i}^S \,\Pi_1)$, and start by localizing this operator on the cell $\FC_L$, defined in~\eqref{eqn:defn FCL}, through the projection $\chi_L$ which multiplies by the characteristic function of $\FC_L$. It is convenient to notice at this point the following

\begin{remark} \label{rmk:SmoothBF}
The range of  $\UZ \chi_L$ is contained in $C^\infty_\varrho(\R^d, \Hf)$ (compare Definition~\ref{dfn:Cinfty}) for every $L>0$. Indeed, for all $f \in \Hi = L^2(\X) \otimes \C^N$ and all ${r} \in \N$, the function $\langle X \rangle^{{r}} \chi_L f$ is still in $\Hi$, where $(\langle X \rangle \psi)(x) := (1 + |x|^2)^{1/2} \psi(x)$ for $\psi\in\Do(\langle X \rangle)$. 
By standard Bloch--Floquet theory \cite[Appendix A]{MonacoPanatiPisanteTeufel16}, 
this is equivalent to requiring that $\UZ (\chi_L f)$ is in the space $H^{{r}}_\varrho(\R^d,\Hf)$ of $\varrho$-covariant maps $\varphi \colon \R^d \to \Hf$ with Sobolev regularity ${r}$: it is a classical result that the intersection of all these Sobolev spaces is contained in $C^\infty_\varrho(\R^d,\Hf)$.
\hfill $\diamond$
\end{remark}

Notice that, since $\Pi_1 \in \mathcal{P}(\Hf,\Df)$, it maps $\UZ^{-1} C^\infty_\varrho(\R^d,\Hf)$ to $\UZ^{-1} C^\infty_\varrho(\R^d,\Df)$. Thus, in view of Remark~\ref{rem:about def of S}\ref{rem:about def of Jieps}, we have that
\begin{equation}
\label{eqn:Ji0Pi1chiL}
J\subm{prop}{i}^S\, \Pi_1 \chi_L = \iu [H_0, S X_i] \Pi_1  \chi_L
\end{equation}
and this allows for the following simple manipulations. Using the Leibniz rule, we obtain the following chain of equalities on $\Ran(\chi_L)$:
\begin{equation}
\label{eq:T+R}
\iu [H_0,S X_i] \Pi_1=\iu [H_0,X_i S ] \Pi_1= \iu  [H_0,X_i] S\Pi_1+\iu X_i [H_0, S ]  \Pi_1=\mathsf{O}+X_i\mathsf{R},
\end{equation}
where the operators $\mathsf{O}$ and $\mathsf{R}$ have been defined in \eqref{eqn:defn T and R}. We call $\mathsf{O}$ the \emph{$S$-orbital} part and $\mathsf{R}$ the \emph{$S$-rotation} part of the operator related to linear response of the proper $S$-current $J\subm{prop}{i}^S$. The latter terminology is due to the fact that, when $S$ is the spin operator and $[H_0,S]\neq 0$, the spin transport is the result of two contributions, that is, the $S$-orbital part coming from the center-of-mass drift and the $S$-rotation part due to the spin non-conservation. Notice that both $\mathsf{O}$ and $\mathsf{R}$ are periodic, instead obviously $X_i\mathsf{R}$ is not periodic and thus a more careful analysis of its trace per unit volume is required. 
We begin by handling the $S$-orbital part $\mathsf{O}$. In view of the defining relation $\Pi_1 = \mathcal{I}(\overline{[X_j,\Pi_0]})$ (compare Propositions~\ref{prop:Pi1} and~\ref{prop:I(A)}), we obtain on $\Ran(\chi_L)$:
\begin{align}
\mathsf{O}&=\iu  [H_0,X_i\su{D}] S \Pi_1+\iu [H_0, X_i\su{OD}] S\Pi_1\cr
&= \mathsf{E}_1+\iu [H_0, X_i\su{OD}S\Pi_1]-\iu X_i\su{OD} [H_0,S\Pi_1] \cr
&=\mathsf{E}_1+\mathsf{E}_2-\iu X_i\su{OD} [H_0,S]\Pi_1 -\iu X_i\su{OD}S [X_j,\Pi_0]\cr
&=\sum_{\ell=1}^3\mathsf{E}_\ell-\iu\big[[X_i,\Pi_0],\Pi_0 S [X_j,\Pi_0] \big]+ \iu\Pi_0\big[[X_i,\Pi_0], S [X_j,\Pi_0] \big] \cr
\label{eqn:K and beyond}
&=\sum_{\ell=1}^4\mathsf{E}_\ell+\mathsf{C},
\end{align}
where we have defined the operators 
\begin{equation}
\label{eqn: defn E and C}
\begin{aligned}
 \mathsf{E}_1&:=\iu  [H_0,X_i\su{D}] S \Pi_1,\quad  \mathsf{E}_2:=\iu [H_0, X_i\su{OD}S\Pi_1],\quad \mathsf{E}_3:=\iu X_i\su{OD} [S,H_0]\Pi_1,\cr
  \mathsf{E}_4&:=\iu\big[[X_i,\Pi_0],\Pi_0 S [\Pi_0,X_j] \big]\quad\text{ and }\quad\mathsf{C}:= \iu\Pi_0\big[[X_i,\Pi_0], S [X_j,\Pi_0] \big].
\end{aligned}
\end{equation}
We call $\mathsf{C}$ the \emph{Chern-like term} and $\mathsf{E}_\ell$ the \emph{$\ell$-th extra or beyond-Chern-like term} for $\ell\in\{1,\dots,4\}$. This terminology is motivated by the fact that, whenever the spin is conserved, for $d=2$, $i=1$ and $j=2$ in quantum (spin) Hall systems the Chern-like term $\mathsf{C}$ corresponds to the (spin) Chern number (see Remark~\ref{rem:Spin chern number}). In general, whenever $[H_0,S]=0$, all extra terms have trace per unit volume zero (see Subsection~\ref{ssec: S conserved}) and obviously the $S$-rotation part vanishes.
 
In the following Proposition, we analyze the trace per unit volume of the operators resulting from the previous algebraic manipulations. 

\begin{proposition}
\label{prop:Extrachanged}
Under Assumption~\ref{assum:1} and hypotheses on $S$ in Definition~\ref{def:$S$-current}, we have that the Chern-like term $\mathsf{C}$, the extra terms $\mathsf{E}_\ell$ for any $\ell\in\{1,\dots,4\}$ and $X_i\mathsf{R}$, defined in \eqref{eqn:defn T and R} and \eqref{eqn: defn E and C} have finite traces per unit volume. Moreover, one has
\begin{equation}
\label{eqn:tau reads only prim cell}
\tau(\mathsf{A})=\frac{1}{\abs{\FC_1}}\Tr\left( \chi_1 \mathsf{A} \chi_1 \right)\text{, for $\mathsf{A}\in\big\{\mathsf{C},\mathsf{E}_\ell, X_i\mathsf{R}\,:\, \ell\in \{1,\dots,4\}, i\in\{1,\dots,d\}\big\}$},
\end{equation}
and
\begin{align}
\label{eqn: tau C}
\tau(\mathsf{C})&=\iu\tau(\Pi_0\big[[X_i,\Pi_0]{S},  [X_j,\Pi_0] \big]),\\
\label{eqn: tau E1 E3}
\tau(\mathsf{E}_1)&=\iu\tau(  [H_0,X_i\su{D}] S\su{OD} \Pi_1),\qquad \tau(\mathsf{E}_3)=\iu \tau(X_i\su{OD} [S,H_0]\Pi_1),\\
\label{eqn: tau E2 E4}
\tau(\mathsf{E}_2)&=0=\tau(\mathsf{E}_4),
\end{align}
where the diagonal and off-diagonal parts of the above operators refer to the orthogonal decomposition induced by the Fermi projection $\Pi_0$.  
\end{proposition}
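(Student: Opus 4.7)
The argument splits into three stages: establishing \eqref{eqn:tau reads only prim cell} for all seven operators, then proving the vanishing statements \eqref{eqn: tau E2 E4}, and finally the explicit identities \eqref{eqn: tau C}--\eqref{eqn: tau E1 E3}. The starting point is the projector identity $X_i\su{OD} = \bigl[[X_i,\Pi_0],\Pi_0\bigr]$, which by Corollary~\ref{cor:der Pi0 and H0}\ref{item:[[Pi0,X_j],X_i]} puts $X_i\su{OD}$ in $\PO(\Hf,\Df)$, so that $[H_0,X_i\su{D}] = [H_0,X_i] - [H_0,X_i\su{OD}] \in \PO(\Df,\Hf)$ by Corollary~\ref{cor:der Pi0 and H0}\ref{item:[H0,X_j]}; consequently $\mathsf{E}_1,\mathsf{E}_2,\mathsf{E}_3$ are honestly periodic, while $\mathsf{C},\mathsf{E}_4,\mathsf{R}$ are visibly so. Each of the six periodic operators contains a factor of $\Pi_0$ or $\Pi_1$; since $\Pi_1 = \mathcal{I}(\overline{[X_j,\Pi_0]})$ is off-diagonal with respect to $\Pi_0$ (because $(H_0-z)^{-1}$ commutes with $\Pi_0$ and $\Pi_0[A,\Pi_0]\Pi_0 = 0$ in the integral representation \eqref{eqn:I(A)}) and $\Pi_0$ has finite-rank fibration by Assumption~\ref{item:gap}, the fiber trace of each operator is uniformly bounded in $k$. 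Hence the operator lies in $\B_1^\tau$, and Proposition~\ref{prop:charge-tauPeriodic}\ref{item:per+traceclassoncompact} yields \eqref{eqn:tau reads only prim cell}. For $X_i\mathsf{R}$, since $\mathsf{R} \in \B_1^\tau$ is trace class on compact sets by Lemma~\ref{lem:tau cont funct1andL}, Proposition~\ref{prop:AXi}\ref{it:AXi exh dep} supplies the analogous formula.

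To prove \eqref{eqn: tau E2 E4}, I handle the two terms separately. The operator $\mathsf{E}_4$ is the commutator of $[X_i,\Pi_0] \in \B_\infty^\tau$ with $\Pi_0 S [\Pi_0,X_j] \in \B_1^\tau$, so the conditional cyclicity of Lemma~\ref{prop:cycl of tau} gives $\tau(\mathsf{E}_4) = 0$ directly. For $\mathsf{E}_2 = \iu[H_0, B]$ with $B := X_i\su{OD} S \Pi_1$, naive cyclicity fails because $H_0 \notin \B_\infty^\tau$; instead I descend to the Bloch fibration. Each fiber $B(k)$ is finite-rank and maps $\Hf$ into $\Df$, so both $H_0(k) B(k)$ and $B(k) H_0(k)$ are trace class on $\Hf$, and the ordinary cyclicity of $\Tr_{\Hf}$ gives $\Tr_{\Hf}([H_0(k),B(k)]) = 0$ for every $k$; integration via Proposition~\ref{prop:charge-tauPeriodic}\ref{item:per+traceclassfibr} then produces $\tau(\mathsf{E}_2) = 0$.

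The identity for $\tau(\mathsf{E}_3)$ is immediate from the definition. For $\mathsf{E}_1$, I show $\tau([H_0,X_i\su{D}] S\su{D} \Pi_1) = 0$: both $H_0$ and $X_i\su{D}$ commute with $\Pi_0$ (the former by the functional calculus, the latter by construction), and $S\su{D}$ likewise, so $[H_0,X_i\su{D}]S\su{D}$ is diagonal with respect to $\Pi_0$; multiplied by the off-diagonal $\Pi_1$ the product is off-diagonal, and conditional cyclicity combined with $\Pi_0^2 = \Pi_0$ annihilates its trace. Finally for $\mathsf{C}$, direct expansion gives
\[
\mathsf{C} - \iu\Pi_0\bigl[[X_i,\Pi_0]S, [X_j,\Pi_0]\bigr] = \iu\Pi_0\bigl([X_j,\Pi_0][X_i,\Pi_0]S - S[X_j,\Pi_0][X_i,\Pi_0]\bigr),
\]
and the identity $\Pi_0[X,\Pi_0]\Pi_0 = 0$ together with the off-diagonality of $[X_i,\Pi_0]$ forces $\Pi_0[X_j,\Pi_0][X_i,\Pi_0] = [X_j,\Pi_0][X_i,\Pi_0]\Pi_0$; Lemma~\ref{prop:cycl of tau} then identifies $\tau(\Pi_0 S[X_j,\Pi_0][X_i,\Pi_0])$ with $\tau(\Pi_0[X_j,\Pi_0][X_i,\Pi_0]S)$, so the trace of the discrepancy vanishes. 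The principal technical obstacle is the unbounded $H_0$ appearing in $\mathsf{E}_2$, which is circumvented by the fiber-wise cyclicity argument above rather than by direct cyclicity of $\tau$.
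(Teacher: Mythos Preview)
Your proof is correct and follows the paper's argument in broad outline: show each of $\mathsf{C},\mathsf{E}_1,\ldots,\mathsf{E}_4$ lies in $\B_1^\tau$ (so Proposition~\ref{prop:charge-tauPeriodic}\ref{item:per+traceclassoncompact} applies), treat $X_i\mathsf{R}$ via Proposition~\ref{prop:AXi}, and then use conditional cyclicity for the explicit identities. Your arguments for $\tau(\mathsf{C})$, $\tau(\mathsf{E}_1)$, $\tau(\mathsf{E}_3)$ and $\tau(\mathsf{E}_4)$ are essentially identical to the paper's.

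The one genuine methodological difference is $\tau(\mathsf{E}_2)=0$. You descend to the fibers and use ordinary cyclicity of $\Tr_{\Hf}$ on the finite-rank operator $B(k)=X_i\su{OD}(k)S\Pi_1(k)$, exploiting that both $\Ran B(k)$ and $\Ran B(k)^*$ lie in $\Df$ (the latter because $B^*=\Pi_1 S X_i\su{OD}$ and $\Pi_1\in\PO(\Hf,\Df)$). The paper instead stays at the level of $\tau$ and cycles in two steps, $\tau(X_i\su{OD}S\Pi_1 H_0)=\tau(\Pi_1 H_0\, X_i\su{OD}S)=\tau(H_0 X_i\su{OD}S\,\Pi_1)$, which is legitimate because $H_0 X_i\su{OD}S\in\PO(\Hf)\subset\B_\infty^\tau$ (the factor $X_i\su{OD}\in\PO(\Hf,\Df)$ absorbs the unboundedness of $H_0$) and $\Pi_1\in\B_1^\tau$, while $X_i\su{OD}S\in\B_1^\tau$ pairs with the bounded periodic $\Pi_1 H_0 = \Pi_1\Pi_0 H_0 + \Pi_0\Pi_1 H_0$ after using the off-diagonality of $\Pi_1$. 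Your fiber-wise route is arguably cleaner and sidesteps the bookkeeping of which factor lies in $\B_1^\tau$ versus $\B_\infty^\tau$; the paper's route has the advantage of staying entirely within the $\tau$-calculus developed in Section~\ref{sec:TPUV}. Your remark that ``naive cyclicity fails'' is slightly too strong: cyclicity of $\tau$ does work, it just has to be applied to the right factorization rather than to the pair $(H_0,B)$ directly.
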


The proof of the above Proposition is postponed to Section~\ref{sec:ProofKE}.

We are going to prove that trace per unit volume of the operator $X_i\mathsf{R}$ is well-defined and finite.
In view of Proposition~\ref{prop:AXi}\ref{it:AXi exh indep}, it suffices to show that $\tau(\mathsf{R})$ is zero. The latter result is an immediate consequence of the following
\begin{proposition} \label{prop:Btorque}
If $H_0$ satisfies Assumption~\ref{assum:1}, and $B$ is in  $\PO(\Hf)\cap\PO(\Df)$ (in particular, $B$ is a {bounded
periodic} operator) the following holds: 
\begin{enumerate}[label={(\roman*)}, ref={\rm (\roman*)}]
\item \label{item:doublecomm formula} 
$$
\tau\big(\iu [H_0,B] \Pi_1 \big)=\tau \big( \iu\Pi_0\big[ [B,\Pi_0]  ,  [X_j,\Pi_0] \big] \big)
$$ \\[-5mm]
\item \label{item: vanishing of doublecomm formula if A per} If, in addition, $[B,X_j]=0$ then 
\[
\tau(\iu [H_0,B]\Pi_1)=0.
\]
\end{enumerate}
\end{proposition}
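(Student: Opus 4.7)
The plan is to reduce both sides of \ref{item:doublecomm formula} to the same fiberwise quantity in the Bloch--Floquet--Zak representation. By Proposition~\ref{prop:charge-tauPeriodic} the trace per unit volume of a periodic operator which is trace class on compact sets equals the normalized integral of the fiber trace $\Tr_{\Hf}$ over $\mathbb{B}^d$, and by Assumption~\ref{item:gap} the fiber $\Pi_0(k)$ is finite rank; since $[A,\Pi_0]$ is finite rank whenever $\Pi_0$ is and $\mathcal{I}$ preserves this via its integral formula, each $\Pi_1(k)$ is also finite rank. Combined with $\Pi_1\in\PO(\Hf,\Df)$ and $B\in\PO(\Hf)\cap\PO(\Df)$, this guarantees that every product appearing below is finite rank on each fiber, hence trace class, so that cyclicity of $\Tr_{\Hf}$ applies freely despite the unboundedness of $H_0(k)$. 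Under $\UZ$, Proposition~\ref{prop:derivation of P spaces} gives $\overline{[X_j,\Pi_0]}(k) = \iu\,\partial_{k_j}\Pi_0(k)$.

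The key observation for \ref{item:doublecomm formula} is that $\overline{[X_j,\Pi_0]}$ is off-diagonal with respect to $\Pi_0$, so by the inverse Liouvillian property of $\mathcal{I}$ (recalled at the end of Section~\ref{sec:adiabatic}) one has $[H_0,\Pi_1] = \overline{[X_j,\Pi_0]}$. I would rewrite the left-hand side by moving the $H_0$-commutator onto $\Pi_1$:
\[
\iu\,\Tr_{\Hf}\bigl([H_0(k),B(k)]\,\Pi_1(k)\bigr) = -\iu\,\Tr_{\Hf}\bigl(B(k)\,[H_0(k),\Pi_1(k)]\bigr) = \Tr_{\Hf}\bigl(B(k)\,\partial_{k_j}\Pi_0(k)\bigr).
\]
For the right-hand side I would expand the double commutator and exploit the projection identities $\Pi_0[B,\Pi_0] = -\Pi_0 B\Pi_0^\perp$, $\Pi_0(\partial_{k_j}\Pi_0)\Pi_0 = 0$ (from $\partial_{k_j}\Pi_0^2 = \partial_{k_j}\Pi_0$) and $(\partial_{k_j}\Pi_0)\Pi_0 = \Pi_0^\perp\partial_{k_j}\Pi_0$ to reduce the integrand to
\[
\Pi_0\bigl[[B,\Pi_0],\partial_{k_j}\Pi_0\bigr] = -\Pi_0 B\Pi_0^\perp\,\partial_{k_j}\Pi_0 \;-\; \Pi_0\,\partial_{k_j}\Pi_0\,B\,\Pi_0,
\]
whose fiber trace, after one more application of cyclicity and of $\Pi_0+\Pi_0^\perp = \Id$, equals $-\Tr_{\Hf}(B\,\partial_{k_j}\Pi_0)$. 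The two factors of $\iu$ (one outside, one from $\overline{[X_j,\Pi_0]}$) then reproduce exactly the left-hand side, and integrating over $\mathbb{B}^d$ proves \ref{item:doublecomm formula}.

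For \ref{item: vanishing of doublecomm formula if A per}, the assumption $[B,X_j] = 0$ is equivalent in fiber form to $\partial_{k_j}B(k) = 0$. Starting from the common fiber value produced above and applying the Leibniz rule,
\[
\tau\bigl(\iu\,[H_0,B]\Pi_1\bigr) = \frac{1}{|\mathbb{B}^d|}\int_{\mathbb{B}^d}\di k\,\Tr_{\Hf}\bigl(B\,\partial_{k_j}\Pi_0\bigr) = \frac{1}{|\mathbb{B}^d|}\int_{\mathbb{B}^d}\di k\,\partial_{k_j}\Tr_{\Hf}(B\Pi_0),
\]
which vanishes because $k\mapsto\Tr_{\Hf}(B(k)\Pi_0(k))$ is $\Gamma^*$-periodic: the $\varrho$-equivariances of $B(k)$ and of $\Pi_0(k)$ cancel inside the fiber trace, so the integral of a pure $k_j$-derivative over the fundamental domain is zero. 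The main obstacle I anticipate is not the algebra but the analytic bookkeeping needed to justify fiber cyclicity when one factor involves the unbounded $H_0(k)$ and to interchange the fiber trace with the Brillouin zone integral; both are controlled by the finite rank of $\Pi_1(k)$ with range in $\Df$ together with the uniform smoothness supplied by Assumption~\ref{item:smooth_H}.
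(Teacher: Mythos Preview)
Your argument is correct, but it follows a genuinely different route from the paper's.

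The paper never passes to fibers. For \ref{item:doublecomm formula} it inserts the contour-integral representation of $\Pi_1$ and uses the conditional cyclicity of $\tau$ (Lemma~\ref{prop:cycl of tau}) to move one resolvent around, obtaining $(H_0-z)^{-1}[H_0,B](H_0-z)^{-1}=[B,(H_0-z)^{-1}]$; the $z$-integral then collapses to $[B,\Pi_0]$, and a second application of $\tau$-cyclicity together with off-diagonality of $\overline{[X_j,\Pi_0]}$ yields the double-commutator form. For \ref{item: vanishing of doublecomm formula if A per} the paper reduces $\tau\bigl([B,\Pi_0][\overline{[X_j,\Pi_0]},\Pi_0]\bigr)$ algebraically to $\tau\bigl(B\,\overline{[\Pi_0,X_j]}\bigr)=\frac{1}{|\FC_1|}\Tr\bigl(\chi_1[B\Pi_0,X_j]\chi_1\bigr)$, which vanishes by cyclicity of the ordinary trace on the cell.

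You instead exploit the Bloch--Floquet--Zak picture throughout: you use $[H_0,\Pi_1]=\overline{[X_j,\Pi_0]}$ directly (bypassing the contour integral), rely on finite-rank cyclicity of $\Tr_{\Hf}$ on each fiber to shuffle the $H_0$-commutator, and for \ref{item: vanishing of doublecomm formula if A per} recognise the integrand as a total $k_j$-derivative of the $\Gamma^*$-periodic function $\Tr_{\Hf}(B(k)\Pi_0(k))$. Your route is shorter and more transparent algebraically; its cost is that it is tied to the periodic setting through the fiber decomposition, whereas the paper's argument lives entirely at the level of $\tau$ and its cyclicity, which is the structure more likely to survive in ergodic or random generalisations. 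One minor point: the normalisation you quote for $\tau$ drops the factor $1/|\FC_1|$ from Proposition~\ref{prop:charge-tauPeriodic}, but since the same constant multiplies both sides this does not affect the argument.
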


The above Proposition, whose proof is deferred to Section~\ref{sec:ProofKE}, immediately implies the following
						
\begin{corollary}
\label{cor:tau R=0}
Under Assumption~\ref{assum:1} and hypotheses on $S$ in Definition~\ref{def:$S$-current}, we have that
\[
\tau(\mathsf{R})=\tau(\iu  [H_0, S ]  \Pi_1)=0.
\]
\end{corollary}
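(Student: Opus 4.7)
The plan is to obtain the corollary as an immediate consequence of Proposition~\ref{prop:Btorque}\ref{item: vanishing of doublecomm formula if A per} applied to the choice $B = S$. All that is needed is to verify the two hypotheses required by that part of the proposition, namely that $S$ belongs to $\PO(\Hf)\cap\PO(\Df)$ and that $[S,X_j]=0$.

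For the first hypothesis, recall from Definition~\ref{def:$S$-current} that $S = \Id_{L^2(\X)} \otimes s$ is assumed to be periodic, and its Bloch--Floquet--Zak fibration is the constant (in $k$) family $\Id_{L^2(\FC_1)}\otimes s$, which is by hypothesis in $\mathcal{L}(\Df)$. Since the fiber is constant in $k$, smoothness with respect to $k$ is trivial, so $S\in\PO(\Df)$. Boundedness of $s$ as an $N\times N$ matrix on $\C^N$ implies that the same fiber belongs to $\mathcal{L}(\Hf)$, whence $S\in\PO(\Hf)$ as well.

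For the second hypothesis, note that $X_j$ acts as multiplication by $x_j$ on the $L^2(\X)$ tensor factor and trivially on $\C^N$, while $S$ is the identity on $L^2(\X)$ and acts only on $\C^N$. Hence $[S,X_j]=0$ on any common domain, in particular on $\UZ^{-1}C^\infty_\varrho(\R^d,\Df)$.

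With both hypotheses verified, Proposition~\ref{prop:Btorque}\ref{item: vanishing of doublecomm formula if A per} applied with $B=S$ yields
\[
\tau(\mathsf{R}) = \tau\bigl(\iu [H_0,S]\,\Pi_1\bigr) = 0,
\]
which is exactly the claim. There is no real obstacle here: the only subtlety is to make sure that the regularity conditions of Definition~\ref{def:$S$-current} on $S$ are strong enough to place it in both $\PO(\Hf)$ and $\PO(\Df)$, which they are precisely because $s$ is a constant matrix and $\Id_{L^2(\FC_1)}\otimes s$ is assumed to preserve $\Df$.
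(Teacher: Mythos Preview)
Your proof is correct and follows exactly the approach the paper intends: the corollary is stated as an immediate consequence of Proposition~\ref{prop:Btorque}\ref{item: vanishing of doublecomm formula if A per} with $B=S$, and you have simply spelled out the verification of the hypotheses $S\in\PO(\Hf)\cap\PO(\Df)$ and $[S,X_j]=0$ that the paper leaves implicit.
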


We are now in position to state one of our main results.

\goodbreak

\begin{theorem}[General formula for the $S$-conductivity] 
\label{thm:notKubo}
Let $H^\eps = H_0 - \epsi X_j$ be acting in $L^2(\X) \otimes \C^N$, with $H_0$ and $\Pi_0$ as in Assumption~\ref{assum:1}. Let $\Pi^\eps$ be the NEASS defined in Section~\ref{sec:adiabatic}. Consider the  conventional (resp.\,\,proper) $S$-conductivity 
$\sigma\subm{conv}{ij}$ (resp.\,\,$\sigma\subm{prop}{ij}$) as in Definition~\ref{def:$S$-current}.  
Then
\begin{equation}
\label{eqn:sigmapropersplittinconvandrot}
\sigma\subm{prop}{ij}^S= \sigma\subm{conv}{ij}^S+\sigma\subm{rot}{ij}^S,
\end{equation}
where 
\begin{equation}
\label{eqn:sigmaT}
\begin{aligned}
\sigma\subm{conv}{ij}^S&=\frac{1}{\abs{\FC_1}}\re \Tr \Big( \chi_1\,  \iu \Pi_0\big[[X_i,\Pi_0]S, [X_j,\Pi_0] \big]  \, \chi_1 \Big) \\
& \quad + \frac{1}{\abs{\FC_1}}\re \Tr  \Big(\chi_1\( \iu [H_0,X_i\su{D}] S\su{OD} \Pi_1 + \iu X_i\su{OD} [S,H_0]\Pi_1   \)\chi_1 \Big) 
\end{aligned}
\end{equation}
and the rotation contribution to the proper $S$-conductivity is defined as
\begin{equation}
\label{eqn:sigmaR}
\sigma\subm{rot}{ij}^S=\re \tau(\iu X_i  [H_0, S ]  \Pi_1)=\frac{1}{\abs{\FC_1}}\re\Tr  \Big(\chi_1 \iu X_i  [H_0, S ]  \Pi_1\chi_1 \Big).
\end{equation}
Moreover, the trace per unit volume appearing in \eqref{eqn:sigmaR} does not depend on the particular exhaustion$^{\ref{fn:exh sym}}$ $\FC_L\nearrow\X$ chosen in Definition~\ref{defn:tuv} and on the choice of the origin, in the sense that $\tau(X_i\mathsf{R})=\tau((X_i+\alpha)\mathsf{R})$ for every $\alpha\in\R$.
\end{theorem}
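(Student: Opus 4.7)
The plan is to reduce the theorem to the analysis of the linear response coefficient $\re\tau(J\subm{prop}{i}^S\,\Pi_1)$ via the expansion $\Pi^\eps = \Pi_0 + \eps\Pi_1 + \eps^2\Pi_r^\eps$ from Proposition~\ref{prop:Pi1}\ref{item:propSA3}. Granting the uniform-in-$\eps$ bound on the remainder trace $\re\tau(J\subm{prop}{i}^S\,\Pi_r^\eps)$ deferred to Section~\ref{sec:Proofsigma}, equation~\eqref{eqn:sigmaprop} together with Definition~\ref{def:$S$-current} yields $\sigma\subm{prop}{ij}^S = \re\tau(J\subm{prop}{i}^S\,\Pi_1)$. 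The starting point is then the splitting \eqref{eq:T+R}, $J\subm{prop}{i}^S\,\Pi_1 = \mathsf{O} + X_i\mathsf{R}$, which separates the two physically distinct contributions and which I treat in turn.

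For the $S$-rotation part, the operator $\mathsf{R}=\iu[H_0,S]\Pi_1$ is periodic (both $H_0$ and $S$ are periodic and $\Pi_1\in\PO(\Hf,\Df)$) and trace class on compact sets, while Corollary~\ref{cor:tau R=0} provides $\tau(\mathsf{R})=0$. Proposition~\ref{prop:AXi}\ref{it:AXi exh dep} then gives the finite value $\tau(X_i\mathsf{R}) = \tfrac{1}{|\FC_1|}\Tr(\chi_1 X_i\mathsf{R}\chi_1)$, and item~\ref{it:AXi exh indep} guarantees independence from the exhaustion $\FC_L\nearrow\X$ and from the choice of origin. This simultaneously establishes formula \eqref{eqn:sigmaR} with $\sigma\subm{rot}{ij}^S := \re\tau(X_i\mathsf{R})$ and the concluding assertion of the theorem.

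For the $S$-orbital part, I would use the chain of identities \eqref{eqn:K and beyond} to write $\mathsf{O} = \mathsf{C} + \sum_{\ell=1}^{4}\mathsf{E}_\ell$. Proposition~\ref{prop:Extrachanged} then supplies exactly the needed ingredients: each summand has finite trace per unit volume representable as $\tfrac{1}{|\FC_1|}\Tr(\chi_1\,\cdot\,\chi_1)$; the contributions of $\mathsf{E}_2$ and $\mathsf{E}_4$ vanish by \eqref{eqn: tau E2 E4}; and $\tau(\mathsf{C})$, $\tau(\mathsf{E}_1)$, $\tau(\mathsf{E}_3)$ are given explicitly by \eqref{eqn: tau C} and \eqref{eqn: tau E1 E3}. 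Summing these contributions and taking real parts reproduces precisely the right-hand side of \eqref{eqn:sigmaT}.

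The final, and I expect most delicate, step is the identification $\re\tau(\mathsf{O}) = \re\tau(J\subm{conv}{i}^S\,\Pi_1) = \sigma\subm{conv}{ij}^S$. Writing $\mathsf{O} = \iu[H_0,X_i] S\,\Pi_1$ and using self-adjointness of $H_0$, $X_i$ and $S$, one computes $\mathsf{O}^* = \iu\,\Pi_1 S[H_0,X_i]$; combining $\tau(A^*)=\overline{\tau(A)}$ with conditional cyclicity (Lemma~\ref{prop:cycl of tau}) then gives
\begin{equation*}
\re\tau\big(\iu[H_0,X_i] S\,\Pi_1\big) \;=\; \re\tau\big(\iu S[H_0,X_i]\,\Pi_1\big),
\end{equation*}
so that averaging the two equivalent expressions reconstructs $\re\tau(J\subm{conv}{i}^S\,\Pi_1) = \re\tau(\mathsf{O})$. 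The main technical hurdle here is the bookkeeping needed to verify at each step of the cyclic permutation that one factor lies in $\B_1^\tau$ and the other in $\B_\infty^\tau$, so that Lemma~\ref{prop:cycl of tau} legitimately applies despite the unbounded factors $X_i\su{D}$, $X_i\su{OD}$ appearing in the intermediate manipulations of \eqref{eqn:K and beyond}. Once this is secured, combining the orbital and rotational contributions delivers \eqref{eqn:sigmapropersplittinconvandrot} and completes the proof.
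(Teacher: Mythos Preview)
Your proposal is correct and follows essentially the same route as the paper's proof: reduce to $\re\tau(J\subm{prop}{i}^S\,\Pi_1)$ via the remainder bounds of Section~\ref{sec:Proofsigma}, split into $\mathsf{O}+X_i\mathsf{R}$, treat $X_i\mathsf{R}$ via Corollary~\ref{cor:tau R=0} and Proposition~\ref{prop:AXi}, and treat $\mathsf{O}$ via the decomposition \eqref{eqn:K and beyond} together with Proposition~\ref{prop:Extrachanged}. The identification $\re\tau(\mathsf{O})=\tau(J\subm{conv}{i}^S\,\Pi_1)$ that you flag as the delicate step is exactly the content of Lemma~\ref{Lem:reduce_Jconv}; the paper's proof of that lemma uses precisely your adjoint-plus-cyclicity idea, with the off-diagonality $\Pi_1=\Pi_1\Pi_0+\Pi_0\Pi_1$ as the structural input that makes each cyclic permutation land in $\B_1^\tau\cdot\B_\infty^\tau$, and also needs the companion remainder bound for $J\subm{conv}{i}^S\Pi_r^\eps$ (Proposition~\ref{cor:welldef sigmaij}) to conclude $\sigma\subm{conv}{ij}^S=\tau(J\subm{conv}{i}^S\,\Pi_1)$.
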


The proof of the above Theorem is postponed to Section \ref{sec:Proof}.

\goodbreak

\begin{remark} Some comments about the above result.
\begin{enumerate}[label=(\roman*), ref=(\roman*)]
\item \label{rem: splittings for sigma}
Notice that one can rewrite the above formula for the proper $S$-conductivity $\sigma\subm{prop}{ij}^S$, summing the two contributions $\sigma\subm{conv}{ij}^S$ and $\sigma\subm{rot}{ij}^S$ as follows:
\begin{align}
\label{eq: per sigma}
\phantom{LLl}\sigma\subm{prop}{ij}^S&=\frac{1}{\abs{\FC_1}}\re \Tr \Big( \chi_1\(\iu \Pi_0\big[[X_i,\Pi_0] S,  [X_j,\Pi_0] \big] +\iu [H_0,X_i\su{D}] S\su{OD} \Pi_1  \) \chi_1\Big)  \\
\label{eq: nonper sigma}
& \quad + \frac{1}{\abs{\FC_1}}\re \Tr  \Big(\chi_1 \iu X_i\su{D}  [H_0, S ]  \Pi_1   \chi_1 \Big).
\end{align}
While \eqref{eqn:sigmapropersplittinconvandrot} emphasizes the splitting between the drift contribution coming from the center-of-mass momentum and the one resulting from the spin rotation, the latter decomposition isolates the contribution coming from a periodic operator, in \eqref{eq: per sigma}, and the one deriving from a non-periodic operator, in \eqref{eq: nonper sigma}.
\item \label{rem:reality of sigma}
The real part is needed in both \eqref{eqn:sigmaT} and \eqref{eqn:sigmaR}, even if on $\Ran(\chi_1)$ one has that $J\subm{prop}{i}^S\, \Pi^\eps  =  {\(J\subm{prop}{i}^S\)}^*\,\Pi^\eps.$
Indeed, notice that
\begin{align*}
&\tau(J\subm{prop}{i}^S\, \Pi^\eps)=\tau(\Pi^\eps \,J\subm{prop}{i}^S\, \Pi^\eps)+\tau({\Pi^\eps}^\perp J\subm{prop}{i}^S\, \Pi^\eps),\cr
&\tau(\Pi^\eps \,J\subm{prop}{i}^S\, \Pi^\eps)\in \R,
\end{align*}
where $\tau({\Pi^\eps}^\perp J\subm{prop}{i}^S\, \Pi_\eps)$ is generally not zero, since the argument of the trace per unit volume is not periodic, due to $[H_0,S]\neq 0$, and thus we are not allowed to use cyclicity. 
On the other hand, if $[H_0,S]=0$ then $\tau({\Pi^\eps}^\perp J\subm{prop}{i}^S\, \Pi^\eps)=0$ and therefore $\tau(J\subm{prop}{i}^S\, \Pi^\eps)$ is automatically real. Moreover, for systems with a fermionic time-reversal symmetry $\Theta$ such that $\Theta S \Theta^{-1} = -S$, the number  $\tau(J\subm{prop}{i}^S\, \Pi_1)$ is real, so the real part is redundant.
\end{enumerate}
\end{remark}

It is worth to investigate how  the contributions to the proper $S$-conductivity, appearing in \eqref{eqn:sigmapropersplittinconvandrot}, behave under a change of primitive cell.
\begin{proposition}[Unit Cell Consistency of the $S$-conductivity]
\label{prop:UCC}
Under the hypotheses of Theorem~\ref{thm:notKubo}, we have that
\begin{enumerate}[label=(\roman*), ref=(\roman*)]
\item \label{it:UCC conv sigma} $\sigma\subm{conv}{ij}^S$ satisfies UCC.
\item \label{it:UCC rot sigma} If, in addition, the model enjoys a discrete rotational symmetry satisfying the hypotheses of Proposition~\ref{prop:van mesopointwise spin torque}, then $\sigma\subm{rot}{ij}^S$ satisfies UCC.
\end{enumerate}
\end{proposition}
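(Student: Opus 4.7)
The plan is to treat the two parts separately, both times leveraging the machinery from Section~\ref{sec:TPUV} on how $\tau$ transforms under a change of primitive cell. For part~\ref{it:UCC conv sigma}, I would observe that every operator appearing inside the traces in formula~\eqref{eqn:sigmaT} is $\Gamma$-periodic. The Chern-like contribution $\iu\Pi_0\big[[X_i,\Pi_0]S,[X_j,\Pi_0]\big]$ is manifestly periodic because $\overline{[X_i,\Pi_0]}$ and $\overline{[X_j,\Pi_0]}$ are so by Corollary~\ref{cor:der Pi0 and H0}, while $\Pi_0$ and $S$ are periodic by assumption. The off-diagonal part $X_i\su{OD}=\Pi_0 X_i(\Id-\Pi_0)+(\Id-\Pi_0)X_i\Pi_0$ is itself periodic, since a direct computation using $[T_\gamma,X_i]=-\gamma_i T_\gamma$ shows that $T_\gamma X_i\su{OD}T_\gamma^{*}=X_i\su{OD}$ (the would-be $\gamma_i$ shift is annihilated by $\Pi_0(\Id-\Pi_0)=0$); analogously $[H_0,X_i\su{D}]=\Pi_0[H_0,X_i]\Pi_0+(\Id-\Pi_0)[H_0,X_i](\Id-\Pi_0)$ is periodic even though $X_i\su{D}$ alone is not. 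Combined with the periodicity of $S\su{OD}$ and of $\Pi_1$ (from Proposition~\ref{prop:Pi1}\ref{item:propSA3}), this shows that all the operators under $\chi_1(\,\cdot\,)\chi_1$ in~\eqref{eqn:sigmaT} are periodic; Proposition~\ref{prop:Extrachanged} guarantees that they are trace class on compact sets, so the cell-independence of $\tau$ for periodic operators (Proposition~\ref{cor:tauindprcell}\ref{it: tau indep for Aperiodic}) yields UCC for $\sigma\subm{conv}{ij}^S$.

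Part~\ref{it:UCC rot sigma} is more delicate, since $\sigma\subm{rot}{ij}^S=\re\tau(\iu X_i[H_0,S]\Pi_1)$ contains the unbounded non-periodic factor $X_i$ multiplied by the periodic operator $\mathsf{R}=\iu[H_0,S]\Pi_1$. The starting point is Corollary~\ref{cor:tau R=0}, $\tau(\mathsf{R})=0$, which via Proposition~\ref{prop:AXi}\ref{it:AXi exh indep} already makes $\tau(X_i\mathsf{R})$ independent of the origin and of the symmetric exhaustion associated with a \emph{fixed} primitive cell; what has to be shown is that passing to a different basis for $\Gamma$, and hence to a different primitive cell $\widetilde{\FC}_1$, does not alter the value. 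My plan is to write
\[
\tau(X_i\mathsf{R})=\frac{1}{|\FC_1|}\Tr(\chi_1 X_i\mathsf{R}\chi_1),\qquad
\widetilde{\tau}(X_i\mathsf{R})=\frac{1}{|\widetilde{\FC}_1|}\Tr(\widetilde{\chi}_1 X_i\mathsf{R}\widetilde{\chi}_1),
\]
both legitimate by Proposition~\ref{prop:AXi}\ref{it:AXi exh dep}. Using the periodicity of $\mathsf{R}$ together with $[T_\gamma,X_i]=-\gamma_i T_\gamma$, the second trace can be rearranged as a sum, indexed by the lattice vectors $\gamma$ such that $\mathrm{T}_\gamma\FC_1$ contributes to $\widetilde{\FC}_L$, of terms of the form $\Tr(\chi_1 X_i\mathsf{R}\chi_1)+\gamma_i\Tr(\chi_1\mathsf{R}\chi_1)$, modulo boundary corrections which are negligible in the exhaustion limit.

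The discrete rotational symmetry invoked through Proposition~\ref{prop:van mesopointwise spin torque} should supply the stronger \emph{local} vanishing $\Tr(\chi_1\mathsf{R}\chi_1)=0$ and analogous pointwise identities for rotated cells, so that the $\gamma_i$-weighted corrections drop out and one obtains $\widetilde{\tau}(X_i\mathsf{R})=\tau(X_i\mathsf{R})$. The main obstacle will be executing this last step cleanly: the naive tiling of $\widetilde{\FC}_1$ by translates of $\FC_1$ is imperfect at the boundary of the new cell, and one needs the rotational symmetry precisely to pair contributions coming from symmetry-related lattice vectors, thereby promoting the global vanishing of Corollary~\ref{cor:tau R=0} to the pointwise vanishing required in the block-trace rearrangement above. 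Combined with part~\ref{it:UCC conv sigma}, this would then give UCC for the full proper $S$-conductivity $\sigma\subm{prop}{ij}^S=\sigma\subm{conv}{ij}^S+\sigma\subm{rot}{ij}^S$.
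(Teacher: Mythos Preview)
Your treatment of part~\ref{it:UCC conv sigma} is correct and matches the paper's argument: every operator in~\eqref{eqn:sigmaT} is periodic, so Proposition~\ref{cor:tauindprcell}\ref{it: tau indep for Aperiodic} applies directly.

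For part~\ref{it:UCC rot sigma}, however, your plan has a genuine gap. First, the ``stronger local vanishing'' $\Tr(\chi_1\mathsf{R}\chi_1)=0$ that you attribute to the rotational symmetry is in fact already a consequence of Corollary~\ref{cor:tau R=0} together with~\eqref{eqn:tauPeriodic1}, with no symmetry needed; so the symmetry must be supplying something finer. Second, your exhaustion argument --- tiling $\widetilde{\FC}_L$ by translates of $\FC_1$ and discarding boundary corrections --- does not obviously close: the mismatch region has volume $O(L^{d-1})$, but on it the factor $X_i$ is of size $O(L)$, so after dividing by $|\widetilde{\FC}_L|\sim L^d$ the boundary contribution is $O(1)$ rather than $o(1)$. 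You acknowledge this obstacle yourself, and it is real.

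The paper avoids the exhaustion altogether by working at the level of a single primitive cell. One decomposes $\FC_1=\bigsqcup_{\gamma\in I}\mathrm{T}_\gamma P_\gamma$ and $\widetilde{\FC}_1=\bigsqcup_{\gamma\in I}P_\gamma$ exactly (no boundary terms), and Proposition~\ref{cor:tauindprcell}\ref{it: tau dep for X_iA} shows that $\tau(X_i\mathsf{R})=\widetilde{\tau}(X_i\mathsf{R})$ provided $\Tr(\chi_{P_\gamma}\mathsf{R}\chi_{P_\gamma})=0$ for every piece $P_\gamma$. This piece-wise vanishing is precisely what the discrete rotational symmetry delivers, via Proposition~\ref{prop:van mesopointwise spin torque}\ref{item:pointwisemesospintorq}. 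In other words, the correct role of the symmetry is to upgrade the global identity $\tau(\mathsf{R})=0$ to the finer statement that the trace of $\mathsf{R}$ vanishes on each $P_\gamma$ separately; once you have that, Proposition~\ref{cor:tauindprcell}\ref{it: tau dep for X_iA} finishes the argument in one line.
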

\begin{proof}
{\it \ref{it:UCC conv sigma}} Since all operators involved in the trace per unit volume computing $\sigma\subm{conv}{ij}^S$ are periodic, Proposition~\ref{cor:tauindprcell}\ref{it: tau indep for Aperiodic} implies the thesis.
{\it \ref{it:UCC rot sigma}} By applying Proposition~\ref{prop:van mesopointwise spin torque}\ref{item:pointwisemesospintorq} along with Proposition~\ref{cor:tauindprcell}\ref{it: tau dep for X_iA}, the conclusion follows.
 \end{proof}
The next Proposition shows that in some discrete models with discrete rotational symmetry, one has that $\sigma\subm{rot}{ij}^S=0$, and hence the choice between $J\subm{prop}{i}^S$ and $J\subm{conv}{i}^S$ becomes immaterial. Remarkably, the Kane--Mele model is in this class. 
\begin{proposition}[Equality of conventional and proper $S$-conductivity]
\label{prop:Srotsigmazeroundersym}
Let $H_0$ be a \emph{discrete Hamiltonian with finite range hopping amplitudes} and $S$ be as in Definition~\ref{def:$S$-current}. Assume that the model satisfies the hypotheses of Proposition~\ref{prop:van mesopointwise spin torque} and $\mathrm{Rank}{\chi_{P_\gamma}}=1$, where $\set{P_\gamma}_{\gamma\in I}\subset \X$ is the family of subsets defined in \eqref{eq:decomps C_1 and tildeC_1}. Then
\[
\sigma\subm{rot}{ij}^S=0\quad\text{or equivalently}\quad \sigma\subm{prop}{ij}^S=\sigma\subm{conv}{ij}^S.
\]
\end{proposition}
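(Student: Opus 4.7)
By Theorem~\ref{thm:notKubo}, the quantity to be shown to vanish is
\[
\sigma\subm{rot}{ij}^S \;=\; \frac{1}{|\mathcal{C}_1|}\,\re\Tr\bigl(\chi_1\, \iu\, X_i\,[H_0,S]\,\Pi_1\, \chi_1\bigr) \;=\; \frac{1}{|\mathcal{C}_1|}\,\re\Tr(\chi_1\, X_i\, \mathsf{R}\, \chi_1),
\]
where $\mathsf{R}=\iu[H_0,S]\Pi_1$. In the discrete finite-range setting the commutator $[H_0,S]$ is bounded and periodic, so $\mathsf{R}\in\mathcal{B}_\infty^\tau$; moreover $\tau(\mathsf{R})=0$ by Corollary~\ref{cor:tau R=0}, which by Proposition~\ref{prop:AXi}\ref{it:AXi exh indep} ensures that $\tau(X_i\mathsf{R})$ is independent of the choice of origin. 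The strategy is to decompose the fundamental cell into the orbits $\{P_\gamma\}_{\gamma\in I}$ of the discrete rotational symmetry (the same decomposition already underlying Proposition~\ref{prop:van mesopointwise spin torque}) and to exploit the rank-one hypothesis to pull $X_i$ out of the trace, orbit by orbit, as a scalar.

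First, write $\chi_1=\sum_{\gamma\in I}\chi_{P_\gamma}$. Since $X_i$ is diagonal in the position basis, the assumption $\mathrm{Rank}(\chi_{P_\gamma})=1$ (which in the discrete setting forces $P_\gamma=\{x_\gamma\}$ for a single lattice point $x_\gamma$) yields the pointwise identity
\[
X_i\,\chi_{P_\gamma}=(x_\gamma)_i\,\chi_{P_\gamma}.
\]
Combining this with $[\chi_1,X_i]=0$ and $\chi_{P_\gamma}\chi_1=\chi_{P_\gamma}$ one obtains
\[
\Tr(\chi_1\,X_i\,\mathsf{R}\,\chi_1)=\sum_{\gamma\in I}(x_\gamma)_i\,\Tr(\chi_{P_\gamma}\,\mathsf{R}\,\chi_1)=\sum_{\gamma\in I}(x_\gamma)_i\,\Tr(\chi_{P_\gamma}\,\mathsf{R}\,\chi_{P_\gamma}),
\]
where in the second equality one uses that $\chi_{P_\gamma}\mathsf{R}\chi_1$ is finite-rank (its range sits inside $\Ran\chi_{P_\gamma}$) so that ordinary cyclicity of the trace applies.

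The final step is to invoke Proposition~\ref{prop:van mesopointwise spin torque}, whose hypotheses are in force by assumption. That result supplies the \emph{pointwise} vanishing of the meso-spin-torque expectation, namely $\Tr(\chi_{P_\gamma}\,\mathsf{R}\,\chi_{P_\gamma})=0$ for every $\gamma\in I$, which expresses that the infinitesimal spin rotation produced by $H_0$ averages to zero on each single orbit of the rotational symmetry. Inserting this in the identity above gives $\Tr(\chi_1 X_i\mathsf{R}\chi_1)=0$, hence $\sigma\subm{rot}{ij}^S=0$, and then formula \eqref{eqn:sigmapropersplittinconvandrot} immediately yields $\sigma\subm{prop}{ij}^S=\sigma\subm{conv}{ij}^S$.

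The main technical point to watch out for is the interplay between the non--trace-class operator $\mathsf{R}$ and the ordinary trace: one must justify the reduction $\Tr(\chi_{P_\gamma}\mathsf{R}\chi_1)=\Tr(\chi_{P_\gamma}\mathsf{R}\chi_{P_\gamma})$, which is exactly where the rank-one assumption is indispensable, since it bounds $\chi_{P_\gamma}\mathsf{R}\chi_1$ by a finite-dimensional operator on $\Ran\chi_{P_\gamma}$ and thereby legitimises the trace manipulations. Once this bookkeeping is in place, the proof is an algebraic reorganisation of the content already contained in Proposition~\ref{prop:van mesopointwise spin torque}.
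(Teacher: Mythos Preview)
Your argument is correct and is essentially the paper's own proof: decompose the cell as $\sum_{\gamma\in I}\chi_{P_\gamma}$, use the rank-one hypothesis to turn $X_i$ into a scalar on each piece, and then kill each summand via Proposition~\ref{prop:van mesopointwise spin torque}\ref{item:pointwisemesospintorq}. One small expository slip: the reduction $\Tr(\chi_{P_\gamma}\mathsf{R}\chi_1)=\Tr(\chi_{P_\gamma}\mathsf{R}\chi_{P_\gamma})$ follows already from $\chi_{P_\gamma}$ being a finite-rank subprojection of $\chi_1$ and ordinary cyclicity (write $\Tr(\chi_{P_\gamma}\mathsf{R}\chi_1)=\Tr(\chi_1\chi_{P_\gamma}\mathsf{R})=\Tr(\chi_{P_\gamma}^2\mathsf{R})$); the rank-one hypothesis is \emph{only} needed to ensure $X_i\chi_{P_\gamma}=(x_\gamma)_i\chi_{P_\gamma}$, i.e.\ to extract $X_i$ as a scalar, which is exactly how the paper uses it.
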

\begin{proof}
By direct computation, since there exists $\lambda_{i,\gamma}\in\R$ such that $X_i\chi_{P_\gamma}=\lambda_{i,\gamma}\chi_{P_\gamma}$, we have that
\begin{align*}
\abs{\FC_1}\sigma\subm{rot}{ij}^S&=\re\Tr  \Big(\chi_1 \iu X_i  [H_0, S ]  \Pi_1\chi_1 \Big)=\sum_{\gamma\in I}\re\Tr(\chi_{P_\gamma}\, X_i  \iu[H_0, S ]  \Pi_1\, \chi_{P_\gamma})\cr
&=\sum_{\gamma\in I}\lambda_{i,\gamma}\re\Tr(\chi_{P_\gamma}\,\iu   [H_0, S ]  \Pi_1\, \chi_{P_\gamma})=0,
\end{align*}
because by Proposition~\ref{prop:van mesopointwise spin torque}\ref{item:pointwisemesospintorq} $\Tr(\chi_{P_\gamma}\,   \iu[H_0, S ]  \Pi_1\, \chi_{P_\gamma})=0$ for every $\gamma\in I$.
\end{proof}

\bigskip
\goodbreak

\subsection{\texorpdfstring{When $S$ is (approximately) conserved}{When S is a conserved quantity}}
\label{ssec: S conserved}
The computation of the linear response coefficient $\sigma_{ij}^S$ simplifies considerably if we assume that $S$ is a conserved quantity of the system, namely that
\begin{equation} \label{Sconserved} [H_0, S] = 0. \end{equation}
Under this assumption, then $[\Pi_0, S]=0$ as well, since $\Pi_0$ is a spectral projection associated to $H_0$,  
and thus $S$ is diagonal in the decomposition induced by $\Pi_0$.

If \eqref{Sconserved} holds, then $J_i^S:=J\subm{prop}{i}^S=J\subm{conv}{i}^S= \iu \overline{[H_0, X_i]} S$ is in $\PO(\Df,\Hf)$ by Corollary~\ref{cor:der Pi0 and H0}\ref{item:[H0,X_j]} and the hypothesis $S\in\PO(\Df)$. Hence, since $\Pi^\eps\in\PO(\Hf,\Df)$ by Proposition~\ref{prop:Pi1}\ref{item:propSA3}, we have that $J_i^S\Pi^\eps\in \PO(\Hf)$ and furthermore applying Proposition~\ref{cor:welldef sigmaijeps}\ref{item:welldef sigmaijeps 1}, we deduce that $J_i^S\Pi^\eps $ is $\tau$-class. Thus, the trace per unit volume of $J_i^S\Pi^\eps $ is well-defined and only the Chern-like term contributes to it. Indeed, by Proposition~\ref{prop:Extrachanged} the extra term $\mathsf{E}_\ell$ does not contribute for $\ell\in\{2,4\}$ and the next Lemma shows that $\tau(\mathsf{E}_1)=0$. Obviously, $\mathsf{E}_3=0=\mathsf{R}$ whenever \eqref{Sconserved} holds.

\begin{lemma} \label{lemma:Extra2}
Under Assumption~\ref{assum:1} and the hypotheses on $S$ in Definition~\ref{def:$S$-current}, assume further that 
$[H_0,S] =0$. Then, 
\[
\tau(\mathsf{E}_1)=0\quad\text{ and }\quad\mathsf{E}_3=0=\mathsf{R}
\]
where $\mathsf{E}_1$, $\mathsf{E}_3$ and $\mathsf{R}$ are defined in \eqref{eqn: defn E and C} and \eqref{eqn:defn T and R}. 
\end{lemma}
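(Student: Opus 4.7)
The three assertions decouple cleanly, and I would treat them in order of difficulty.

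The statements $\mathsf{E}_3 = 0$ and $\mathsf{R} = 0$ are immediate from the definitions in \eqref{eqn: defn E and C} and \eqref{eqn:defn T and R}. Both operators contain the factor $[H_0,S]$ (resp.\ $[S,H_0]$), which vanishes identically by the standing hypothesis $[H_0,S]=0$. No further argument is required; in particular the multiplication by the (potentially unbounded) operators $X_i\su{OD}$ and $\Pi_1$ is harmless, since the commutator is the zero operator.

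For the claim $\tau(\mathsf{E}_1)=0$, the plan is to quote the explicit formula
\[
\tau(\mathsf{E}_1)=\iu\,\tau\bigl([H_0,X_i\su{D}]\,S\su{OD}\,\Pi_1\bigr)
\]
already proved in Proposition~\ref{prop:Extrachanged} (compare~\eqref{eqn: tau E1 E3}) and show that under $[H_0,S]=0$ the off-diagonal part $S\su{OD}$ is identically zero. To get this, I would argue that $[H_0,S]=0$ implies $[\Pi_0,S]=0$, since $\Pi_0=\chi_{(-\infty,\mu)}(H_0)$ is a bounded Borel function of $H_0$ and $S$ is a bounded periodic operator (so no domain issues arise in applying functional calculus). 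Once $S$ commutes with both $\Pi_0$ and $\Pi_0^\perp$, one has
\[
S\su{OD} = \Pi_0\,S\,\Pi_0^\perp + \Pi_0^\perp\,S\,\Pi_0 = S\,\Pi_0\,\Pi_0^\perp + S\,\Pi_0^\perp\,\Pi_0 = 0,
\]
and substitution in the displayed formula gives $\tau(\mathsf{E}_1)=0$.

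No step should be a genuine obstacle: the only point that might warrant a brief justification is the passage $[H_0,S]=0\Rightarrow[\Pi_0,S]=0$, which could alternatively be read off directly from the Riesz formula~\eqref{eqn:Riesz}, since $(H_0-z\,\Id)^{-1}$ commutes with $S$ whenever $H_0$ does. With this observation in hand, the whole proof reduces to two lines after invoking Proposition~\ref{prop:Extrachanged}.
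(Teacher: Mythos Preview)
Your proof is correct and follows essentially the same approach as the paper: both argue that $\mathsf{E}_3$ and $\mathsf{R}$ vanish trivially from the factor $[H_0,S]=0$, and both reduce $\tau(\mathsf{E}_1)=0$ to the identity $S\su{OD}=0$ via the formula in Proposition~\ref{prop:Extrachanged}. The paper in fact states the implication $[H_0,S]=0\Rightarrow[\Pi_0,S]=0$ just before the lemma, so your additional justification via the Riesz formula is welcome but not strictly needed.
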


The proof of Lemma~\ref{lemma:Extra2} is also deferred to Section~\ref{sec:ProofKE}, but it is easily seen to imply the following

\goodbreak

\begin{theorem}[$S$-conductivity in the $S$ conserved case] \label{thm:Kubo}\ \\
Let $H^\eps = H_0 - \epsi X_j$ be acting in $L^2(\X) \otimes \C^N$, with $H_0$ and $\Pi_0$ as in Assumption~\ref{assum:1}. 
Let $\Pi^\eps$ be the NEASS defined in Section~\ref{sec:adiabatic} and $J_i^S$ be as in Definition~\ref{def:$S$-current}. Assume moreover that  $[H_0,S]=0$. Then the $S$-conductivity is 
\begin{align*}
\sigma_{ij}^S  & =   
\frac{\iu}{\abs{\FC_1}} \Tr \Big(\chi_1 \, S  \Pi_0\big[\, [X_i,\Pi_0] , [X_j,\Pi_0]\, \big]  \,\chi_1  \Big) \\
& = \frac{\iu}{(2\pi)^d} \int_{\mathbb{B}^d} \di k \, \Tr_{\Hf} \left( (\Id \otimes s)\, \Pi_0(k) \left[ \partial_{k_j} \Pi_0(k), \partial_{k_i} \Pi_0(k) \right] \right).
\end{align*}
\end{theorem}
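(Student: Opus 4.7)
The strategy is to specialise the general formula of Theorem~\ref{thm:notKubo} under the commutation assumption $[H_0,S]=0$, reducing everything to the single Chern-like term, and then to rewrite that term in Bloch--Floquet--Zak representation by means of Proposition~\ref{prop:charge-tauPeriodic}\ref{item:per+traceclassfibr} and Proposition~\ref{prop:derivation of P spaces}.

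The first step is to kill the non-Chern contributions. Because $[H_0,S]=0$, the operator $\mathsf{R}=\iu[H_0,S]\Pi_1$ is identically zero, hence $\sigma\subm{rot}{ij}^S=0$ and Theorem~\ref{thm:notKubo} gives $\sigma\subm{prop}{ij}^S=\sigma\subm{conv}{ij}^S$. Moreover, in formula \eqref{eqn:sigmaT} the third extra term $\iu X_i\su{OD}[S,H_0]\Pi_1$ vanishes trivially, while Lemma~\ref{lemma:Extra2} ensures $\tau(\mathsf{E}_1)=0$. Hence only the Chern-like contribution survives, and
\[
\sigma_{ij}^S \;=\; \frac{1}{\abs{\FC_1}}\,\re\,\Tr\!\Big(\chi_1\;\iu\,\Pi_0\big[[X_i,\Pi_0]\,S,\,[X_j,\Pi_0]\big]\,\chi_1\Big).
\]

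Next I would move $S$ through the commutators. By functional calculus, $[H_0,S]=0$ implies $[\Pi_0,S]=0$; moreover, since $S=\Id_{L^2(\X)}\otimes s$ acts trivially on the spatial factor, one has $[X_i,S]=[X_j,S]=0$. Combining these two facts via the Jacobi identity (or by direct computation) yields $[[X_i,\Pi_0],S]=0=[[X_j,\Pi_0],S]$, so that $S$ commutes with each factor inside the double commutator. Therefore
\[
\iu\,\Pi_0\big[[X_i,\Pi_0]\,S,\,[X_j,\Pi_0]\big]
\;=\;\iu\,S\,\Pi_0\,\big[[X_i,\Pi_0],[X_j,\Pi_0]\big],
\]
which already matches the first expression in the theorem; the real part is superfluous, as explained in Remark~\ref{rem:reality of sigma}: when $[H_0,S]=0$ the operator $J_i^S\Pi^\eps$ is periodic and in $\B_1^\tau$, and a short computation using conditional cyclicity of $\tau$ and the self-adjointness of $S\Pi_0$ shows that the trace is automatically real.

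Finally, to obtain the momentum-space expression I would invoke Proposition~\ref{prop:charge-tauPeriodic}\ref{item:per+traceclassfibr}, which gives
\[
\frac{1}{\abs{\FC_1}}\Tr(\chi_1 A\chi_1)=\frac{1}{\abs{\FC_1}\,\abs{\mathbb{B}^d}}\int_{\mathbb{B}^d}\di k\;\Tr_{\Hf}(A(k))=\frac{1}{(2\pi)^d}\int_{\mathbb{B}^d}\di k\;\Tr_{\Hf}(A(k))
\]
for any periodic $A\in\B_1^\tau$ whose fibres are uniformly trace class. Proposition~\ref{prop:derivation of P spaces} then identifies the fibres: $\overline{[X_j,\Pi_0]}(k)=-\iu\,\partial_{k_j}\Pi_0(k)$, so that
\[
\big[\overline{[X_i,\Pi_0]},\overline{[X_j,\Pi_0]}\big](k)=-\big[\partial_{k_i}\Pi_0(k),\partial_{k_j}\Pi_0(k)\big]=\big[\partial_{k_j}\Pi_0(k),\partial_{k_i}\Pi_0(k)\big],
\]
and substituting gives exactly the stated Brillouin-zone integral.

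The only delicate checks are bookkeeping ones: verifying that $S\Pi_0[[X_i,\Pi_0],[X_j,\Pi_0]]$ belongs to $\B_1^\tau$ with uniformly trace-class fibres (which follows from Corollary~\ref{cor:der Pi0 and H0}, $\Pi_0\in\PO(\Hf,\Df)$ having finite rank by Assumption~\ref{item:gap}, and $S\in\PO(\Df)$), and confirming the sign/reality conventions in passing from $\re\tau(\cdots)$ to the unadorned fibred formula. Neither of these is expected to be a genuine obstacle; the heart of the argument is the algebraic collapse of the extra terms described above.
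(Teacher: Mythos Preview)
Your proof is correct and follows essentially the same route as the paper's: kill $\mathsf{R}$, $\mathsf{E}_1$, $\mathsf{E}_3$ via Lemma~\ref{lemma:Extra2}, reduce to the Chern-like term, then pass to fibres via Proposition~\ref{prop:charge-tauPeriodic}\ref{item:per+traceclassfibr} and Proposition~\ref{prop:derivation of P spaces}. One harmless slip: Proposition~\ref{prop:derivation of P spaces} gives $\overline{[X_j,\Pi_0]}(k)=+\iu\,\partial_{k_j}\Pi_0(k)$ (you wrote $-\iu$), but since the sign enters quadratically in the double commutator your final formula is unaffected.
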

\begin{proof}
In view of Lemma~\ref{lemma:Extra2}, the extra terms $\mathsf{E}_1$ and $\mathsf{E}_3$, and the $S$-rotation part $\sigma\subm{rot}{ij}^S$ do not contribute to the trace per unit volume of $J_i^S\, \Pi_1$. Therefore, using Proposition~\ref{prop:charge-tauPeriodic}\ref{item:per+traceclassfibr} we are going to compute the $k$-space representation of the trace of ${\mathsf{C}} \chi_1 =\iu S \Pi_0 \big[ [X_i,\Pi_0],[X_j,\Pi_0] \big] \chi_1$. 
To this end, it suffices to notice that the fiber operator associated to $\overline{[X_j,\Pi_0]}$ in the Bloch--Floquet--Zak representation is given by $\iu \partial_{k_j} \Pi_0(k)$ (Proposition~\ref{prop:derivation of P spaces}) and that $|\mathcal{C}_1| \, |\mathbb{B}^d| = (2\pi)^d$.
\end{proof}

This Theorem applies in particular to the transverse charge current in quantum Hall systems ($S=\Id_\Hi$), and to the transverse spin current in quantum spin Hall systems ($S=\Id_{L^2(\X)} \otimes s_z$)  
whenever the $z$-component of the spin is conserved. In particular, in the latter case we recover the formula for the spin conductivity proposed in \cite{CulcerYaoNiu05, ShindouImura05}, which was derived assuming that the unperturbed Hamiltonian $H_0$ has an identically degenerate Bloch band, where the degeneracy comes from the spin degrees of freedom. So in that model effectively $(\Pi_0 H_0 \Pi_0)(k) = E_0(k) \Id_{L^2(\X)} \otimes \Id_{\C^2}$, and \eqref{Sconserved} is in particular satisfied after projection to the relevant spectral subspace. Our argument used only \eqref{Sconserved} and no spectral assumption (other than the gap condition) on the Hamiltonian.

\begin{remark}[Spin conductivity and spin-Chern number]
\label{rem:Spin chern number}
Let $S=\Id_{L^2(\X)} \otimes s_z$ with $s_z$ a spin operator for non-integer spin $r$, \ie  with spectrum $\{ -r, -r+1,\ldots, r-1, r\}$, acting on $\C^N$ with $N=2r+1$ (\eg half the third Pauli matrix $\sigma_z$ for $r=\frac12$ and $N=2$). Denote by $s_z = \sum_{\ell=0}^{2r} ( \ell  -r ) p_\ell$ its spectral decomposition.

Then the commutation relation $[\Pi_0,S] = 0$ implies that $\Pi_0$ admits a splitting in the decomposition induced by $S$:
\[ \Pi_0 =  \sum_{\ell=0}^{2r} \Pi_0^{(\ell)} :=   \sum_{\ell=0}^{2r} \Pi_0 (\Id\otimes p_\ell)  \,. \]
The formula for the $S$-conductivity $\sigma_{ij}^S $ in Theorem~\ref{thm:Kubo} simplifies then to
\begin{align}\label{def:SpinChernNumber} 
\sigma_{ij}^S =&\frac{\iu}{\abs{\FC_1}} \sum_{\ell=0}^{2r} (\ell-r) \Tr \Big(\chi_1 \, \Pi_0^{(\ell)} \left[ \big[X_i,\Pi_0^{(\ell)} \big] , \big[X_j,\Pi_0^{(\ell)} \big] \right] \,\chi_1\Big)  =: \frac{1}{(2\pi)^{d-1}} \mbox{ $S$-Chern$(\Pi_0)_{ij}$}.
\end{align}
This {\em spin-Chern number} $S$-Chern$(\Pi_0)_{ij}$,  proposed in \cite{ShengWengShengHaldane2006} and intrinsically defined in \cite{Prodan2009}, is in general a half integer.
\footnote{The normalization we use here agrees with \cite{Prodan2009} and with the most recent physics literature, but differs by a factor $2$ from the original formula in \cite{ShengWengShengHaldane2006}.
}\  
It becomes an integer if the system enjoys time-reversal symmetry.
Even in time-reversal invariant systems it can be different from zero while the {\em Chern number }
\[
\mbox{Chern}(\Pi_0) :=  {\frac{ \iu \, (2\pi)^{d-1}}{\abs{\FC_1}}} \Tr \Big(\chi_1 \, \Pi_0  \left[ \big[X_i,\Pi_0  \big] , \big[X_j,\Pi_0  \big] \right] \,\chi_1\Big)  
\]
necessarily vanishes.
Our approach to spin transport shows then that for $d=2$ the bulk spin Hall conductivity 
(measured in units of $\frac{e}{2\pi} \equiv \frac{1}{2\pi}$) equals the \emph{spin-Chern number}, 
as long as $[H_0, S]=0$. 
On the other hand, when $S=\Id_{L^2(\X)}\otimes \Id_{\C^N}$ and $d=2$ the integral in Theorem~\ref{thm:Kubo} computes,
up to a factor $1/2\pi$, the Chern number  Chern$(\Pi_0)$  of the family of projections $\set{\Pi_0(k)}_{k \in \R^2}$, implying the quantization of the Hall conductivity {measured in units of $\frac{e^2}{h} \equiv \frac{1}{2\pi}$} (see~\cite{Graf07} and references therein).
 
\hfill $\diamond$
\end{remark}

Abstracting from the previous Remark, we consider now any operator in the form $S=\Id_{L^2(\X)} \otimes s$,
with $s$ as in Definition \ref{def:$S$-current}. If $S$ is approximately conserved, \ie if  
\[
\lambda:=   \|[H_0,S]\|_{\PO( \Df,\Hf )} 
\]
is sufficiently small,
then one can still define a spin-Chern number related to $\Pi_0$ \cite{Prodan2009, Schulz-Baldes2013} and the $S$-conductivity is still approximately given by the spin-Chern number. To see this, let $s = \sum_{\ell =1}^k s_\ell \,p_\ell$ be the spectral representation of $s$  (we need no assumptions on the spectrum of $s$ here) and
\[
\widetilde H_0 :=  \sum_{\ell =1}^k (\Id\otimes p_\ell) H_0 (\Id\otimes p_\ell) \quad\mbox{and}\quad V:=   ( H_0 - \widetilde H_0) =   \sum_{\ell_1 \not= \ell_2}  (\Id\otimes p_{\ell_1})H_0 (\Id\otimes p_{\ell_2})\,.
\]
Then 
\[
H_0 =  \widetilde H_0 +  V
\]
where
\[ [\widetilde H_0,S]=0 \]
and 
\[
 \|V\|_{\PO( \Df,\Hf )}  =   \left\| \sum_{\ell_1 \not= \ell_2}   (\Id\otimes p_{\ell_1}) [H_0,  (\Id\otimes p_{\ell_2})]\right\|_{\PO( \Df,\Hf )} \leq \lambda\,C_s  \,,
\]
with a constant $C_s$ that depends only on $S$.

The spin-conserving Hamiltonian $\widetilde H_0$ is $H_0$-bounded with relative bound $\lambda C_s$. For   $\lambda <\frac{1}{C_s}$,    $\widetilde H_0$ is thus self-adjoint on the domain of $H_0$,  and for $\lambda$ small enough, by standard perturbation theory, the Fermi energy $\mu$ lies also in a gap of $\widetilde H_0$. Thus we can define the gapped Fermi  projection $\widetilde\Pi_0 := \chi_{(-\infty,\mu]}(\widetilde H_0)$ of $\widetilde H_0$ and, 
in analogy with Remark~\ref{rem:Spin chern number}, its associated 
spin-Chern number. More precisely, let 
\begin{equation}\label{tildepidef}
\widetilde \Pi_0^{(\ell)} := \widetilde \Pi_0(\Id\otimes p_\ell) \qquad\mbox{and thus}\quad \sum_{\ell=1}^{k} \widetilde \Pi_0^{(\ell)}  = \widetilde \Pi_0\,.
\end{equation}
It is straightforward to see that\footnote{\label{fn:B_1tau prop for pert Pi} By choosing $\lambda$ small enough, one has that 
$\norm{\widetilde \Pi_0(k)- \Pi_0(k)}<1$ and thus the two projections have the same rank for every $k$. Then, 
by the argument in the proof of Lemma \ref{lem:Pi in tau-class}, it follows that $\widetilde \Pi_0\in\B_1^\tau$ and hence $\widetilde \Pi_0^{(\ell)}\in\B_1^\tau$ as well.}  $\widetilde \Pi_0^{(\ell)}\in \PO(  \Hf,\Df )\cap \B_1^\tau$
  and thus the  Chern numbers 
 \[
\mbox{Chern}(\widetilde \Pi_0^{(\ell)} )_{ij}:= 
\frac{ \iu {(2\pi)^{d-1}}}{\abs{\FC_1}} \Tr \Big(\chi_1 \,  \widetilde \Pi_0^{(\ell)}  \left[ \big[X_i,\widetilde \Pi_0^{(\ell)} \big] , \big[X_j,\widetilde \Pi_0^{(\ell)} \big] \right] \,\chi_1\Big)  \in \Z 
\]
are well defined and integer.
 The $S$-Chern number of $ \Pi_0 $ is finally defined as
\begin{equation}\label{SChernDef}
\mbox{$S$-Chern}(  \Pi_0  )_{ij}:=  \sum_{\ell=1}^{k} s_\ell \cdot  \mbox{Chern}(\widetilde \Pi_0^{(\ell)} )_{ij}\,.
\end{equation}

We now show that the $S$-conductivity is given at leading order in $\lambda$ by  $\mbox{$S$-Chern}(  \Pi_0  )$,
a result which coherently complements the robustness of \emph{edge} spin currents proved by Schulz-Baldes 
\cite{Schulz-Baldes2013}. 

To formulate such a perturbative statement precisely, we slightly change perspective and notation and introduce a $\lambda$-dependent family of Hamiltonians: Let    $  H_0$ satisfy   Assumption~\ref{assum:1} and   $[  H_0, S] =0$ and  assume $V\in\PO( \Df,\Hf)$.  Then,  for $\lambda_0>0$ sufficiently small, it holds that  
\[
H_\lambda :=      H_0 +  \lambda V\,
\]
is self-adjoint on the domain of $H_0$ and has a spectral gap 
  at $\mu$ for each $\lambda\in [0,\lambda_0)$.
As before we consider the gapped Fermi  projection $ \Pi_\lambda := \chi_{(-\infty,\mu]}( H_\lambda)$ of $  H_\lambda$, put $  \Pi_0^{(\ell)} :=  \Pi_0(\Id\otimes p_\ell)$ and the associated Chern numbers  Chern$(  \Pi_0^{(\ell)} )_{ij}$. The $\lambda$-independent $S$-Chern number associated with $\Pi_\lambda$ is again 
\[
\mbox{$S$-Chern}(  \Pi_\lambda  )_{ij}:=  \sum_{\ell=1}^k s_\ell \cdot  \mbox{Chern}(  \Pi_0^{(\ell)} )_{ij} \equiv \mbox{$S$-Chern}(  \Pi_0  )_{ij}\,.
\]

\begin{proposition}\label{prop:robust spin chern}
Let $H_\lambda= H_0+\lambda V$, be a perturbation of a spin-commuting Hamiltonian $H_0$ as defined  above. Then 
  the $S$-conductivity $\sigma_{ij,\lambda}^S $ of $H_\lambda$ satisfies
\[
\sigma_{ij,\lambda}^S = \mbox{\rm$S$-Chern}(\Pi_0)_{ij} + \Or(\lambda)\,.
\]
\end{proposition}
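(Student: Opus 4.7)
The plan is to apply Theorem~\ref{thm:notKubo} to the Hamiltonian $H_\lambda = H_0 + \lambda V$ and to show that only the Chern-like piece of the resulting decomposition
\[ \sigma_{ij,\lambda}^S = \sigma\subm{conv}{ij,\lambda}^S + \sigma\subm{rot}{ij,\lambda}^S \]
survives at leading order in $\lambda$, while every contribution involving $S$-noncommutativity carries an explicit factor of $\lambda$. The key input is the identity $[H_\lambda, S] = \lambda\, [V, S]$, which is immediate from $[H_0, S] = 0$.

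As a preliminary step I would establish perturbative estimates on the objects appearing in Theorem~\ref{thm:notKubo}. The Riesz formula \eqref{eqn:Riesz} applied fiberwise to $H_\lambda(k)$, together with the gap condition and $V \in \PO(\Df, \Hf)$, yields smoothness of $\lambda \mapsto \Pi_\lambda$ in $\PO(\Hf, \Df)$, with $\norm{\Pi_\lambda - \Pi_0}_{\PO(\Hf, \Df)} = \Or(\lambda)$; an analogous contour argument applied to the Liouvillian inverse $\mathcal{I}_\lambda$ associated to $H_\lambda$ gives $\norm{\Pi_1^\lambda - \Pi_1^0}_{\PO(\Hf, \Df)} = \Or(\lambda)$ for the first-order NEASS correction $\Pi_1^\lambda = \mathcal{I}_\lambda(\overline{[X_j,\Pi_\lambda]})$. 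Since $[\Pi_0, S] = 0$ at $\lambda = 0$, these Lipschitz bounds imply $\norm{[\Pi_\lambda, S]}_{\PO(\Hf, \Df)} = \Or(\lambda)$ and, consequently, $\norm{S\su{OD}}_{\PO(\Hf, \Df)} = \Or(\lambda)$, where the off-diagonal decomposition is taken with respect to $\Pi_\lambda$.

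With these in hand, the three non-Chern contributions are all $\Or(\lambda)$. The rotation term \eqref{eqn:sigmaR} reads
\[
\sigma\subm{rot}{ij,\lambda}^S = \frac{1}{\abs{\FC_1}}\,\re \Tr\bigl(\chi_1\, \iu X_i\, [H_\lambda, S]\, \Pi_1^\lambda\, \chi_1\bigr) = \frac{\lambda}{\abs{\FC_1}}\,\re \Tr\bigl(\chi_1\, \iu X_i\, [V, S]\, \Pi_1^\lambda\, \chi_1\bigr),
\]
and the quantity after the $\lambda$ factor is finite and uniformly bounded in $\lambda$ by the same trace-class argument that establishes finiteness of $\tau(X_i \mathsf{R})$ in Proposition~\ref{prop:Extrachanged} when applied to $H_\lambda$, so $\sigma\subm{rot}{ij,\lambda}^S = \Or(\lambda)$. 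The two extra terms in \eqref{eqn:sigmaT} are handled analogously: $\iu X_i\su{OD} [S, H_\lambda] \Pi_1^\lambda$ carries the $\Or(\lambda)$ factor $[S, H_\lambda]$ directly, while $\iu [H_\lambda, X_i\su{D}] S\su{OD} \Pi_1^\lambda$ carries the $\Or(\lambda)$ factor $S\su{OD}$ just established, the prefactors $X_i\su{OD}$ and $[H_\lambda, X_i\su{D}]$ being controlled uniformly in $\lambda$ via Corollary~\ref{cor:der Pi0 and H0}.

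What remains is the Chern-like term
\[
C_\lambda := \frac{\iu}{\abs{\FC_1}}\,\re \Tr\bigl(\chi_1\, \Pi_\lambda\, \bigl[[X_i,\Pi_\lambda]\,S,\, [X_j,\Pi_\lambda]\bigr]\, \chi_1\bigr),
\]
which by the Lipschitz bound on $\lambda \mapsto \Pi_\lambda$ satisfies $C_\lambda = C_0 + \Or(\lambda)$. At $\lambda = 0$ the relation $[\Pi_0, S] = 0$ lets one commute $S$ past the outer $\Pi_0$, bringing the expression into the form of Theorem~\ref{thm:Kubo} for $\sigma_{ij, 0}^S$; the splitting $\Pi_0 = \sum_\ell \Pi_0^{(\ell)}$ exhibited in Remark~\ref{rem:Spin chern number} then identifies $C_0$ with $S\text{-Chern}(\Pi_0)_{ij}$ as defined in \eqref{SChernDef} (since $[H_0, S] = 0$ implies $\widetilde H_0 = H_0$, whence $\widetilde \Pi_0^{(\ell)} = \Pi_0^{(\ell)}$). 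Combining all pieces yields the claim. The main obstacle is the uniform $\Or(\lambda)$ control of $\sigma\subm{rot}{ij,\lambda}^S$ in spite of the unbounded factor $X_i$: one must pair the bounded localization $\chi_1 X_i$ with the $\lambda$-smallness extracted from $[H_\lambda, S] = \lambda[V, S]$ in a way that is uniform in $\lambda$, which is precisely where the spin-conservation hypothesis $[H_0, S] = 0$ enters decisively.
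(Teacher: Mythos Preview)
Your proposal is correct and follows essentially the same approach as the paper: establish the perturbative estimates $\Pi_\lambda - \Pi_0 = \Or(\lambda)$ and $[\Pi_\lambda,S] = \Or(\lambda)$, use $[H_\lambda,S] = \lambda[V,S]$ to show that every contribution to the conductivity formula involving $[H_\lambda,S]$ or $S\su{OD}$ is $\Or(\lambda)$, and then reduce the surviving Chern-like term to its $\lambda=0$ value, which is identified with $S$-Chern$(\Pi_0)_{ij}$ via the spectral splitting $\Pi_0 = \sum_\ell \Pi_0^{(\ell)}$. The only cosmetic difference is that the paper starts from the alternative decomposition \eqref{eq: per sigma}--\eqref{eq: nonper sigma} rather than the $\sigma\sub{conv}+\sigma\sub{rot}$ splitting of Theorem~\ref{thm:notKubo}, which amounts to a reshuffling of the same terms.
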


\begin{proof} By standard perturbation theory we obtain 
\[
\|\Pi_\lambda  -   \Pi_0 \|_{\PO(  \Hf,\Df )} 
 = \Or(\lambda)\,,\quad\|[\Pi_\lambda  -   \Pi_0  ,X_i]\|_{\PO(  \Hf,\Df )}= \Or(\lambda)  \,, \quad  \|[\Pi_\lambda ,S]\|_{\PO(  \Hf,\Df )} 
 = \Or(\lambda)\,.
\]
Observe that $\norm{\Pi_\lambda}_{1,\tau}=\norm{\Pi_0}_{1,\tau}$ using the smallness argument in $^{\ref{fn:B_1tau prop for pert Pi}}$. 
Hence,
starting from  \eqref{eq: per sigma} and \eqref{eq: nonper sigma}, we find   
\begin{align*}
\sigma_{ij,\lambda}^S &=  \frac{1}{\abs{\FC_1}}\re \Tr \Big( \chi_1\(\iu \Pi_\lambda\big[[X_i,\Pi_\lambda], S\su{D}_\lambda [X_j,\Pi_\lambda] \big] +\iu [H_\lambda,X_{i,\lambda}\su{D}] S\su{OD}_\lambda \Pi_{1,\lambda}  \) \chi_1\Big)  \\
& \quad + \frac{1}{\abs{\FC_1}}\re \Tr  \Big(\chi_1 \iu X_{i,\lambda}\su{D}  [H_\lambda, S ]  \Pi_1   \chi_1 \Big).
 \\
&=\frac{\iu}{\abs{\FC_1}}\re \Tr \Big( \chi_1    \Pi_0\big[[X_i, \Pi_0],   S  [X_j, \Pi_0] \big]    \chi_1\Big) + \Or(\lambda) \\
&= \frac{\iu}{\abs{\FC_1}} \sum_\ell s_\ell \re \Tr \Big( \chi_1    \Pi_0\big[[X_i, \Pi_0],  ( \Id\otimes p_{\ell}) [X_j, \Pi_0] \big]    \chi_1\Big) + \Or(\lambda) \\
&= \frac{\iu}{\abs{\FC_1}} \sum_\ell s_\ell  \Tr \Big( \chi_1   \Pi_0^{(\ell)} \big[[X_i, \Pi_0^{(\ell)} ],     [X_j, \Pi_0^{(\ell)} ] \big]    \chi_1\Big) + \Or(\lambda) \\
&= \mbox{\rm$S$-Chern}(\Pi_0)_{ij} + \Or(\lambda)\,. 
\end{align*}
In the second to last equality we used that
$\Id\otimes p_{\ell}$  commutes with~$X_i$ and with $  \Pi_0$.

\end{proof}


\section{Proofs} \label{sec:Proof}

In this Section we present the proofs for a number of core technical results used in the body of the paper. We will exploit the operator spaces defined in Section~\ref{sec:H0} and in particular their properties with respect to the trace per unit volume.

\subsection{Diagonal and off-diagonal operators} \label{sec:OD}

The Fermi projection $\Pi_0$ of the unperturbed Hamiltonian $H_0$ clearly induces a decomposition of $L^2(\mathcal{X}) \otimes \C^N$ into $\Ran \Pi_0 \oplus (\Ran \Pi_0)^\perp$. Correspondingly, operators acting in $L^2(\mathcal{X}) \otimes \C^N$ will admit a block decomposition. We review in this Section some properties of this decomposition, heading towards the proof of a well-known formula from asymptotic perturbation theory which allows to invert the Liouvillian  
$[H_0, \, \cdot\,]$ acting on operators which only have off-diagonal blocks.

\begin{definition}[Diagonal and off-diagonal parts]
Given an operator $A$ and an orthogonal projection $\Pi$, \ie $\Pi = \Pi^* = \Pi^2$, such that $A\,\Pi$ is densely defined
\footnote{{The operator $A$ may be unbounded and thus a careful analysis is required. In particular, we want to avoid pathological examples and have $A \supseteq A\su{D} + A\su{OD}$ on a \emph{dense} domain. Later, we will for example apply the block decomposition with respect to the Fermi projection $\Pi_0$ to the operator $X_i$ for $i\in\set{1, \ldots, d}$ (see Subsection~\ref{ssec:linresponsecoeff}), and $X_i \, \Pi_0$ is densely defined under Assumption~\ref{assum:1} on the unperturbed Hamiltonian (compare Corollary~\ref{cor:der Pi0 and H0}).  }
}, 
one defines its \emph{diagonal} and \emph{off-diagonal parts} as
\begin{align*}
A\su{D} & := \Pi A \Pi + (\Id - \Pi) A (\Id - \Pi), \\
A\su{OD} & := \Pi A (\Id - \Pi) + (\Id - \Pi) A \Pi,
\end{align*}
respectively. The operator $A$ is called \emph{diagonal} (resp. \emph{off-diagonal}) if $A = A\su{D}$ (resp. $A = A\su{OD}$).
\end{definition}

We collect in the following Lemma two simple properties of diagonal and off-diagonal operators in a general Hilbert space $\Hi$.

\begin{lemma} \label{lemma:diag}
Let $A$ be an operator acting in $\Hi$ and $\Pi$ an orthogonal projection on $\Hi$ such that $A \Pi$ is densely defined. 
\begin{enumerate}[label=(\roman*), ref=(\roman*)]
 \item \label{item:diag1} $A$ is diagonal if and only if $[A,\Pi] = 0$. $A$ is off-diagonal if and only if 
 $A = A \Pi + \Pi A$.
 \item \label{item:diag2} $A\su{OD} = [[A,\Pi],\Pi]$. 
\end{enumerate}
\end{lemma}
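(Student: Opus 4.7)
The proof will consist of purely algebraic manipulations with the projection $\Pi$, relying on the identity $\Pi^2 = \Pi$ and the (implicit) assumption that all operator products appearing below are densely defined on the dense subspace where $A\Pi$ (equivalently $A\su{D}$ and $A\su{OD}$) is defined, so that equalities of operators can be checked pointwise on a common invariant dense domain.

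As a preliminary step, I would rewrite $A\su{OD}$ in a more compact form: expanding the definition,
\[
A\su{OD} = \Pi A(\Id - \Pi) + (\Id - \Pi)A\Pi = \Pi A + A\Pi - 2\,\Pi A \Pi,
\]
where I used $\Pi^2 = \Pi$. This identity will be the workhorse for both parts.

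For part \ref{item:diag1}, I would first characterize diagonal operators. If $[A,\Pi]=0$, then $A\Pi = \Pi A$, so $\Pi A(\Id-\Pi) = A\Pi(\Id-\Pi) = 0$ and similarly $(\Id-\Pi)A\Pi = 0$, giving $A\su{OD}=0$ and hence $A = A\su{D}$. Conversely, if $A = A\su{D}$, then $A\su{OD} = 0$, so from the boxed identity $\Pi A + A\Pi = 2\Pi A \Pi$; multiplying this equation on the right by $\Pi$ gives $\Pi A \Pi + A\Pi = 2\Pi A \Pi$, i.e. $A\Pi = \Pi A \Pi$, while multiplying on the left by $\Pi$ gives $\Pi A = \Pi A \Pi$; subtracting yields $[A,\Pi]=0$. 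For the off-diagonal characterization, if $A = A\su{OD}$ then $\Pi A \Pi = 0$ (as $\Pi A\su{OD} \Pi = 0$ by inspection), and substituting into the boxed identity gives $A = A\su{OD} = \Pi A + A \Pi$. Conversely, if $A = \Pi A + A\Pi$, multiplying on the left and right by $\Pi$ gives $\Pi A \Pi = 2\Pi A \Pi$, hence $\Pi A \Pi = 0$; then $(\Id-\Pi)A(\Id-\Pi) = A - \Pi A - A\Pi + \Pi A \Pi = 0$ as well, so $A\su{D}=0$ and $A = A\su{OD}$.

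For part \ref{item:diag2}, I would simply compute
\[
[[A,\Pi],\Pi] = [A\Pi - \Pi A,\,\Pi] = A\Pi^2 - \Pi A \Pi - \Pi A \Pi + \Pi^2 A = \Pi A + A \Pi - 2\,\Pi A \Pi,
\]
which coincides with the boxed expression for $A\su{OD}$. I don't anticipate any real obstacle: the only point requiring minor care is checking that the nested commutator in (ii) and the various products in (i) are indeed well-defined on the common domain where $A\Pi$ is densely defined, which follows since $\Pi$ is a bounded operator leaving that domain invariant.
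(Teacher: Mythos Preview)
Your proposal is correct and takes essentially the same approach as the paper: both proofs are purely algebraic, relying only on $\Pi^2=\Pi$ and direct expansion of the commutators. The only cosmetic difference is that you first isolate the identity $A\su{OD}=\Pi A+A\Pi-2\Pi A\Pi$ and reuse it systematically, whereas the paper verifies each implication by a slightly more ad hoc (but equivalent) manipulation; neither route offers any real advantage over the other.
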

\begin{proof}
{\it \ref{item:diag1}} If $A$ is diagonal then $\Pi A = \Pi A\su{D} = \Pi A \Pi = A \Pi$, as $\Pi^2 = \Pi$ and $\Pi (\Id - \Pi) = 0$. Conversely, if $[A, \Pi] = 0$ then $\Pi A (\Id - \Pi) = A \Pi (\Id - \Pi) = 0$ and similarly $(\Id - \Pi) A \Pi = 0$, so that $A\su{OD}=0$.

If $A$ is off-diagonal, \ie $A = \Pi A (\Id - \Pi) + (\Id - \Pi) A \Pi$, then applying $\Pi$ to the left and to the right of this equality we get
\[ \Pi A = \Pi A (\Id - \Pi), \quad A \Pi = (\Id - \Pi) A \Pi, \]
and hence the conclusion follows. Conversely, if $A = A \Pi + \Pi A$, then 
\[ \Pi A \Pi = 2 \, \Pi A \Pi \quad \Longrightarrow \quad \Pi A \Pi = 0. \]
Moreover, $(\Id - \Pi) A (\Id - \Pi) = 0$ as $\Pi (\Id - \Pi) = (\Id - \Pi) \Pi = 0$. Hence $A\su{D}=0$.

{\it \ref{item:diag2}} We have
\begin{align*}
[[A,\Pi],\Pi] & = [A \Pi - \Pi A, \Pi] = A \Pi - \Pi A \Pi - \Pi A \Pi + \Pi A \\
& = (\Id - \Pi) A \Pi + \Pi A (\Id - \Pi). \qedhere
\end{align*}
\end{proof}

\subsection{Inverse Liouvillian} \label{sec:ProofLiouvillian}

We study here the Liouvillian (super-)operator $B \mapsto [H_0,B]$ associated to the unperturbed Hamiltonian, and in particular the possibility to invert it away from its kernel. We look in other words for the solution $B$ to the equation $[H_0,B] = A$, where $A\in \PO(\Hf)$  is off-diagonal with respect to the decomposition $\Hi = \Pi_0 \Hi \oplus (\Id-\Pi_0)\Hi$. We state in the following Proposition the solution to this problem, which traces back at least to \cite{Nenciu93,Nenciu02}. 

\begin{proposition} \label{prop:I(A)}
Under Assumption~\ref{assum:1}, let $A\in \PO(\Hf)$ be such that $A=A\su{OD}$ with respect to $\Pi_0$. Then the operator
$\mathcal{I}(A)$, defined in \eqref{eqn:I(A)},
is the unique off-diagonal solution in $\PO(\Hf,\Df)$ to the equation
\begin{equation} \label{eqn:Liouvillian}
[H_0, \mathcal{I}(A)] = A \qquad  \text{on} \quad \UZ^{-1}L^2_{\varrho}(\R^d,\Df).
\end{equation}
\end{proposition}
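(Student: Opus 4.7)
The plan is to establish existence, the defining equation, off-diagonality, and uniqueness in turn, working fibrewise in the Bloch--Floquet--Zak representation where everything reduces to bounded operator manipulations on $\Hf$ (with domain $\Df$).

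\medskip

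\textbf{Step 1 (Regularity of $\mathcal{I}(A)$).} First I would check that $\mathcal{I}(A) \in \PO(\Hf,\Df)$. By Lemma~\ref{lem:resolvent-projectionPHfDf}, $(H_0-z\Id)^{-1} \in \PO(\Hf,\Df)$ for every $z$ on the contour $C$, and also $\Pi_0 \in \PO(\Hf,\Df)$. Since $A \in \PO(\Hf)$, the commutator $[A,\Pi_0]$ lies in $\PO(\Hf,\Df)$. Composing, $(H_0 - z\Id)^{-1}[A,\Pi_0](H_0-z\Id)^{-1}$ is an $\Hf \to \Df$ operator whose fibration depends smoothly on $k$ (the resolvent is smooth in $z$ and $k$ jointly and $C$ is compact), so the contour integral defines an element of $\PO(\Hf,\Df)$.

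\medskip

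\textbf{Step 2 (Verifying $[H_0,\mathcal{I}(A)] = A$).} Working on $\UZ^{-1}L^2_\varrho(\R^d,\Df)$, I would use the resolvent identity trick: write $H_0 = (H_0 - z\Id) + z\Id$, so that
\begin{equation*}
 H_0 (H_0-z\Id)^{-1}[A,\Pi_0](H_0-z\Id)^{-1} = [A,\Pi_0](H_0-z\Id)^{-1} + z (H_0-z\Id)^{-1}[A,\Pi_0](H_0-z\Id)^{-1},
\end{equation*}
and symmetrically on the right. The $z$-proportional terms cancel in the commutator, leaving
\begin{equation*}
[H_0,\mathcal{I}(A)] = \frac{\iu}{2\pi}\oint_C \di z\,\Bigl([A,\Pi_0](H_0-z\Id)^{-1} - (H_0-z\Id)^{-1}[A,\Pi_0]\Bigr) = [A,\Pi_0]\Pi_0 - \Pi_0[A,\Pi_0],
\end{equation*}
by Riesz's formula \eqref{eqn:Riesz}. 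This equals $[[A,\Pi_0],\Pi_0] = A\su{OD} = A$ by Lemma~\ref{lemma:diag}\ref{item:diag2} and the hypothesis $A = A\su{OD}$.

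\medskip

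\textbf{Step 3 (Off-diagonality).} Since $\Pi_0$ commutes with $H_0$ and hence with each resolvent $(H_0-z\Id)^{-1}$, I can pull $\Pi_0$ through the resolvents; the computation $\Pi_0[A,\Pi_0]\Pi_0 = \Pi_0 A\Pi_0 - \Pi_0 A\Pi_0 = 0$ and, analogously, $(\Id-\Pi_0)[A,\Pi_0](\Id-\Pi_0) = 0$ then give $\Pi_0 \mathcal{I}(A)\Pi_0 = 0$ and $(\Id-\Pi_0)\mathcal{I}(A)(\Id-\Pi_0) = 0$, so $\mathcal{I}(A) = \mathcal{I}(A)\su{OD}$.

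\medskip

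\textbf{Step 4 (Uniqueness, the main obstacle).} Let $B \in \PO(\Hf,\Df)$ be another off-diagonal solution and put $D := \mathcal{I}(A) - B$, which is off-diagonal, lies in $\PO(\Hf,\Df)$, and satisfies $[H_0,D] = 0$ on the dense domain $\UZ^{-1}L^2_\varrho(\R^d,\Df)$. I would then fiber: for each $k$, $D(k)$ is off-diagonal with respect to $\Pi_0(k)$ and commutes with $H_0(k)$. Writing $D(k) = D_+(k) + D_-(k)$ with $D_+(k) = \Pi_0(k) D(k)(\Id-\Pi_0(k))$ viewed as a map between the spectral subspaces of $H_0(k)$, the commutation $[H_0(k),D_+(k)] = 0$ becomes a Sylvester equation $H_0^+(k)\,D_+(k) = D_+(k)\,H_0^-(k)$, where $H_0^\pm(k) := H_0(k)\rvert_{\Ran\Pi_0^\pm(k)}$ have disjoint spectra separated by the gap from Assumption~\ref{item:gap}. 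By the standard spectral separation argument (which at the fibre level is Rosenblum's theorem, but is visible already by contour integration around the gap), $D_+(k) = 0$, and similarly $D_-(k) = 0$. Hence $D = 0$. The delicate point here is that the gap is uniform in $k$, so this argument is truly fibrewise and does not require additional control on the $k$-dependence beyond the smoothness already encoded in $\PO(\Hf,\Df)$.
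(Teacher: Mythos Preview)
Your proof is correct and Steps~1--3 match the paper's argument essentially line for line. The only genuine difference is in Step~4 (uniqueness): you invoke a fibrewise Sylvester/Rosenblum argument using spectral separation, whereas the paper observes much more simply that $[H_0,D]=0$ implies $[\Pi_0,D]=0$ (since $\Pi_0$ is a spectral projection of $H_0$), so by Lemma~\ref{lemma:diag}\ref{item:diag1} $D$ is diagonal; being simultaneously diagonal and off-diagonal, $D=0$. Your route is perfectly valid and has the virtue of being explicit about where the gap hypothesis enters, but you are working harder than necessary---the functional-calculus observation bypasses any fibrewise analysis and makes what you flag as ``the main obstacle'' a two-line remark.
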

\begin{proof}
From the very definition \eqref{eqn:I(A)} and our hypotheses on $A$, we have that $\mathcal{I}(A)$ is off-diagonal and is in $\PO(\Hf,\Df)$ by Lemma~\ref{lem:resolvent-projectionPHfDf}.

Thus we need only to prove \eqref{eqn:Liouvillian}. Since $H_0(k)\in \mathcal{L}(\Df,\Hf)$ and $\Ran(H_0(k) - z \Id)^{-1}\subset \Df$ for any $z\in \rho(H_0)$, applying \cite[\S V.5 Corollary 2]{Yosida68} we have that on $ \Df$
\[
[H_0(k), \mathcal{I}(A)(k)]=\frac{\iu}{2 \pi} \oint_C \di z \,[H_0(k)  , (H_0(k) - z \Id)^{-1} \, [A(k),\Pi_0(k)] \, (H_0(k) - z \Id)^{-1}].
\]	
Hence, we obtain that on the domain $\UZ^{-1}L^2_{\varrho}(\R^d,\Df)$
\begin{align*}
[H_0, \mathcal{I}(A)]&=\frac{\iu}{2 \pi} \oint_C \di z \,[H_0 - z \Id , (H_0 - z \Id)^{-1} \, [A,\Pi_0] \, (H_0 - z \Id)^{-1}]\\
&=\frac{\iu}{2 \pi} \oint_C \di z \,[[A,\Pi_0], (H_0 - z \Id)^{-1}]=[[A,\Pi_0],\Pi_0]=A\su{OD}=A,
\end{align*}
using the Riesz formula (compare~\eqref{eqn:Riesz}) and Lemma~\ref{lemma:diag}\ref{item:diag2}.

Finally, notice that $\mathcal{I}(A)$ is the unique off-diagonal solution in $\PO(\Hf,\Df)$ to equation~\eqref{eqn:Liouvillian} for any off-diagonal operator $A\in \PO(\Hf)$. Indeed, if $B\in \PO(\Hf,\Df)$ is another solution to \eqref{eqn:Liouvillian}, then $\mathcal{I}(A)-B$ commutes with $H_0$, and hence with $\Pi_0$. By Lemma~\ref{lemma:diag}\ref{item:diag1}, $\mathcal{I}(A)-B$ is diagonal, and hence $B=B\su{OD} = \mathcal{I}(A)\su{OD} = \mathcal{I}(A)$.
\end{proof}

\subsection{NEASS} \label{sec:ProofNEASS}

This Section is devoted to the proof of Proposition~\ref{prop:Pi1} and thus to the explicit construction of the NEASS $\Pi^\epsi$ satisfying \ref{item:SA1} and \ref{item:SA3}. In order to give this proof, we first need the following preparatory Lemma.

\begin{lemma}
\label{lem:(E^{eps A} -Id) and [E^{eps A},Xj] PO(Hf,Df) unif in eps}
If $A\in\PO(\Hf,\Df)$, then both $(\E^{\eps A} -\Id)$ and $\overline{[\E^{\eps A},X_j]}$ are in $\PO(\Hf,\Df)$ and their norms in this space are bounded uniformly in $\eps\in[0,1]$.
\end{lemma}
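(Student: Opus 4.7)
The approach exploits that the inclusion $\Df\hookrightarrow \Hf$ is a contraction (the graph norm of $H_0(0)$ dominates the $\Hf$-norm), so any $A\in\PO(\Hf,\Df)$ defines simultaneously a bounded operator on $\Hf$ and on $\Df$ with operator norm at most $\|A\|_{\PO(\Hf,\Df)}$. Consequently $\E^{\eps A(k)}$ makes sense via its norm-convergent power series in both $\mathcal{L}(\Hf)$ and $\mathcal{L}(\Df)$, and satisfies
\[
\|\E^{\eps A(k)}\|_{\mathcal{L}(\Hf)},\ \|\E^{\eps A(k)}\|_{\mathcal{L}(\Df)} \;\leq\; \E^{\|A\|_{\PO(\Hf,\Df)}} \qquad \forall\,\eps\in[0,1],\,k\in\R^d.
\]

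For the first claim I would factor
\[
\E^{\eps A(k)}-\Id \;=\; \eps\,A(k)\sum_{n=0}^\infty \frac{(\eps A(k))^n}{(n+1)!},
\]
reading the tail as a uniformly bounded element of $\mathcal{L}(\Hf)$ and $A(k)$ as a bounded map $\Hf\to\Df$. The composition then lies in $\mathcal{L}(\Hf,\Df)$ with norm at most $\|A\|_{\PO(\Hf,\Df)}\,\E^{\|A\|_{\PO(\Hf,\Df)}}$, uniformly in $\eps\in[0,1]$ and $k$. Smoothness of $k\mapsto \E^{\eps A(k)}-\Id$ in the $\mathcal{L}(\Hf,\Df)$-norm follows by differentiating the series term by term (equivalently, from the Duhamel identity used below). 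This yields $\E^{\eps A}-\Id\in\PO(\Hf,\Df)$ with the required uniform bound.

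For the second claim, since $\E^{\eps A}-\Id\in\PO(\Hf,\Df)$ by the previous step and $[\Id,X_j]=0$, Proposition~\ref{prop:derivation of P spaces} immediately gives
\[
\overline{[\E^{\eps A},X_j]} \;=\; \overline{[\E^{\eps A}-\Id,X_j]} \;\in\; \PO(\Hf,\Df),
\]
with fiber $-\I\,\partial_{k_j}\E^{\eps A(k)}$. For the explicit uniform bound I would invoke Duhamel's formula
\[
\partial_{k_j}\E^{\eps A(k)} \;=\; \eps\int_0^1 \E^{(1-s)\eps A(k)}\,(\partial_{k_j}A)(k)\,\E^{s\eps A(k)}\,\D s,
\]
and read the integrand as a composition $\Hf\xrightarrow{\E^{s\eps A(k)}}\Hf\xrightarrow{(\partial_{k_j}A)(k)}\Df\xrightarrow{\E^{(1-s)\eps A(k)}}\Df$. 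The right exponential is uniformly bounded in $\mathcal{L}(\Hf)$, the middle factor lies in $\mathcal{L}(\Hf,\Df)$ with norm at most $\|A\|_{\PO(\Hf,\Df)}$ (smoothness of $A$), and the left exponential is uniformly bounded in $\mathcal{L}(\Df)$. This produces the desired $\mathcal{L}(\Hf,\Df)$-bound, uniform in $\eps\in[0,1]$ and $k\in\R^d$.

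The only delicate point is ensuring that the fibration $k\mapsto \E^{\eps A(k)}$ is smooth as a map into $\mathcal{L}(\Hf,\Df)$ (not merely $\mathcal{L}(\Hf)$), which is what $\PO(\Hf,\Df)$-membership requires. I would handle this by iterating Duhamel, expressing each higher-order derivative $\partial^\alpha_k \E^{\eps A(k)}$ as a finite sum of multiple integrals of ``sandwich'' products of factors $\E^{\sigma\eps A(k)}$ and derivatives $\partial^{\beta}A(k)$ (with $|\beta|\leq |\alpha|$); each such product has the same composition structure as above and is uniformly bounded in $\mathcal{L}(\Hf,\Df)$. This multi-index bookkeeping is the only slightly tedious part of the argument, but it is routine.
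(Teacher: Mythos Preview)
Your proposal is correct and follows essentially the same approach as the paper: both expand $\E^{\eps A(k)}-\Id$ as a power series to get the $\PO(\Hf,\Df)$-bound, and both reduce the commutator claim to Proposition~\ref{prop:derivation of P spaces} applied to $\E^{\eps A}-\Id$. The only cosmetic difference is that the paper controls $\partial_{k_j}\E^{\eps A(k)}$ by term-by-term differentiation of the power series, whereas you use Duhamel's formula; these are equivalent, and your sandwich reading $\Hf\to\Hf\to\Df\to\Df$ is clean. (One small slip: the bound on the middle factor $(\partial_{k_j}A)(k)$ should be $\max_k\|\partial_{k_j}A(k)\|_{\mathcal{L}(\Hf,\Df)}$, not $\|A\|_{\PO(\Hf,\Df)}$; this is finite by smoothness and equivariance of $A$, as your parenthetical indicates.)
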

\begin{proof}
Clearly, $(\E^{\eps A} -\Id)$ is periodic as $A$ is periodic. Since $\mathcal{L}(\Hf,\Df)$ is a Banach space, 
\[
\E^{\eps A}(k)-\Id=\E^{\eps A(k)}-\Id=\sum_{n=1}^\infty \frac{\eps^n A^n(k)}{n!}
\]
 converges in $\mathcal{L}(\Hf,\Df)$ uniformly in $k\in K$ for any compact set $K\subset\R^d$ and $\epsi\in[0,1]$, as the sequence $\{\sum_{n=1}^N \frac{\eps^n A^n(k)}{n!}\}_{N\in \N}\subset\mathcal{L}(\Hf,\Df)$ converges absolutely in $\mathcal{L}(\Hf,\Df)$ uniformly in $k\in K$ and $\eps\in [0,1]$. Moreover, observe that each summand is such that
\[
\R^d\ni k\mapsto\frac{A^n(k)}{n!}\in\mathcal{L}(\Hf,\Df)\text{ is smooth }\forall n\geq 1
\] 
and 
\[
\sum_{n=1}^\infty \frac{\epsi^n\partial_{k_j}\left(A^n(k)\right)}{n!}=\sum_{n=1}^\infty \frac{\eps^n}{n!}\sum_{h=0}^{n-1} A^h(k) \, \big( \partial_{k_j} A(k) \big)\, A^{n-1-h}(k)
\] 
converges in $\mathcal{L}(\Hf,\Df)$ uniformly in $\epsi\in[0,1]$ and in $k\in K$ for any compact set $K\subset\R^d$ due to the assumption that $A\in\PO(\Hf,\Df)$.
Therefore we are allowed to interchange the derivation in $k$ and the series in $n$. Iterating this argument implies that
$(\E^{\eps A}-\Id)$ is in $\PO(\Hf,\Df)$ and that its norm in this space is uniformly bounded with respect to $\epsi\in[0,1]$. Thus by Proposition~\ref{prop:derivation of P spaces} we deduce that $\overline{[\E^{\eps A},X_j]}=\overline{[\E^{\eps A}-\Id , X_j]}\in \PO(\Hf,\Df)$ again with uniform bounds on its norm for $\eps\in[0,1]$.
\end{proof}

We are now ready to tackle the

\begin{proof}[Proof of Proposition~\ref{prop:Pi1}]
In this proof, we will abbreviate the expression ``the map $[0,1] \ni \eps \mapsto A^\eps \in \PO$ is uniformly bounded'' for some space of operators $\PO$ by just saying that ``$A^\eps$ is in $\PO$ uniformly in $\eps \in [0,1]$''.

{\it \ref{item:propSA1}} By Corollary~\ref{cor:der Pi0 and H0}\ref{item:[Pi0,X_j]} one has that $[\overline{[X_j,\Pi_0]},\Pi_0]$ is in $\PO(\Hf,\Df)\subset \PO(\Hf)$ and is off-diagonal with respect to $\Pi_0$, hence Proposition~\ref{prop:I(A)} implies that $\mathcal{S}\in\PO(\Hf,\Df)$. Self-adjointness of $\mathcal{S}$  is evident.

{\it \ref{item:propSA3}} By Taylor's formula we find that for any $\eps>0$
\[
\E^{-\I \epsi \mathcal{S}} \,\Pi_0\, \E^{\I\epsi \mathcal{S}} = \Pi_0 + \I\eps [\Pi_0,\mathcal{S}]  - \tfrac{\epsi^2}{2}  \E^{-\I \tilde \epsi \mathcal{S}} \,[\mathcal{S},[\mathcal{S},\Pi_0]]\, \E^{\I\tilde \epsi \mathcal{S}}
\]
for some $\tilde \eps\in (0,\eps)$.
Thus, in view of Lemma \ref{lemma:diag}.\ref{item:diag2} and of the fact that $\overline{[  X_j,\Pi_0]}={\overline{[  X_j,\Pi_0]}}\su{OD}$, one has
\[
\Pi_1 = \I [\Pi_0,\mathcal{S}]  =  -\left[\Pi_0,\mathcal{I} \( \big[\, \overline{[X_j,\Pi_0]}\,,\Pi_0 \big]\) \right] =   \mathcal{I}\( \big[\big[\, \overline{[X_j,\Pi_0]}\,,\Pi_0\big],\Pi_0\big] \)= \mathcal{I}\( \overline{[  X_j,\Pi_0]} \).
\]
Moreover, 
\begin{equation}
\label{eqn:Pireps}
\Pi_r^\eps=\tfrac{\I}{2}  \E^{-\I \tilde \epsi \mathcal{S}} \,[\Pi_1,\mathcal{S}]\, \E^{\I\tilde \epsi \mathcal{S}} .
\end{equation}
In view of Corollary~\ref{cor:der Pi0 and H0}\ref{item:[Pi0,X_j]} and of Proposition~\ref{prop:I(A)}, we have $\Pi_1\in\PO(\Hf,\Df)$.
Notice now that $[\Pi_1,\mathcal{S}]$ is in $\PO(\Hf,\Df)$ (because $\Pi_1, \mathcal{S}\in \PO(\Hf,\Df)$), and $(\E^{-\I \tilde \epsi \mathcal{S}}-\Id)\in\PO(\Hf,\Df)$ uniformly in $\tilde \eps\in (0,\eps)\subseteq [0,1]$ by Lemma~\ref{lem:(E^{eps A} -Id) and [E^{eps A},Xj] PO(Hf,Df) unif in eps}. Therefore we conclude that
\begin{equation}
\label{eqn:Pi_r^eps in PO(Hf,Df)}
\Pi_r^\eps=\tfrac{\I}{2}  (\E^{-\I \tilde \epsi \mathcal{S}}-\Id) \,[\Pi_1,\mathcal{S}]\, \E^{\I\tilde \epsi \mathcal{S}}+\tfrac{\I}{2}  [\Pi_1,\mathcal{S}]\, \E^{\I\tilde \epsi \mathcal{S}}\in\PO(\Hf,\Df)
\end{equation}
uniformly in $\tilde{\epsi} \in (0,\eps)\subseteq [0,1]$.
Finally, on the domain $\UZ^{-1}{C^\infty_\varrho(\R^d,\Df)}$ we have that
\begin{eqnarray*}
[H^\eps,\Pi^\eps] = \eps \left( [H_0,\Pi_1]-[X_j,\Pi_0]\right) + \epsi^2 \left([H^\epsi, \Pi_r^\eps]-[X_j,\Pi_1]\right).
\end{eqnarray*}
On the right-hand side, the first term vanishes in view of equation~\eqref{eqn:Liouvillian}:
\[
 [H_0,\Pi_1] = \left[ H_0,\mathcal{I} \( \overline{[  X_j,\Pi_0]} \) \right] =   [  X_j,\Pi_0]\text{  on $\UZ^{-1}{C^\infty_\varrho(\R^d,\Df)}$}.
\]
As for the second term, we recognize that 
\begin{equation}
\label{eqn:Reps}
 [H^\epsi, \Pi_r^\eps]-[X_j,\Pi_1]\Big|_ {\UZ^{-1}{C^\infty_\varrho(\R^d,\Df)}}
\end{equation}
extends to a bounded operator in $\PO(\Hf)$ uniformly in $\eps\in[0,1]$.
Indeed, the second summand $[X_j,\Pi_1]\Big|_ {\UZ^{-1}{C^\infty_\varrho(\R^d,\Df)}}$ in \eqref{eqn:Reps} extends to an operator in $\PO(\Hf,\Df)\subset \PO(\Hf)$ by Proposition~\ref{prop:derivation of P spaces}. We split instead the first summand in \eqref{eqn:Reps} as
\[
[H_0,\Pi_r^\eps]-\eps[X_j,\Pi_r^\eps]\Big|_ {\UZ^{-1}{C^\infty_\varrho(\R^d,\Df)}}.
\] 
The first of the terms above satisfies, in view of \eqref{eqn:Pi_r^eps in PO(Hf,Df)},
\[
\norm{\overline{[H_0,\Pi_r^\eps]}}_{\PO(\Hf)}\leq 2 \norm{H_0}_{\PO(\Df,\Hf)}\norm{\Pi_r^\eps}_{\PO(\Hf,\Df)}\text{ for }\eps\in[0,1],
\]
while $[X_j,\Pi_r^\eps]\Big|_ {\UZ^{-1}{C^\infty_\varrho(\R^d,\Df)}}$  extends to an operator in $\PO(\Hf)$ applying again Proposition~\ref{prop:derivation of P spaces}. It remains only to show that $\norm{\overline{[X_j,\Pi_r^\eps]}}_{\PO(\Hf)}$ is bounded uniformly in $\eps\in[0,1]$.
By Leibniz's rule 
\begin{align*}
2\norm{\overline{[X_j,\Pi_r^\eps]}}_{\PO(\Hf)}\leq& \norm{[\Pi_1,\mathcal{S}] }_{\PO(\Hf,\Df)}\(\norm{\overline{[X_j,\E^{-\I \tilde \epsi \mathcal{S}} ]}}_{\PO(\Hf,\Df)}+\norm{\overline{[X_j,\E^{\I \tilde \epsi \mathcal{S}} ]}}_{\PO(\Hf,\Df)}\)\\
&+\norm{\overline{[X_j,[\Pi_1,\mathcal{S}]]} }_{\PO(\Hf,\Df)},
\end{align*}
where the right-hand side is bounded uniformly in $\eps\in[0,1]$ by Lemma~\ref{lem:(E^{eps A} -Id) and [E^{eps A},Xj] PO(Hf,Df) unif in eps} and Proposition~\ref{prop:derivation of P spaces}.
This concludes the proof.
\end{proof}

\subsection{{Well-posedness of the proper $S$-conductivity}} 
\label{sec:Proofsigma}
Here we prove that $\re\tau(J\subm{prop}{i}^S\,\Pi_1)=\sigma\subm{prop}{ij}^S$, by using \eqref{eqn:sigmaprop} and  combining the following results.
\begin{lemma} 
\label{lem:Pi in tau-class}
Under Assumption \ref{assum:1} we have that $\Pi_0,\Pi_1,\Pi_r^\eps\in \B_1^\tau.$
\end{lemma}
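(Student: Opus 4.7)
All three operators are clearly bounded periodic operators on $\Hi$: for $\Pi_0$ this is Assumption \ref{item:gap}, while for $\Pi_1$ and $\Pi_r^\eps$ it follows from Proposition \ref{prop:Pi1}\ref{item:propSA3} combined with the continuous inclusion $\Df \hookrightarrow \Hf$. The plan, in each case, is to show that the Bloch--Floquet--Zak fiber $A(k)$ is trace class on $\Hf$ with $\Tr_\Hf(|A(k)|)$ bounded uniformly in $k \in \mathbb{B}^d$, then invoke Proposition \ref{prop:charge-tauPeriodic}\ref{item:per+traceclassfibr} applied to $|A| = \int^{\oplus} |A(k)|\,\di k$ (which holds by the functional calculus for decomposable operators) together with \ref{item:per+traceclassoncompact} to conclude $\tau(|A|)<\infty$.

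For $\Pi_0$, by Assumption \ref{item:gap} the fibration $k\mapsto \Pi_0(k)$ takes values in projections of finite rank $m$, hence $\Tr_\Hf(|\Pi_0(k)|)=m$ uniformly in $k$. For $\Pi_1$, the crucial observation is that $\Pi_1=\mathcal{I}(\overline{[X_j,\Pi_0]})$ is off-diagonal with respect to $\Pi_0$ by Proposition \ref{prop:I(A)}. Lemma \ref{lemma:diag}\ref{item:diag1} then gives, fiberwise, $\Pi_1(k) = \Pi_1(k)\Pi_0(k) + \Pi_0(k)\Pi_1(k)$, so each term has rank at most $m=\mathrm{rank}(\Pi_0(k))$ and hence $\mathrm{rank}(\Pi_1(k))\le 2m$. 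Since $|\Pi_1(k)|$ has the same rank as $\Pi_1(k)$, one obtains
\[
\Tr_\Hf(|\Pi_1(k)|) \le 2m\,\|\Pi_1(k)\|_{\mathcal{L}(\Hf)},
\]
which is uniformly bounded in $k$ because $\Pi_1\in\PO(\Hf,\Df)\subset\PO(\Hf)$.

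For $\Pi_r^\eps$ we use the explicit formula \eqref{eqn:Pireps}. A short computation using Lemma \ref{lemma:diag} shows that $\overline{[X_j,\Pi_0]}$ is off-diagonal (e.g.\ $\Pi_0[X_j,\Pi_0]\Pi_0 = \Pi_0 X_j\Pi_0-\Pi_0 X_j\Pi_0=0$ and similarly on the complementary block), and hence so is $[\overline{[X_j,\Pi_0]},\Pi_0]$; Proposition \ref{prop:I(A)} then implies that $\mathcal{S}$ is off-diagonal and, by the same argument used for $\Pi_1$, each $\mathcal{S}(k)$ has rank at most $2m$. Consequently $[\Pi_1,\mathcal{S}](k)=\Pi_1(k)\mathcal{S}(k)-\mathcal{S}(k)\Pi_1(k)$ has rank at most $4m$, and since this rank is preserved under conjugation by the bounded operators $\E^{\pm\iu\tilde\eps\mathcal{S}}$ we obtain $\mathrm{rank}(\Pi_r^\eps(k))\le 4m$. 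The uniform bound on $\|\Pi_r^\eps(k)\|_{\mathcal{L}(\Hf)}$ in $k$ and in $\eps\in[0,1]$ follows from Lemma \ref{lem:(E^{eps A} -Id) and [E^{eps A},Xj] PO(Hf,Df) unif in eps} and the algebra properties of $\PO(\Hf)$.

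The proof is essentially a book-keeping exercise; the only point requiring a modicum of care is the off-diagonality of $\mathcal{S}$, which passes the rank bound from $\Pi_0$ to both $\Pi_1$ and (through the commutator in \eqref{eqn:Pireps}) to $\Pi_r^\eps$. Everything else reduces to the elementary inequality $\Tr(|T|)\le \mathrm{rank}(T)\cdot\|T\|$ for finite-rank operators.
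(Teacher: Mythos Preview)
Your proof is correct, and the key insight---that off-diagonality of $\Pi_1$ (and of $\mathcal{S}$) with respect to $\Pi_0$ transfers the finite-rank property of $\Pi_0(k)$ to the other fibers---is exactly the same as in the paper. The execution differs, however. The paper works globally: once $\Pi_0\in\B_1^\tau$ is established, it writes $\Pi_1=\Pi_1\Pi_0+\Pi_0\Pi_1$ and invokes the ideal property $\B_\infty^\tau\cdot\B_1^\tau\subset\B_1^\tau$ (quoted from \cite{BoucletGerminetKleinSchenker05}); then $\Pi_r^\eps$ follows by sandwiching $[\Pi_1,\mathcal{S}]\in\B_1^\tau$ between bounded periodic unitaries. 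You instead work fiberwise, extracting explicit rank bounds $\mathrm{rank}\,\Pi_1(k)\le 2m$, $\mathrm{rank}\,\mathcal{S}(k)\le 2m$, $\mathrm{rank}\,\Pi_r^\eps(k)\le 4m$, and then close with $\Tr|T|\le\mathrm{rank}(T)\cdot\|T\|$ and Proposition~\ref{prop:charge-tauPeriodic}. Your route is a touch more self-contained (it bypasses the external ideal-property lemma) and yields quantitative rank information; the paper's route is shorter and more algebraic once that lemma is available.
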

\begin{proof}
By {Assumption~\ref{assum:1}} $\Pi_0(k)$ is a finite-rank projection on $\Hf$ with rank $m$ (independent of $k$). Thus, in view of Proposition~\ref{prop:charge-tauPeriodic}, we get 
\begin{equation} \label{190708}
\tau(\abs{\Pi_0})=\tau(\Pi_0)=\frac{1}{\abs{\FC_1}}\Tr(\chi_1\Pi_0 \chi_1)=\frac{1}{\abs{\FC_1}}\frac{1}{|\mathbb{B}^d|} \int_{\mathbb{B}^d}\di k\,\Tr_{\Hf}(\Pi_0(k))=\frac{m}{\abs{\FC_1}},
\end{equation}
hence $ \Pi_0 \in \B_1^\tau$.  In view of Propositions~\ref{prop:Pi1} and~\ref{prop:I(A)}, we have that $\Pi_1=\Pi_1\su{OD}$ is off-diagonal with respect to the orthogonal decomposition induced by $\Pi_0$, and hence Lemma~\ref{lemma:diag}\ref{item:diag1} implies that
\[
\Pi_1=\Pi_1\Pi_0+\Pi_0\Pi_1\,.
\]
Since $\Pi_1\in \PO(\Hf)\subset \B_\infty^\tau$ and $\Pi_0\in \B_1^\tau$ by \eqref{190708}, we deduce that the right-hand side of the above is in $\B_1^\tau$ as $\B_\infty^\tau\cdot\B_1^\tau\cdot \B_\infty^\tau \subset \B_1^\tau $.

Finally, recall from \eqref{eqn:Pireps} that
$
\Pi_r^\eps =\tfrac{\I}{2}  \E^{-\I \tilde \epsi \mathcal{S}} \,[\Pi_1,\mathcal{S}]\, \E^{\I\tilde \epsi \mathcal{S}}  
$
for some $\tilde \eps\in (0,\eps)$. As we just proved $\Pi_1\in \B_1^\tau$, and the other operators which appear on the right-hand side of the above are in $\PO(\Hf)\subset \B_\infty^\tau$, we conclude that $\Pi_r^\eps\in\B_1^\tau$.
\end{proof}

\begin{proposition} 
\label{cor:welldef sigmaijeps}
Under Assumption~\ref{assum:1} and the hypotheses on $S$ in Definition~\ref{def:$S$-current}, for $\Pi_\sharp\in\{ \Pi_0,\Pi_1,\Pi_r^\eps\}$ we have that
\begin{enumerate}[label=(\roman*), ref=(\roman*)]
\item \label{item:welldef sigmaijeps 1} the operators $ \overline{[H_0,X_i]} S \Pi_\sharp$ and  $[H_0, S]\Pi_\sharp$ are in $ \B_1^\tau$,
\item \label{item:welldef sigmaijeps 2}  the operators $ X_i [H_0, S]\Pi_\sharp$ and $\overline{[H_0,X_i]}S\Pi_\sharp$ 
have finite trace per unit volume.
\end{enumerate}
\end{proposition}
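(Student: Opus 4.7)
\medskip

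The plan is to prove \ref{item:welldef sigmaijeps 1} first and then to deduce \ref{item:welldef sigmaijeps 2} almost immediately. The bulk of the work lies in \ref{item:welldef sigmaijeps 1}, where one must upgrade a straightforward $\B_\infty^\tau$-membership of the products $\overline{[H_0,X_i]}\,S\,\Pi_\sharp$ and $[H_0,S]\,\Pi_\sharp$ to membership in $\B_1^\tau$. First I would collect the relevant structural facts in the operator spaces $\PO(\cdot,\cdot)$. By Corollary~\ref{cor:der Pi0 and H0}\ref{item:[H0,X_j]}, $\overline{[H_0,X_i]}\in\PO(\Df,\Hf)$. Since $H_0\in\PO(\Df,\Hf)$ and $S\in\PO(\Df)$, one also has $[H_0,S]\in\PO(\Df,\Hf)$, in line with Remark~\ref{rem:about def of S}\ref{rem:about def of S2}. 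Moreover, as $S=\Id_{L^2(\FC_1)}\otimes s$ with $s$ a bounded matrix on $\C^N$, one has $S\in\PO(\Hf)$ as well. Combined with $\Pi_\sharp\in\PO(\Hf,\Df)$ from Proposition~\ref{prop:Pi1}\ref{item:propSA3}, these facts imply at once that both $\overline{[H_0,X_i]}\,S\,\Pi_\sharp$ and $[H_0,S]\,\Pi_\sharp$ belong to $\PO(\Hf)\subset\B_\infty^\tau$.

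The $\B_1^\tau$-upgrade will rest on the ideal property~\eqref{eqn:Binftytau B1tau Binftytau is B1tau} together with the fact that $\Pi_\sharp\in\B_1^\tau$ by Lemma~\ref{lem:Pi in tau-class}; the strategy is, in each of the three cases, to rewrite the product so that a $\tau$-class factor appears on the right. Writing $A$ for either of the two periodic operators $\overline{[H_0,X_i]}\,S$ or $[H_0,S]$ in $\PO(\Df,\Hf)$, I would distinguish three cases. For $\Pi_\sharp=\Pi_0$, the identity $\Pi_0^2=\Pi_0$ gives $A\,\Pi_0=(A\,\Pi_0)\,\Pi_0\in\B_\infty^\tau\cdot\B_1^\tau\subset\B_1^\tau$. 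For $\Pi_\sharp=\Pi_1$, the off-diagonality of $\Pi_1$ with respect to $\Pi_0$ (Proposition~\ref{prop:I(A)} together with Lemma~\ref{lemma:diag}\ref{item:diag1}) yields the decomposition $\Pi_1=\Pi_1\Pi_0+\Pi_0\Pi_1$; each summand is then of the form (bounded periodic)$\,\cdot\,$($\tau$-class), once one uses that $A\,\Pi_0,\,A\,\Pi_1\in\B_\infty^\tau$ by the observation above. For $\Pi_\sharp=\Pi_r^\eps$ a direct factorisation is awkward; instead I would rewrite $\eps^2\Pi_r^\eps=\Pi^\eps-\Pi_0-\eps\Pi_1$ and treat the three pieces separately. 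The genuinely new contribution is $A\Pi^\eps$, and here one observes that $\Pi^\eps=\E^{-\iu\eps\mathcal{S}}\Pi_0\E^{\iu\eps\mathcal{S}}$ is a projection on $\Hf$ whose fibrewise rank coincides with that of $\Pi_0$ (hence $\Pi^\eps\in\B_1^\tau$) and belongs to $\PO(\Hf,\Df)$, because $\E^{\pm\iu\eps\mathcal{S}}-\Id\in\PO(\Hf,\Df)$ by Lemma~\ref{lem:(E^{eps A} -Id) and [E^{eps A},Xj] PO(Hf,Df) unif in eps}; hence $A\Pi^\eps=(A\Pi^\eps)\Pi^\eps\in\B_1^\tau$ as in the first case, and linearity closes the argument.

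Part \ref{item:welldef sigmaijeps 2} is then an immediate consequence. The operator $\overline{[H_0,X_i]}\,S\,\Pi_\sharp$ is periodic and lies in $\B_1^\tau$ by \ref{item:welldef sigmaijeps 1}, so its trace per unit volume is well-defined and finite by Lemma~\ref{lem:tau cont funct1andL} and Proposition~\ref{prop:charge-tauPeriodic}\ref{item:per+traceclassoncompact}. The operator $[H_0,S]\,\Pi_\sharp$ is likewise periodic and in $\B_1^\tau$, hence in particular trace class on compact sets by Lemma~\ref{lem:tau cont funct1andL}; Proposition~\ref{prop:AXi}\ref{it:AXi exh dep} then yields that $X_i\,[H_0,S]\,\Pi_\sharp$ has finite trace per unit volume. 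The main technical delicacy in the whole argument is the treatment of $\Pi_r^\eps$ in \ref{item:welldef sigmaijeps 1}, which lacks the algebraic simplicity of $\Pi_0$ and $\Pi_1$; the trick of re-expressing it through the projection $\Pi^\eps$ is what unlocks the proof.
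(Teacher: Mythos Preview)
Your argument is correct, and for $\Pi_\sharp\in\{\Pi_0,\Pi_1\}$ it coincides with the paper's proof. For $\Pi_\sharp=\Pi_r^\eps$ you take a genuinely different route: the paper expands the Taylor remainder $\Pi_r^\eps=\tfrac{\iu}{2}\E^{-\iu\tilde\eps\mathcal{S}}[\Pi_1,\mathcal{S}]\E^{\iu\tilde\eps\mathcal{S}}$ directly, splitting $\E^{-\iu\tilde\eps\mathcal{S}}=(\E^{-\iu\tilde\eps\mathcal{S}}-\Id)+\Id$ and handling each of the three resulting summands via $\PO(\Hf,\Df)$-membership and Lemma~\ref{lem:(E^{eps A} -Id) and [E^{eps A},Xj] PO(Hf,Df) unif in eps}. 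Your trick of writing $\eps^2\Pi_r^\eps=\Pi^\eps-\Pi_0-\eps\Pi_1$ and exploiting that $\Pi^\eps$ is itself a projection in $\PO(\Hf,\Df)\cap\B_1^\tau$ is more elegant and shorter. Two small remarks: first, the division by $\eps^2$ needs $\eps>0$; at $\eps=0$ you should say separately that $\Pi_r^0=\tfrac{\iu}{2}[\Pi_1,\mathcal{S}]\in\B_1^\tau$ (immediate from $\Pi_1\in\B_1^\tau$ and $\mathcal{S}\in\B_\infty^\tau$). Second, the paper's longer expansion pays a dividend downstream: its decomposition produces explicit $\eps$-uniform $\|\cdot\|_{1,\tau}$ bounds on the pieces, which are reused verbatim in the proof of Proposition~\ref{cor:welldef sigmaij} to bound the remainder $\tau(J^S_{\mathrm{conv/prop},i}\Pi_r^\eps)$ uniformly in $\eps\in[0,1]$. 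Your shortcut gives $\B_1^\tau$-membership for each fixed $\eps$ but not directly an $\eps$-uniform norm estimate, so if you follow this path you will have to revisit the uniformity separately when you reach that bound.
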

\begin{proof}
{\it \ref{item:welldef sigmaijeps 1}}
We have that $\Pi_\sharp\in \PO(\Hf,\Df)$ by Proposition~\ref{prop:Pi1},  $S\in \PO( \Df)$ by hypothesis, and $\overline{[H_0,X_i]}\in \PO(\Df,\Hf)$ by Corollary~\ref{cor:der Pi0 and H0}\ref{item:[H0,X_j]}, thus we deduce that
\begin{equation}
\label{eqn:[H0,Xi]SPi}
\overline{[H_0, X_i ]} S\Pi_\sharp\in \PO(\Df,\Hf)\cdot \PO(\Df)\cdot \PO(\Hf,\Df)\subset \PO(\Hf)\subset \B_\infty^\tau.
\end{equation}
On the other hand, since
\begin{equation}
\label{eqn:[H0,S] in PO(Df,Hf)}
[H_0,S]=H_0 S-S H_0\in\PO(\Df,\Hf)\cdot \PO(\Df)+\PO(\Hf)\cdot \PO(\Df,\Hf)\subset \PO(\Df,\Hf),
\end{equation}
we get that
\begin{equation}
\label{eqn:[H_0,S]Pi}
[H_0, S]\Pi_\sharp\in \PO(\Df,\Hf)\cdot \PO(\Hf,\Df)\subset \PO(\Hf)\subset \B_\infty^\tau.
\end{equation}
Now we are going to show that the operators in \eqref{eqn:[H0,Xi]SPi} and \eqref{eqn:[H_0,S]Pi} are in $\B_1^\tau$, using the previous results.

First we analyse the case $\Pi_\sharp=\Pi_0$. In view of Lemma~\ref{lem:Pi in tau-class} we have $\Pi_0\in \B_1^\tau$, thus we deduce that
\[
\overline{[H_0, X_i ]} S\Pi_0=\overline{[H_0, X_i ]} S\Pi_0\cdot\Pi_0\in \B_\infty^\tau\cdot \B_1^\tau\subset  \B_1^\tau
\]
and similarly
\[
[H_0, S]\Pi_0=[H_0, S]\Pi_0\cdot\Pi_0 \in \B_\infty^\tau\cdot\B_1^\tau\subset  \B_1^\tau.
\]

We proceed with the case $\Pi_\sharp=\Pi_1$. By virtue of Lemma~\ref{lem:Pi in tau-class}, of the construction in Proposition~\ref{prop:Pi1}, and of Lemma~\ref{lemma:diag}\ref{item:diag1}, we have that $\B_1^\tau\ni \Pi_1=\Pi_1\Pi_0+\Pi_0\Pi_1$, hence we obtain that
\begin{equation}
\label{eqn:[H_0,Xi ]SPi1 in B1tau}
\overline{[H_0, X_i ]} S\Pi_1=\overline{[H_0, X_i ]} S\Pi_1\cdot\Pi_0+\overline{[H_0, X_i ]} S\Pi_0\cdot\Pi_1\in \B_\infty^\tau\cdot \B_1^\tau\subset  \B_1^\tau.
\end{equation}
One can argue in an analogous way to conclude that $[H_0, S]\Pi_1\in \B_1^\tau$ using \eqref{eqn:[H0,S] in PO(Df,Hf)}.

Finally, we analyse the case $\Pi_\sharp=\Pi_r^\eps=\tfrac{\I}{2}  \E^{-\I \tilde \epsi \mathcal{S}} \,[\Pi_1,\mathcal{S}]\, \E^{\I\tilde \epsi \mathcal{S}}$. Notice that
\begin{equation}\label{eqn:[H0,Xi]SPireps}
\begin{aligned}
\overline{[H_0, X_i ]} S\Pi_r^\eps&=\tfrac{\I}{2}\overline{[H_0, X_i ]} S(\E^{-\I \tilde \epsi \mathcal{S}}-\Id)  [\Pi_1,\mathcal{S}]\, \E^{\I\tilde \epsi \mathcal{S}}+\tfrac{\I}{2}\overline{[H_0, X_i ]} S [\Pi_1,\mathcal{S}]\, \E^{\I\tilde \epsi \mathcal{S}}\\
&=\overline{[H_0, X_i ]} S(\E^{-\I \tilde \epsi \mathcal{S}}-\Id)  \E^{\I\tilde \epsi \mathcal{S}} \Pi_r^\eps+\tfrac{\I}{2}\overline{[H_0, X_i ]} S \Pi_1 \mathcal{S} \E^{\I\tilde \epsi \mathcal{S}}-\tfrac{\I}{2}\overline{[H_0, X_i ]} S \mathcal{S} \Pi_1 \E^{\I\tilde \epsi \mathcal{S}}.
\end{aligned}
\end{equation}
Observe that on the right-hand side of the last equality each summand is in $\B_1^\tau$. Indeed, since $(\E^{-\I \tilde \epsi \mathcal{S}}-\Id) \in \PO(\Hf,\Df)$ by Lemma~\ref{lem:(E^{eps A} -Id) and [E^{eps A},Xj] PO(Hf,Df) unif in eps} and $\E^{\I\tilde \epsi \mathcal{S}} \Pi_r^\eps\in\B_1^\tau$ by Lemma~\ref{lem:Pi in tau-class}, we deduce that
\[
\overline{[H_0, X_i ]} S(\E^{-\I \tilde \epsi \mathcal{S}}-\Id) \cdot \E^{\I\tilde \epsi \mathcal{S}} \Pi_r^\eps\in \PO(\Df,\Hf)\cdot \PO(\Df)\cdot \PO(\Hf,\Df)\cdot\B_1^\tau \subset \PO(\Hf) \cdot \B_1^\tau\subset \B_1^\tau.
\]
Using \eqref{eqn:[H_0,Xi ]SPi1 in B1tau} and $\mathcal{S} \E^{\I\tilde \epsi \mathcal{S}}\in \PO(\Hf)$ by Proposition~\ref{prop:Pi1}\ref{item:propSA1}, we infer that
\[
\overline{[H_0, X_i ]} S \Pi_1\cdot \mathcal{S} \E^{\I\tilde \epsi \mathcal{S}}\in  \B_1^\tau\cdot \B_\infty^\tau\subset \B_1^\tau.
\]
As for the third summand in \eqref{eqn:[H0,Xi]SPireps}, observe that $\mathcal{S}\in\PO(\Hf,\Df)$ by Proposition~\ref{prop:Pi1}\ref{item:propSA1} and $\Pi_1 \E^{\I\tilde \epsi \mathcal{S}}\in \B_1^\tau$ by Lemma~\ref{lem:Pi in tau-class}, therefore we get that
\[
\overline{[H_0, X_i ]} S \mathcal{S} \cdot \Pi_1 \E^{\I\tilde \epsi \mathcal{S}}\in \PO(\Df,\Hf)\cdot \PO(\Df)\cdot \PO(\Hf,\Df)\cdot\B_1^\tau \subset \PO(\Hf) \cdot \B_1^\tau\subset \B_1^\tau.
\]
From a similar argument it follows that $[H_0, S]\Pi_r^\eps\in\B_1^\tau$ owing to \eqref{eqn:[H0,S] in PO(Df,Hf)}. 

{\it \ref{item:welldef sigmaijeps 2}} 
The conclusion follows by applying part~\ref{item:welldef sigmaijeps 1} of the statement (which we just proved), 
Lemma~\ref{lem:tau cont funct1andL}, Proposition~\ref{prop:charge-tauPeriodic}\ref{item:per+traceclassoncompact}, and Proposition~\ref{prop:AXi}\ref{it:AXi exh dep}.
\end{proof}

In the following Lemma, we prove that some expectation values of $J\subm{conv}{i}^S$ which 
are relevant to transport theory, reduce to the real part of the 
expectation value of $J\subm{naive}{i}^S := \iu\overline{[H_0, X_i ]} S$.  

\begin{lemma} \label{Lem:reduce_Jconv}
Under Assumption~\ref{assum:1} and the hypotheses on $S$ in Definition~\ref{def:$S$-current}, 
let $\Pi_\sharp\in\{ \Pi_0,\Pi_1,\Pi_r^\eps, \Pi^\eps \}$.  Then
\begin{equation} \label{eq:reduce_Jconv}
\tau(J\subm{conv}{i}^S \,\,\Pi_\sharp)=\re\tau(\iu\overline{[H_0, X_i ]} S \,\,\Pi_\sharp).
\end{equation}
\end{lemma}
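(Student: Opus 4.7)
The plan is to introduce the abbreviations
\[ B := \iu \overline{[H_0, X_i]} \, S \, \Pi_\sharp \qquad \text{and} \qquad C := \iu S \, \overline{[H_0, X_i]} \, \Pi_\sharp, \]
so that $J\subm{conv}{i}^S\,\Pi_\sharp = \tfrac{1}{2}(B + C)$ as bounded operators on $\Hi$; the combination makes sense because $\Pi_\sharp$ takes values in $\Df$, where $\overline{[H_0, X_i]}$ is defined. Since $B = J\subm{naive}{i}^S \, \Pi_\sharp$, the identity to prove reduces to the complex-conjugation relation $\tau(C) = \overline{\tau(B)}$.

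First I would verify that $B, C \in \B_1^\tau$. For $B$ this is exactly Corollary~\ref{cor:welldef sigmaijeps}\ref{item:welldef sigmaijeps 1}, applied to $\Pi^\eps$ via the expansion $\Pi^\eps = \Pi_0 + \eps \Pi_1 + \eps^2 \Pi_r^\eps$ from Proposition~\ref{prop:Pi1}\ref{item:propSA3}. For $C$, the same Corollary with $S$ replaced by the identity (which trivially fulfils the hypotheses of Definition~\ref{def:$S$-current}) shows $\overline{[H_0, X_i]} \, \Pi_\sharp \in \B_1^\tau$; composing on the left with $S \in \B_\infty^\tau$ then yields $C \in \B_1^\tau$.

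Next I would pass to the Bloch--Floquet--Zak picture. By Proposition~\ref{prop:derivation of P spaces} applied to $A = H_0$, one has $\overline{[H_0, X_i]}(k) = -\iu\, \partial_{k_i} H_0(k)$ in $\mathcal{L}(\Df, \Hf)$, hence
\[ B(k) = \partial_{k_i} H_0(k) \, S(k) \, \Pi_\sharp(k), \qquad C(k) = S(k) \, \partial_{k_i} H_0(k) \, \Pi_\sharp(k). \]
Since $\Pi_\sharp(k)$ has finite rank (Assumption~\ref{item:gap} for $\Pi_0$, and for $\Pi_1, \Pi_r^\eps, \Pi^\eps$ by their construction as finite sums of products involving $\Pi_0$), each fiber is trace class with uniformly bounded trace norm, so Proposition~\ref{prop:charge-tauPeriodic}\ref{item:per+traceclassfibr} represents $\tau(B)$ and $\tau(C)$ as integrals of $\Tr_{\Hf}(B(k))$ and $\Tr_{\Hf}(C(k))$ over $\mathbb{B}^d$. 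Each of $\partial_{k_i} H_0(k)$, $S(k) = \Id_{L^2(\FC_1)} \otimes s$, and $\Pi_\sharp(k)$ is self-adjoint on $\Hf$: this is manifest for $\Pi_0$, holds for $\Pi_1 = \iu [\Pi_0, \mathcal{S}]$ by direct computation using self-adjointness of $\mathcal{S}$ (from Proposition~\ref{prop:Pi1}\ref{item:propSA1}), and descends to $\Pi^\eps = \E^{-\iu \eps \mathcal{S}} \Pi_0 \E^{\iu \eps \mathcal{S}}$ and to $\Pi_r^\eps$ via \eqref{eqn:Pireps} through unitary conjugation. Hence $B(k)^* = \Pi_\sharp(k) \, S(k) \, \partial_{k_i} H_0(k)$, and cyclicity of the ordinary trace on $\Hf$ applied to the finite-rank factor $\Pi_\sharp(k)$ gives
\[ \Tr_{\Hf}(C(k)) = \Tr_{\Hf}\big(S(k) \, \partial_{k_i} H_0(k) \, \Pi_\sharp(k)\big) = \Tr_{\Hf}\big(\Pi_\sharp(k) \, S(k) \, \partial_{k_i} H_0(k)\big) = \Tr_{\Hf}(B(k)^*) = \overline{\Tr_{\Hf}(B(k))}. \]
Integrating over $\mathbb{B}^d$ yields $\tau(C) = \overline{\tau(B)}$, and therefore
\[ \tau(J\subm{conv}{i}^S \, \Pi_\sharp) = \tfrac{1}{2} \tau(B) + \tfrac{1}{2} \tau(C) = \tfrac{1}{2}\big(\tau(B) + \overline{\tau(B)}\big) = \re \tau(B). \]

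The main technical subtlety is the computation of the Hilbert-space adjoint of a product whose innermost factor $\overline{[H_0, X_i]}$ is unbounded on $\Hf$, which makes a direct algebraic manipulation on $\Hi$ delicate. Passing to the Bloch--Floquet--Zak picture removes this difficulty: on each fiber, the operator $\Pi_\sharp(k)$ on the right absorbs the unboundedness of $\partial_{k_i} H_0(k)$ by landing in $\Df$, so $B(k)$ is an honest composition of bounded operators on $\Hf$, its adjoint is the textbook reversal of order, and the fiberwise cyclicity step is standard because $\Pi_\sharp(k)$ is finite-rank.
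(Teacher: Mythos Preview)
Your argument is correct and takes a genuinely different route from the paper's. The paper never passes to the Bloch--Floquet--Zak fibers; instead it works entirely at the level of the trace per unit volume and splits into cases. For the projections $\Pi_\sharp\in\{\Pi_0,\Pi^\eps\}$ it inserts $\Pi_\sharp^2=\Pi_\sharp$, uses the conditional cyclicity of $\tau$ (Lemma~\ref{prop:cycl of tau}) to write $\tau(J\subm{naive}{i}^S\Pi_\sharp)=\tau(\Pi_\sharp J\subm{naive}{i}^S\Pi_\sharp)$, and then takes the real part using $\overline{\tau(A)}=\tau(A^*)$. For $\Pi_1$ it instead exploits the off-diagonal structure $\Pi_1=\Pi_1\Pi_0+\Pi_0\Pi_1$ together with $\Pi_1^*=\Pi_1$, and for $\Pi_r^\eps$ it concludes by $\R$-linearity from the three previous cases. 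Your fiberwise approach treats all four $\Pi_\sharp$ uniformly by reducing to finite-rank linear algebra on $\Hf$; the paper's approach is more structural and would transfer more directly to ergodic or random settings where no fiber decomposition is available.

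Two small imprecisions worth tightening. First, you claim $\partial_{k_i}H_0(k)$ is ``self-adjoint on $\Hf$''; what you actually have (and what suffices) is that it is \emph{symmetric} on $\Df$, which follows by differentiating $\langle u,H_0(k)v\rangle=\langle H_0(k)u,v\rangle$ for $u,v\in\Df$. Second, the cyclicity step $\Tr_{\Hf}(S\,\partial_{k_i}H_0\,\Pi_\sharp)=\Tr_{\Hf}(\Pi_\sharp\,S\,\partial_{k_i}H_0)$ involves an operator $\Pi_\sharp(k)S(k)\partial_{k_i}H_0(k)$ that is only densely defined on $\Df$; the cleanest justification is to diagonalize the self-adjoint finite-rank operator $\Pi_\sharp(k)=\sum_j\lambda_j|e_j\rangle\langle e_j|$ with $e_j\in\Df$, after which both traces become the same finite sum $\sum_j\lambda_j\langle e_j,S(k)\partial_{k_i}H_0(k)e_j\rangle$ by the symmetry of $\partial_{k_i}H_0(k)$ on $\Df$.
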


\begin{proof} 
We first prove the claim for $\Pi_\sharp\in\{ \Pi_0, \Pi^\eps \}$,  exploiting the fact that 
$\Pi_\sharp^2 = \Pi_\sharp = \Pi_\sharp^*$ in this case. Using in addition that  $\Pi_\sharp \in \B_1^\tau$
by Lemma \ref{lem:Pi in tau-class}, $J\subm{naive}{i}^S \Pi_\sharp \in \B_{\infty}^\tau$ by Proposition~\ref{cor:welldef sigmaijeps}\ref{item:welldef sigmaijeps 1}, the cyclicity of the trace per unit volume, $\overline{\Tr(A)}=\Tr(A^*)$, one has that 
\begin{align*}
\re\tau(\iu\overline{[H_0,X_i]}S\,\Pi_\sharp)&=  \re \tau(\Pi_\sharp \, \iu\overline{[H_0,X_i]}S \,\Pi_\sharp) 
\cr
&=\frac{1}{2}\Big(  \tau(\Pi_\sharp \, J\subm{naive}{i}^S \,\Pi_\sharp)+  \tau(\Pi_\sharp  (J\subm{naive}{i}^S)^* \,\Pi_\sharp) \Big) \cr
&=   \tau(\Pi_\sharp \, J\subm{conv}{i}^S \,\Pi_\sharp) =   \tau(J\subm{conv}{i}^S \,\Pi_\sharp).
\end{align*}
The case $\Pi_\sharp = \Pi_1$ is subtler, and we crucially use the fact that $\Pi_1=\Pi_1\Pi_0+\Pi_0\Pi_1$,
and $\Pi_1^* = \Pi_1$.  Indeed, by using Lemma~\ref{prop:cycl of tau}, Lemma~\ref{lem:Pi in tau-class}, 
and Proposition~\ref{cor:welldef sigmaijeps} we obtain that
\begin{align*}
\re\tau(J\subm{naive}{i}^S\,\Pi_1)&=\re\tau(J\subm{naive}{i}^S\,\Pi_1\Pi_0)+\re\tau(J\subm{naive}{i}^S\,\Pi_0\Pi_1)\cr
&=\re\tau(\Pi_0J\subm{naive}{i}^S\,\Pi_1)+\re\tau(\Pi_1J\subm{naive}{i}^S\,\Pi_0)\cr
&=\frac{1}{2}\Big(  \tau(\Pi_0 J\subm{naive}{i}^S\,\Pi_1)+  \tau(\Pi_1 J\subm{naive}{i}^S\,\Pi_0)     \Big)\cr
&\phantom{=}+\frac{1}{2}\Big(  \overline{\tau(\Pi_0 J\subm{naive}{i}^S\,\Pi_1)}+  \overline{\tau(\Pi_1 J\subm{naive}{i}^S\,\Pi_0)}     \Big)\cr
&=  \frac{1}{2}  \tau(\iu\overline{[H_0,X_i]}S\,\Pi_1)\cr                   
&\phantom{=} + \frac{1}{2}\Big(  {\tau(\Pi_1S\iu\overline{[H_0,X_i]}\Pi_0)}+  {\tau(\Pi_0S\iu\overline{[H_0,X_i]}\Pi_1)}     \Big)\cr
&= \frac{1}{2}  \tau(\iu\overline{[H_0,X_i]}S\,\Pi_1) + \frac{1}{2}  \tau(S\iu\overline{[H_0,X_i]}\Pi_1) \cr 
&=\tau(J\subm{conv}{i}^S\,\Pi_1).
\end{align*}
Finally, it remains to prove the claim for $\Pi_\sharp = \Pi_r^\epsi$. The latter follows by $\R$-linearity from the previous
cases, as $\Pi_r^\epsi = \Pi^\epsi - \Pi_0 - \epsi \Pi_1$.  
\end{proof}

\begin{proposition}[Bounds on the remainder terms]
\label{cor:welldef sigmaij}
Under Assumption~\ref{assum:1}, there exist $C_1,C_2 \in \R$ such that
\[
{\abs{\tau(J\subm{conv}{i}^S\,\Pi_r^\eps)}\leq C_1\quad\text{ and }\quad\abs{\tau(J\subm{prop}{i}^S\,\Pi_r^\eps)}\leq C_2\qquad\forall\,\eps\in[0,1].}
\]
\end{proposition}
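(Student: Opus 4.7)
The strategy is to revisit the proof of Proposition~\ref{cor:welldef sigmaijeps}\ref{item:welldef sigmaijeps 1} -- which already establishes that $\overline{[H_0,X_i]}S\,\Pi_r^\eps$ and $[H_0,S]\,\Pi_r^\eps$ belong to $\B_1^\tau$ -- while tracking carefully that all the operator norms appearing can be chosen uniformly in $\eps \in [0,1]$. The two key uniform bounds are:
\begin{itemize}
\item[(a)] $\|\Pi_r^\eps\|_{1,\tau} \le C_0$ for some $\eps$-independent constant $C_0$, which follows from the explicit representation $\Pi_r^\eps = \tfrac{\I}{2}(\E^{-\I\tilde\eps\mathcal{S}} - \Id)[\Pi_1,\mathcal{S}]\E^{\I\tilde\eps\mathcal{S}} + \tfrac{\I}{2}[\Pi_1,\mathcal{S}]\E^{\I\tilde\eps\mathcal{S}}$ in \eqref{eqn:Pireps} by observing that $[\Pi_1,\mathcal{S}] = \Pi_1\mathcal{S} - \mathcal{S}\Pi_1 \in \B_1^\tau$ (because $\Pi_1 = \Pi_1\Pi_0 + \Pi_0\Pi_1 \in \B_1^\tau$ by Lemma~\ref{lem:Pi in tau-class} and $\mathcal{S}\in\PO(\Hf)\subset\B_\infty^\tau$), that the exponentials $\E^{\pm\I\tilde\eps\mathcal{S}}$ are unitary, and that $(\E^{-\I\tilde\eps\mathcal{S}} - \Id)$ is bounded in $\PO(\Hf,\Df)\subset\PO(\Hf)$ uniformly in $\tilde\eps\in[0,1]$ by Lemma~\ref{lem:(E^{eps A} -Id) and [E^{eps A},Xj] PO(Hf,Df) unif in eps};
\item[(b)] $\|[H_0,S]\,\Pi_r^\eps\|_{1,\tau} \le C_1$ for some $\eps$-independent constant $C_1$, obtained by plugging the same decomposition of $\Pi_r^\eps$ into the product and arguing summand by summand as in the proof of Proposition~\ref{cor:welldef sigmaijeps}\ref{item:welldef sigmaijeps 1}, using in particular that $[H_0,S]\,\Pi_1 \in \B_1^\tau$ (already shown there) and the ideal property \eqref{eqn:Binftytau B1tau Binftytau is B1tau}.
\end{itemize}

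Given (a) and (b), the conventional current is straightforward: I would apply Lemma~\ref{Lem:reduce_Jconv} with $\Pi_\sharp = \Pi_r^\eps$ to rewrite $\tau(J\subm{conv}{i}^S\,\Pi_r^\eps) = \re\,\tau(\iu\,\overline{[H_0,X_i]}S\,\Pi_r^\eps)$ and then dominate the right-hand side by $\|\overline{[H_0,X_i]}S\,\Pi_r^\eps\|_{1,\tau}$. An estimate entirely parallel to (a) -- using once more the decomposition of $\Pi_r^\eps$ from \eqref{eqn:Pireps}, the uniform bounds of Lemma~\ref{lem:(E^{eps A} -Id) and [E^{eps A},Xj] PO(Hf,Df) unif in eps}, and the ideal property \eqref{eqn:Binftytau B1tau Binftytau is B1tau} -- then produces an $\eps$-independent bound $C$.

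The main obstacle is the second contribution to the proper current, since writing $J\subm{prop}{i}^S\,\Pi_r^\eps = \iu\,\overline{[H_0,X_i]}S\,\Pi_r^\eps + \iu\, X_i\,[H_0,S]\,\Pi_r^\eps$ brings in the unbounded operator $X_i$, so that $X_i[H_0,S]\Pi_r^\eps$ need not lie in $\B_1^\tau$ and the trace per unit volume cannot be estimated by a $\|\cdot\|_{1,\tau}$-norm directly. To overcome this I would invoke Proposition~\ref{prop:AXi}\ref{it:AXi exh dep}, valid because $[H_0,S]\,\Pi_r^\eps$ is periodic and trace class on compact sets by (b), to get
\[
\tau(X_i\,[H_0,S]\,\Pi_r^\eps) = \tfrac{1}{\abs{\FC_1}}\Tr\!\left(\chi_1\, X_i\,[H_0,S]\,\Pi_r^\eps\,\chi_1\right),
\]
and then use that $X_i$ is a multiplication operator commuting with $\chi_1$ and preserving its range, so that $\chi_1 X_i = \chi_1 X_i \chi_1$ is a bounded multiplication operator with operator norm $R_i := \max_{x\in\FC_1}\abs{x_i} < \infty$. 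Combined with Lemma~\ref{lem:tau cont funct1andL}, this yields $\abs{\tau(X_i\,[H_0,S]\,\Pi_r^\eps)} \le R_i\,\abs{\FC_1}^{-1}\,\|[H_0,S]\,\Pi_r^\eps\|_{1,\tau}$, which is uniformly bounded in $\eps$ thanks to (b). Adding this to the estimate for the first summand, handled as in the conventional case, gives the desired constant $C_2$ and completes the proof.
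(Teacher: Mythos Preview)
Your proof is correct and follows essentially the same route as the paper: both arguments reduce the conventional case via Lemma~\ref{Lem:reduce_Jconv} and control $\|\overline{[H_0,X_i]}S\,\Pi_r^\eps\|_{1,\tau}$ by inserting the decomposition \eqref{eqn:Pireps} and appealing to Lemma~\ref{lem:(E^{eps A} -Id) and [E^{eps A},Xj] PO(Hf,Df) unif in eps} together with the ideal property \eqref{eqn:Binftytau B1tau Binftytau is B1tau}, while the $X_i[H_0,S]\Pi_r^\eps$ term is handled via Proposition~\ref{prop:AXi}\ref{it:AXi exh dep}, the boundedness of $\chi_1 X_i \chi_1$, and Lemma~\ref{lem:tau cont funct1andL}. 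Your slight reorganization---isolating the uniform bounds (a) and (b) at the outset---is a minor presentational difference but not a different method.
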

\begin{proof}
We begin by showing the first inequality. In view of Lemma~\ref{Lem:reduce_Jconv}
and the triangle inequality, we obtain that 
\begin{equation}
\label{eqn:remainderJconv}
\begin{aligned}
\abs{\tau(J\subm{conv}{i}^S\,\Pi_r^\eps)}&\leq \abs{\tau(\iu\overline{[H_0, X_i ]} S\Pi_r^\eps)}\leq \abs{ \tau ( \overline{[H_0, X_i ]} S(\E^{-\I \tilde \epsi \mathcal{S}}-\Id)  \E^{\I\tilde \epsi \mathcal{S}} \Pi_r^\eps)}\cr
&\phantom{\leq}+\tfrac{1}{2}\abs{ \tau (\overline{[H_0, X_i ]} S \Pi_1 \mathcal{S} \E^{\I\tilde \epsi \mathcal{S}})}+\tfrac{1}{2}\abs{\tau (\overline{[H_0, X_i ]} S \mathcal{S} \Pi_1 \E^{\I\tilde \epsi \mathcal{S}})}.
\end{aligned}
\end{equation}
By using the inequalities in \eqref{eqn:Binftytau B1tau Binftytau is B1tau} and Lemma~\ref{lem:(E^{eps A} -Id) and [E^{eps A},Xj] PO(Hf,Df) unif in eps}, the first summand on the right-hand side can be bounded uniformly in $\eps$ as 
\[
\abs{ \tau ( \overline{[H_0, X_i ]} S(\E^{-\I \tilde \epsi \mathcal{S}}-\Id)  \E^{\I\tilde \epsi \mathcal{S}} \Pi_r^\eps)}\leq C\norm{\overline{[H_0, X_i ]}}_{\PO(\Df,\Hf)} \norm{S}_{\PO(\Df)} \norm{[\Pi_1,\mathcal{S}]}_{1,\tau},
\]
where $C$ is a constant independent of $\eps$.
Applying again the inequalities in~\eqref{eqn:Binftytau B1tau Binftytau is B1tau}, we get for the second and third summand in \eqref{eqn:remainderJconv}, respectively
\[
\abs{ \tau (\overline{[H_0, X_i ]} S \Pi_1 \mathcal{S} \E^{\I\tilde \epsi \mathcal{S}})}\leq \norm{\overline{[H_0, X_i ]} S \Pi_1}_{1,\tau} \norm{\mathcal{S}}  
\]
using Proposition~\ref{cor:welldef sigmaijeps}\ref{item:welldef sigmaijeps 1}
and
\[
\abs{\tau (\overline{[H_0, X_i ]} S \mathcal{S} \Pi_1 \E^{\I\tilde \epsi \mathcal{S}})}\leq\norm{\overline{[H_0, X_i ]}}_{\PO(\Df,\Hf)} \norm{S}_{\PO(\Df)} \norm{\mathcal{S}}_{\PO(\Hf,\Df)} \norm{\Pi_1 }_{1,\tau}
\]
in view of Lemma~\ref{lem:Pi in tau-class}. 

Now, to obtain the second inequality of the thesis, notice that 
\begin{equation}
\label{eqn:remainderJprop}
\abs{\tau(J\subm{prop}{i}^S\,\Pi_r^\eps)}\leq\abs{ \tau (\overline{[H_0, X_i ]} S\Pi_r^\eps)} +\abs{ \tau(X_i [H_0, S]\Pi_r^\eps) }. 
\end{equation}
The first summand on the right-hand side is bounded uniformly in $\eps$, as it is shown before, while for the second one we proceed as follows.
 Since $[H_0, S]\Pi_r^\eps$ is $\tau$-class by Proposition~\ref{cor:welldef sigmaijeps}\ref{item:welldef sigmaijeps 1}, {Proposition~\ref{prop:AXi}\ref{it:AXi exh dep}} implies that
\[
\abs{ \tau(X_i [H_0, S]\Pi_r^\eps) }=\frac{1}{\abs{\FC_1}}\abs{\Tr\left( \chi_1 X_i [H_0, S]\Pi_r^\eps \chi_1 \right)}.
 \]
Applying the inequality $\abs{\Tr(AB)}\leq \norm{A}\Tr(\abs{B})$ for a bounded operator $A$ and a trace class operator $B$, estimate \eqref{eqn:tau cont funct} and Proposition~\ref{prop:charge-tauPeriodic}\ref{item:per+traceclassoncompact}, we have that
\begin{align*}
\frac{1}{\abs{\FC_1}}\abs{\Tr\left( \chi_1 X_i [H_0, S]\Pi_r^\eps \chi_1 \right)}
&\leq \frac{1}{\abs{\FC_1}}\norm{\chi_1 X_i \chi_1}  \Tr(\left \lvert \chi_1 [H_0, S]\Pi_r^\eps \chi_1  \right\rvert)\\
&\leq \norm{\chi_1 X_i \chi_1}\norm{[H_0, S]\Pi_r^\eps}_{1,\tau}.
\end{align*}
Finally, since $[H_0, S]\in\PO(\Df,\Hf)$ as shown in \eqref{eqn:[H0,S] in PO(Df,Hf)}, one can reason as in \eqref{eqn:remainderJconv} to conclude that $\norm{[H_0, S]\Pi_r^\eps}_{1,\tau}\leq D$, for a constant $D$  independent of $\eps$, which yields the second inequality in the thesis. 
\end{proof}

\subsection{Chern-like and extra contributions to the $S$-conductivity} 
\label{sec:ProofKE}

In this subsection we prove Proposition~\ref{prop:Extrachanged}, Proposition~\ref{prop:Btorque}, 
Theorem~\ref{thm:notKubo} and Lemma~\ref{lemma:Extra2}.

\begin{proof}[Proof of Proposition~\ref{prop:Extrachanged}]
We are going to show that $\mathsf{C}$ and $\mathsf{E}_\ell$ are in $\B_1^\tau$ for any $\ell\in\{1,\dots,4\}$ and then equality~\eqref{eqn:tau reads only prim cell} follows  from Proposition~\ref{prop:charge-tauPeriodic}\ref{item:per+traceclassoncompact}.
We begin by looking at the Chern-like term $\mathsf{C}$. We can write on $\Ran(\chi_L)$
\begin{equation} 
\label{eqn:K}
\mathsf{C} = \Pi_0\big[\, \overline{[X_i,\Pi_0]}, S\overline{[X_j,\Pi_0]}\, \big],
\end{equation}
where in the last equality we have used Remark~\ref{rmk:SmoothBF} and \eqref{eqn:closure der of Pi0 and H0}. 
In view of Corollary~\ref{cor:der Pi0 and H0}\ref{item:[Pi0,X_j]} and Lemma~\ref{lem:Pi in tau-class}, we have that 
\begin{align*}
\Pi_0\cdot\big[\, \overline{[X_i,\Pi_0]}, S\overline{[X_j,\Pi_0]}\, \big] \in \B_1^\tau \,\cdot\,\PO(\Hf).
\end{align*}
Therefore, the above operators are $\tau$-class as $\PO(\Hf)\,\cdot\,\B_1^\tau\subset\B_1^\tau$. In view of the cyclicity of the trace per unit volume and the off-diagonality of $[X_i,\Pi_0]$, one can rewrite $\tau(\mathsf{C})$ as
\begin{align*}
\tau(\Pi_0\big[[X_i,\Pi_0], S [X_j,\Pi_0] \big])&=\tau(\Pi_0[X_i,\Pi_0] S [X_j,\Pi_0] )-\tau(\Pi_0S [X_j,\Pi_0][X_i,\Pi_0] )\\
&=\tau(\Pi_0[X_i,\Pi_0] S [X_j,\Pi_0] )-\tau(\Pi_0 [X_j,\Pi_0][X_i,\Pi_0] S)\\
&=\tau(\Pi_0\big[[X_i,\Pi_0] S, [X_j,\Pi_0]\big] ).
\end{align*}

We now analyse the first extra term $\mathsf{E}_1$. Similarly to the previous computation, we have on $\Ran(\chi_L)$
\begin{equation} 
\label{eqn:E1}
\mathsf{E}_1= \iu\Pi_0 \overline{[H_0,X_i]}\Pi_0 S \Pi_1 +\iu \Pi_0^{\perp}\overline{[H_0,X_i]}\Pi_0^{\perp} S \Pi_1.
\end{equation}
In view of Corollary~\ref{cor:der Pi0 and H0}\ref{item:[H0,X_j]}, Lemma~\ref{lem:resolvent-projectionPHfDf}, Lemma~\ref{lem:Pi in tau-class} and Proposition~\ref{cor:welldef sigmaijeps}\ref{item:welldef sigmaijeps 1}, we get that
\begin{align*}
&\Pi_0 \overline{[H_0,X_i]}\Pi_0S\cdot \Pi_1 \in  \PO(\Hf)\,\cdot\,\B_1^\tau\\
& \Pi_0^{\perp}\overline{[H_0,X_i]}\Pi_0^{\perp} S\Pi_1=\Pi_0^{\perp}\overline{[H_0,X_i]} S \Pi_1-\Pi_0^{\perp}\overline{[H_0,X_i]}\Pi_0 S \Pi_1\in  \B_1^\tau.
\end{align*}
Then, noticing that $\Pi_1=\Pi_1\su{OD}$ by construction (see Propositions~\ref{prop:Pi1}\ref{item:propSA3} and \ref{prop:I(A)}) and applying Lemma~\ref{prop:cycl of tau}, we obtain the final expression for $\tau(\mathsf{E}_1)$ in \eqref{eqn: tau E1 E3}.

We now move to $\mathsf{E}_2$. Analogously, we get on $\Ran(\chi_L)$
\begin{equation}
\label{eqn:E2}
\mathsf{E}_2=\iu H_0 \big[\,\overline{[X_i,\Pi_0]},\Pi_0 \,\big ]S\Pi_1-\iu  \big[\,\overline{[X_i,\Pi_0]},\Pi_0 \,\big ]S\Pi_1H_0.
\end{equation}
By Corollary~\ref{cor:der Pi0 and H0}\ref{item:[Pi0,X_j]}, Lemma~\ref{lem:Pi in tau-class} and Proposition~\ref{prop:Pi1}\ref{item:propSA3}, we deduce that
\begin{align*}
&H_0 \big[\,\overline{[X_i,\Pi_0]},\Pi_0 \,\big]S\cdot\Pi_1\in  \PO(\Hf)\,\cdot\,\B_1^\tau\\
&\big[\,\overline{[X_i,\Pi_0]},\Pi_0 \,\big ]\cdot S\Pi_1H_0\in\B_1^\tau\,\cdot\,\PO(\Hf).
\end{align*}
Now we analyse $\mathsf{E}_3$. Similarly, we obtain on $\Ran(\chi_L)$
\begin{equation} 
\label{eqn:E3}
\mathsf{E}_3=\iu \big[\,\overline{[X_i,\Pi_0]},\Pi_0 \,\big ]\cdot [S,H_0]\Pi_1\in \B_1^\tau\,\cdot\,\PO(\Hf).
\end{equation}
Analogously, for $\mathsf{E}_4$ we have that on $\Ran(\chi_L)$
\begin{equation} 
\label{eqn:E4}
\mathsf{E}_4=\iu\big[\overline{[X_i,\Pi_0]},\Pi_0 S \overline{[\Pi_0,X_j]} \big]\in \B_1^\tau.
\end{equation}
Finally, by applying  Proposition~\ref{cor:welldef sigmaijeps}\ref{item:welldef sigmaijeps 2} we infer that $X_i\mathsf{R}$ has finite trace per unit volume for any $i\in\{1,\dots,d\}$ and equality~\eqref{eqn:tau reads only prim cell} is implied by Proposition~\ref{prop:AXi}\ref{it:AXi exh dep}.

Finally, Lemma~\ref{prop:cycl of tau} implies that $\tau(\mathsf{E}_2)=0=\tau(\mathsf{E}_4)$. 
Indeed, for the first identity observe that 
\[
\tau(  X_i\su{OD}S\Pi_1H_0)=\tau( \Pi_1H_0 X_i\su{OD}S)=\tau( H_0 X_i\su{OD}S\Pi_1),
\]
and for the second one just notices that $\mathsf{E}_4$ is the commutator of $\overline{[X_i,\Pi_0]}\in\B_\infty^\tau$ and $\Pi_0 S \overline{[\Pi_0,X_j]}\in\B_1^\tau$.
\end{proof}

\begin{proof}[Proof of Proposition \ref{prop:Btorque}]
{\ref{item:doublecomm formula}}
First of all, notice that $[H_0,B]\Pi_1\in \B_1^\tau$ since $\Pi_1=\Pi_1\Pi_0+\Pi_0\Pi_1$ and, for  $\Pi_\sharp \in 
\set{\Pi_0, \Pi_1}$, one has  $[H_0,B]\cdot\Pi_\sharp\in\PO(\Df,\Hf)\cdot\PO(\Hf,\Df)\subset \PO(\Hf)$ and $\Pi_\sharp\in\PO(\Hf,\Df)\cap\B_1^\tau$.  
Since 
$$
\Pi_1=\frac{\iu}{2 \pi} \oint_C \di z \, (H_0 - z \Id)^{-1} \, \big[\,\overline{[X_j,\Pi_0]},\Pi_0\,\big] \, (H_0 - z \Id)^{-1}
$$ by construction (see Propositions~\ref{prop:Pi1}\ref{item:propSA3} and \ref{prop:I(A)}), we have that
\begin{align}
\iu\tau([H_0,B]\Pi_1)&=-\frac{1}{2\pi}\oint_C \di z \,\tau\([H_0,B] (H_0 - z \Id)^{-1} \, 
\big[\,\overline{[X_j,\Pi_0]},\Pi_0\,\big] \, (H_0 - z \Id)^{-1}\)\cr
&=-\frac{1}{2\pi}\oint_C \di z \,\tau\((H_0 - z \Id)^{-1} [H_0,B] (H_0 - z \Id)^{-1} \big[\,\overline{[X_j,\Pi_0]},\Pi_0\,\big]\)\cr
&=-\frac{1}{2\pi}\oint_C \di z \,\tau\( [B,(H_0 - z \Id)^{-1}] \big[\,\overline{[X_j,\Pi_0]},\Pi_0\,\big]\)\cr
\label{eqn:interm formula for doublecomm}
&=\iu\tau\( [B,\Pi_0] \big[\,\overline{[X_j,\Pi_0]},\Pi_0\,\big]\),
\end{align}
where in the first equality we have used the cyclicity of the trace per unit volume , based on the fact 
that $[H_0,B] (H_0 - z \Id)^{-1}\in \B_\infty^\tau$ by Lemma~\ref{lem:resolvent-projectionPHfDf}, 
and that $\big[\,\overline{[X_j,\Pi_0]},\Pi_0\,\big]=\overline{[X_j,\Pi_0]}\cdot \Pi_0-\Pi_0\cdot \overline{[X_j,\Pi_0]}\in\PO(\Hf)\cdot\B_1^\tau+\B_1^\tau\cdot \PO(\Hf)\subset\B_1^\tau$ by Corollary~\ref{cor:der Pi0 and H0}\ref{item:[Pi0,X_j]} and Lemma~\ref{lem:Pi in tau-class}.

Finally, in virtue of $[B,\Pi_0]\overline{[X_j,\Pi_0]}\Pi_0=\Pi_0 [B,\Pi_0]\overline{[X_j,\Pi_0]}$ and by using again the cyclicity of the trace per unit volume since $\Pi_0\overline{[X_j,\Pi_0]}\in\B_1^\tau$ and $[B,\Pi_0]\in \B_\infty^\tau$,
we conclude that
\begin{align*}
\tau\( \iu[B,\Pi_0] \big[\,\overline{[X_j,\Pi_0]},\Pi_0\,\big]\)&=\tau\(\iu [B,\Pi_0]\overline{[X_j,\Pi_0]}\Pi_0\)
-\tau\(\iu [B,\Pi_0]\Pi_0\overline{[X_j,\Pi_0]}\)\\
&=\tau\(\iu \Pi_0 [B,\Pi_0]\overline{[X_j,\Pi_0]}\)-\tau\(\iu \Pi_0\overline{[X_j,\Pi_0]} [B,\Pi_0]\)\\
&=\tau\(\iu \Pi_0 \big[\,[B,\Pi_0],\overline{[X_j,\Pi_0]} \, \big] \).
\end{align*}

\noindent {\ref{item: vanishing of doublecomm formula if A per}}
In view of intermediate formula \eqref{eqn:interm formula for doublecomm}, the claim is equivalent to show that 
$$\tau\( [B,\Pi_0] \big[\,\overline{[X_j,\Pi_0]},\Pi_0\,\big]\)=0.$$
By algebraic manipulations, exploiting the fact that $\Pi_0$ is a projection, we obtain that on $\Ran(\chi_1)$
\begin{align*}
[B,\Pi_0] \big[\,\overline{[X_j,\Pi_0]},\Pi_0\,\big]&=\Pi_0^\perp B\Pi_0 X_j\Pi_0^\perp-\Pi_0 B\Pi_0^\perp X_j\Pi_0\\
&=\Pi_0^\perp B\Pi_0\overline{[\Pi_0,X_j]}+\Pi_0 B\Pi_0^\perp\overline{[\Pi_0,X_j]}.
\end{align*}
Therefore, since each summand on the right-hand side above is in $\B_1^\tau$, using again the cyclicity of the trace per unit volume and the off-diagonality of $\overline{[\Pi_0,X_j]}$, we get that

\begin{align*}
\tau([B,\Pi_0] \big[\,\overline{[X_j,\Pi_0]},\Pi_0\,\big])&=\tau(\Pi_0^\perp B\Pi_0\overline{[\Pi_0,X_j]})+\tau(\Pi_0 B\Pi_0^\perp\overline{[\Pi_0,X_j]})\\
&=\tau( B\Pi_0\overline{[\Pi_0,X_j]}\Pi_0^\perp)+\tau(B\Pi_0^\perp\overline{[\Pi_0,X_j]}\Pi_0)\\
&=\tau(B\overline{[\Pi_0,X_j]})=\frac{1}{\abs{\FC_1}}\Tr\left( \chi_1 {[B\Pi_0,X_j]} \chi_1 \right)\\
&=\frac{1}{\abs{\FC_1}}\Big\{\!\! \Tr\left( \chi_1 B\Pi_0 \chi_1 X_j \chi_1 \right)-\Tr\left( \chi_1 X_j \chi_1 B\Pi_0  \chi_1  \right)\!\!\Big\}=0,
\end{align*}
where we have used that $[B,X_j]=0$ and the conditional cyclicity of the trace, since $\chi_1 B\Pi_0 \chi_1$ is trace class in view of the fact that $B\Pi_0\in\B_1^\tau$ and of Lemma~\ref{lem:tau cont funct1andL}.
\end{proof}

\begin{proof}[Proof of Theorem~\ref{thm:notKubo}]
First of all, Proposition~\ref{cor:welldef sigmaijeps}\ref{item:welldef sigmaijeps 2} implies that $J\subm{prop}{i}^S\Pi_\sharp$ has a finite trace per unit volume for $\Pi_\sharp\in\{ \Pi_0,\Pi_1,\Pi_r^\eps\}$, thus all the terms appearing in \eqref{eqn:sigmaprop} are finite. By virtue of Proposition~\ref{cor:welldef sigmaij}, we obtain that 
$\sigma\subm{prop}{ij}^S = \re\tau(J\subm{prop}{i}^S\,\Pi_1)$. 
Now, notice that
$\re\tau(\iu[H_0,X_i]S\,\Pi_1) =\tau(J\subm{conv}{i}^S\,\Pi_1)$ by Lemma~\ref{Lem:reduce_Jconv}.
Therefore, by previous computation in \eqref{eqn:Ji0Pi1chiL} and \eqref{eq:T+R}, we get that
\begin{align}
\re\tau(J\subm{prop}{i}^S\,\Pi_1)&=\re\tau(\iu[H_0,X_i]S\,\Pi_1) + \re\tau(X_i\mathsf{R})\cr
&=\tau(J\subm{conv}{i}^S\,\Pi_1)+\re\tau(X_i\mathsf{R}).
\end{align}
Observe that Proposition~\ref{cor:welldef sigmaij} implies that 
$\sigma\subm{conv}{ij}^S = \re\tau(J\subm{conv}{i}^S\,\Pi_1) = \tau(J\subm{conv}{i}^S\,\Pi_1)$. 
From equation \eqref{eqn:K and beyond} and Proposition~\ref{prop:Extrachanged} we derive formula \eqref{eqn:sigmaT}. Corollary~\ref{cor:tau R=0} along with Proposition~\ref{prop:AXi}\ref{it:AXi exh indep} imply the well-posedness of the rotation $S$-conductivity, as it appears in \eqref{eqn:sigmaR}.
\end{proof}

\begin{proof}[Proof of Lemma~\ref{lemma:Extra2}]
Obviously, $\mathsf{E}_3=0$ and $\mathsf{R}=0$ by using \eqref{Sconserved}. Proposition~\ref{prop:Extrachanged} implies that $\tau(\mathsf{E}_1)=\iu\tau(  [H_0,X_i\su{D}] S\su{OD} \Pi_1)=0$, since $S\su{OD}=0$.
\end{proof}

\appendix

\section{Unit Cell Consistency and vanishing of persistent $S$-currents} \label{app:Persistent}

\subsection{Results on the Unit Cell Consistency}
\label{Sec:UnitCellConsistency}

As a preliminary step, we prove that the trace per unit volume, acting on a 
suitable class of operators, is independent of the choice of the fundamental cell $\FC_1$.
First, notice that chosen any two primitive cells of arbitrary shape, it is possible to cut the first up into pieces, which, when translated through suitable lattice vectors, can be reassembled to give the second. This fact, well-known to solid state physicists \cite{AshcroftMermin}, can be reformulated in the following Lemma. 
Recall that $\FC_L$ is defined in \eqref{eqn:defn FCL} with reference to a linear basis $\set{a_1, \ldots, a_d}$ of $\Gamma$, while $\widetilde{\FC}_L$ in \eqref{eqn:defn tildeFCL} refers to another linear basis $\set{\tilde{a}_1, \ldots, \tilde{a}_d}$ of $\Gamma$.
 
\begin{lemma}
Let $\FC_1$ and $\widetilde{\FC}_1$ be two primitive cells. Then there exist a finite set $I\subset\Gamma$ and a family of subsets $\set{P_\gamma}_{\gamma\in I}\subset \X$ such that$^{\ref{fn:disjun}}$
\begin{equation}
\label{eq:decomps C_1 and tildeC_1}
\FC_1= \bigsqcup_{\gamma\in I} \mathrm{T}_{\gamma} P_\gamma \quad\text{ and }\quad\widetilde{\FC}_1=\bigsqcup_{\gamma\in I}  P_\gamma.
\end{equation}
\end{lemma}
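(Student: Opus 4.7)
The plan is to construct the sets $P_\gamma$ by intersecting $\widetilde{\FC}_1$ with the translates of $\FC_1$ (or vice versa) and observing that only finitely many such intersections are non-trivial, because both cells are bounded.

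Concretely, I would set
\[
P_\gamma := \widetilde{\FC}_1 \cap \mathrm{T}_{-\gamma} \FC_1 \qquad \text{for } \gamma \in \Gamma,
\]
and define
\[
I := \{ \gamma \in \Gamma \,:\, P_\gamma \neq \emptyset \text{ (up to zero-measure sets)} \}.
\]
Since $\widetilde{\FC}_1$ is bounded (contained in the compact parallelepiped spanned by $\{\tilde a_j\}$) and $\FC_1$ is bounded, only finitely many lattice translates $\mathrm{T}_{-\gamma}\FC_1$ can meet $\widetilde{\FC}_1$, so $I$ is finite; in the discrete case $I$ is a fortiori finite because both cells contain finitely many points of $\X$.

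The decomposition of $\FC_1$ will follow from the fact that $\widetilde{\FC}_1$ is a fundamental domain, \ie\ $\X = \bigsqcup_{\gamma \in \Gamma} \mathrm{T}_\gamma \widetilde{\FC}_1$ up to zero-measure sets. Intersecting with $\FC_1$ gives $\FC_1 = \bigsqcup_{\gamma \in \Gamma} (\FC_1 \cap \mathrm{T}_\gamma \widetilde{\FC}_1)$, and the non-trivial terms are indexed exactly by $\gamma \in I$; since $\mathrm{T}_\gamma P_\gamma = \mathrm{T}_\gamma \widetilde{\FC}_1 \cap \FC_1$, the first equality of \eqref{eq:decomps C_1 and tildeC_1} follows. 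Symmetrically, using that $\FC_1$ is itself a fundamental domain, $\widetilde{\FC}_1 = \bigsqcup_{\gamma' \in \Gamma} (\widetilde{\FC}_1 \cap \mathrm{T}_{\gamma'} \FC_1)$; relabeling $\gamma' = -\gamma$ and noting $\Gamma = -\Gamma$, this becomes $\widetilde{\FC}_1 = \bigsqcup_{\gamma \in I} P_\gamma$, which is the second equality.

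The only point requiring a bit of care is the meaning of \virg{disjoint union up to zero-measure sets}, which must be interpreted with respect to the ambient measure (Lebesgue in the continuum case, counting in the discrete case). In both cases the statement that translates of a primitive cell tile $\X$ is part of the definition/standard construction of the fundamental domain, so I would simply invoke it; no deeper geometric argument is needed. I do not expect any serious obstacle here: the argument is essentially a bookkeeping exercise, and the finiteness of $I$ is immediate from boundedness.
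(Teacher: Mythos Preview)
Your proposal is correct and essentially identical to the paper's proof: both define $P_\gamma := \mathrm{T}_{-\gamma}\FC_1 \cap \widetilde{\FC}_1$, obtain finiteness of $I$ from boundedness/compactness of the cells, and derive the two decompositions by intersecting one primitive cell with the tiling by translates of the other. The only difference is cosmetic ordering (the paper first decomposes $\FC_1$ and then reads off $P_\gamma$, whereas you define $P_\gamma$ first), so there is nothing to add.
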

\begin{proof}
Clearly, by the very definition of a primitive cell we have that 
\begin{equation}
\label{eqn:spaceastranslofprcell}
\bigsqcup_{\gamma\in\Gamma} \mathrm{T}_{\gamma}\FC_1=\X=\bigsqcup_{\gamma\in\Gamma} \mathrm{T}_{\gamma}\widetilde{\FC}_1.
\end{equation}
Therefore, by the second identity, we can rewrite $\FC_1$ as
\begin{align*}
\FC_1=\bigsqcup_{\gamma\in\Gamma} \FC_1\cap \mathrm{T}_{\gamma}\widetilde{\FC}_1=\bigsqcup_{\gamma\in I} \FC_1\cap \mathrm{T}_{\gamma}\widetilde{\FC}_1=\bigsqcup_{\gamma\in I}\mathrm{T}_{\gamma} \(\mathrm{T}_{-\gamma}\FC_1\cap \widetilde{\FC}_1\),
\end{align*}
where we have used that, since $\FC_1$ is compact, there exists a finite subset $I\subset \Gamma$ such that $\FC_1\cap \mathrm{T}_{\gamma}\widetilde{\FC}_1= \varnothing$ any $\gamma\in \Gamma \setminus I$. We set 
\begin{equation} \label{P_gamma}
P_\gamma:=\mathrm{T}_{-\gamma}\FC_1\cap \widetilde{\FC}_1\subset \widetilde{\FC}_1. 
\end{equation}
The proof is concluded by observing that
\begin{align*}
\bigsqcup_{\gamma\in I}P_\gamma=\bigsqcup_{\gamma\in \Gamma}P_\gamma=\bigsqcup_{\gamma\in \Gamma}\mathrm{T}_{-\gamma}\FC_1\cap \widetilde{\FC}_1=\widetilde{\FC}_1,
\end{align*}
where we have used the first identity in \eqref{eqn:spaceastranslofprcell}.
\end{proof}

Denoting by $\tau(\,\cdot\,)$ and $\widetilde{\tau}(\,\cdot\,)$, respectively, the trace per unit volume induced by the choice of the primitive cells $\FC_1$ and $\widetilde{\FC}_1$, we have the following result.

\begin{proposition}
\label{cor:tauindprcell}
Consider an operator $A$ which is periodic and trace class on compact sets. 
\begin{enumerate}[label=(\roman*), ref=(\roman*)]
\item \label{it: tau indep for Aperiodic} Then $ \tau(A)=\widetilde{\tau}(A)$.
\item \label{it: tau dep for X_iA}
If, in addition, $\Tr(\chi_{P_\gamma} A\, \chi_{P_\gamma})=0$ for all $\gamma\in I$, 
where $\set{P_\gamma}_{\gamma \in I}$ are the sets defined in \eqref{P_gamma},
then $\tau(X_iA)=\widetilde{\tau}(X_iA)$.
\end{enumerate}
\end{proposition}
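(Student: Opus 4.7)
The starting point for both parts is Proposition~\ref{prop:charge-tauPeriodic}\ref{item:per+traceclassoncompact} (for part (i)) and Proposition~\ref{prop:AXi}\ref{it:AXi exh dep} (for part (ii)), which allow us to replace $\tau(A)$ and $\tau(X_iA)$ by ordinary traces over the fundamental cell, namely
\[
\tau(A)=\frac{1}{|\FC_1|}\Tr(\chi_1 A\chi_1),\qquad \widetilde\tau(A)=\frac{1}{|\widetilde\FC_1|}\Tr(\widetilde\chi_1 A\widetilde\chi_1),
\]
and analogously for $\tau(X_iA)$. Since both $\FC_1$ and $\widetilde\FC_1$ are primitive cells for the same lattice $\Gamma$, one has $|\FC_1|=|\widetilde\FC_1|$, so it suffices to prove the equality of the two numerators.

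The key algebraic tool is the common refinement $\FC_1=\bigsqcup_{\gamma\in I}\mathrm{T}_\gamma P_\gamma$ and $\widetilde\FC_1=\bigsqcup_{\gamma\in I}P_\gamma$ provided by \eqref{eq:decomps C_1 and tildeC_1}. Choosing an orthonormal basis of $\Hi$ adapted to the decompositions $\X=\bigsqcup_\gamma \mathrm{T}_\gamma P_\gamma \sqcup (\X\setminus\FC_1)$ and $\X=\bigsqcup_\gamma P_\gamma \sqcup (\X\setminus\widetilde\FC_1)$, the cross-terms $\chi_{T_\gamma P_\gamma}(\cdot)\chi_{T_\nu P_\nu}$ with $\gamma\neq\nu$ give no contribution to the trace, so
\[
\Tr(\chi_1 A\chi_1)=\sum_{\gamma\in I}\Tr(\chi_{\mathrm{T}_\gamma P_\gamma}\,A\,\chi_{\mathrm{T}_\gamma P_\gamma}),\qquad \Tr(\widetilde\chi_1 A\widetilde\chi_1)=\sum_{\gamma\in I}\Tr(\chi_{P_\gamma}\,A\,\chi_{P_\gamma}).
\]
For part (i), using $\chi_{\mathrm{T}_\gamma P_\gamma}=T_\gamma\chi_{P_\gamma}T_\gamma^{-1}$, the periodicity $T_\gamma^{-1}AT_\gamma=A$, and cyclicity of the ordinary trace, one gets $\Tr(\chi_{\mathrm{T}_\gamma P_\gamma}A\chi_{\mathrm{T}_\gamma P_\gamma})=\Tr(\chi_{P_\gamma}A\chi_{P_\gamma})$, and summing in $\gamma$ gives \ref{it: tau indep for Aperiodic}.

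For part \ref{it: tau dep for X_iA}, the same conjugation yields an additional term: from $[X_i,T_\gamma]=\gamma_i T_\gamma$ one deduces $T_\gamma^{-1}X_iT_\gamma=X_i+\gamma_i\Id$, hence
\[
\Tr(\chi_{\mathrm{T}_\gamma P_\gamma}\,X_i A\,\chi_{\mathrm{T}_\gamma P_\gamma})=\Tr(\chi_{P_\gamma}\,X_iA\,\chi_{P_\gamma})+\gamma_i\Tr(\chi_{P_\gamma}\,A\,\chi_{P_\gamma}).
\]
Summing over $\gamma\in I$ gives
\[
\Tr(\chi_1 X_iA\chi_1)=\Tr(\widetilde\chi_1 X_iA\widetilde\chi_1)+\sum_{\gamma\in I}\gamma_i\,\Tr(\chi_{P_\gamma}\,A\,\chi_{P_\gamma}),
\]
and the residual sum vanishes by the extra hypothesis, which concludes the proof.

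The main point requiring care, and the one I would double-check at the writing stage, is the legitimacy of the trace decompositions displayed above: this uses that $A$ is trace class on compact sets (so each summand is a trace of a trace-class operator) together with the choice of an orthonormal basis adapted to the disjoint partition of $\FC_1$ and $\widetilde\FC_1$, which eliminates the off-diagonal contributions in $\gamma,\nu$. Everything else is straightforward lattice book-keeping, and no non-trivial limit or topology argument is required once Propositions~\ref{prop:charge-tauPeriodic} and~\ref{prop:AXi} are invoked.
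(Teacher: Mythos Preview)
Your proposal is correct and follows essentially the same approach as the paper: reduce to traces over a single cell via Propositions~\ref{prop:charge-tauPeriodic}\ref{item:per+traceclassoncompact} and~\ref{prop:AXi}\ref{it:AXi exh dep}, decompose along the common refinement \eqref{eq:decomps C_1 and tildeC_1}, and use $T_\gamma^{-1}X_iT_\gamma=X_i+\gamma_i$ together with periodicity of $A$. The only cosmetic difference is that the paper kills the cross-terms via the idempotency $\chi_{\mathrm{T}_\gamma P_\gamma}^2=\chi_{\mathrm{T}_\gamma P_\gamma}$ and cyclicity of the ordinary trace, while you invoke an adapted orthonormal basis; both arguments are equivalent here.
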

\begin{proof}
\ref{it: tau indep for Aperiodic}
In view of Proposition~\ref{prop:charge-tauPeriodic}\ref{item:per+traceclassoncompact}, it suffices to prove
\[
\frac{1}{\abs{\FC_1}} \Tr(\chi_1\, A\, \chi_1)=\frac{1}{ |\widetilde{\FC}_1| } \Tr(\widetilde{\chi}_1\, A\, \widetilde{\chi}_1).
\]

Obviously, from decompositions \eqref{eq:decomps C_1 and tildeC_1} and the translation invariance of the Lebesgue measure, it follows that $\abs{\FC_1}=|\widetilde{\FC}_1|$. The first identity in \eqref{eq:decomps C_1 and tildeC_1}, conditional cyclicity of the trace and identity $\chi_{\mathrm{T}_{\gamma} P_\gamma}^2=\chi_{\mathrm{T}_{\gamma} P_\gamma}$ imply that
\begin{equation} \label{eqn:pieces}
\Tr(\chi_1\, A\, \chi_1)=\sum_{\gamma\in I}\Tr(\chi_{\mathrm{T}_{\gamma} P_\gamma}\, A\, \chi_{\mathrm{T}_{\gamma} P_\gamma})=\sum_{\gamma\in I}\Tr(T_{\gamma}\chi_{P_\gamma}T^*_{\gamma}\, A\, T_{\gamma}\chi_{P_{\gamma}}T^*_\gamma).
\end{equation}
Now, by the invariance of trace under unitary conjugation and the periodicity of $A$, one has that 
\[
\Tr(\chi_1\, A\, \chi_1)=\sum_{\gamma\in I}\Tr(\chi_{P_\gamma}\, A\, \chi_{P_\gamma})=\Tr(\widetilde{\chi}_1\, A\, \widetilde{\chi}_1),
\]
where in the last equality we used the second decomposition in \eqref{eq:decomps C_1 and tildeC_1}.

\ref{it: tau dep for X_iA} After arguing as in the steps leading to \eqref{eqn:pieces}, one notices that
\begin{align*}
\Tr(T_{\gamma}\chi_{P_\gamma}T^*_{\gamma}\, X_i A\, T_{\gamma}\chi_{P_\gamma}T^*_{\gamma})&=\Tr(\chi_{P_\gamma}T^*_{\gamma}\, X_i A\, T_{\gamma}\chi_{P_\gamma})=\Tr(\chi_{P_\gamma}T^*_{\gamma}\, X_i T_{\gamma} A\, \chi_{P_\gamma})\\
&=\gamma_i\Tr(\chi_{P_\gamma} A\, \chi_{P_\gamma})+\Tr(\chi_{P_\gamma} X_i A\, \chi_{P_\gamma}) \\
& = \Tr(\chi_{P_\gamma} X_i A\, \chi_{P_\gamma})
\end{align*}
where we have used $[T_\gamma^*,X_i]=\gamma_i \,T_\gamma^*$. Therefore, in view of the decompositions  \eqref{eq:decomps C_1 and tildeC_1}, we conclude that
\begin{align*}
\Tr(\chi_1\, X_i A\, \chi_1)=\sum_{\gamma\in I}\gamma_i\Tr(\chi_{P_\gamma} A\, \chi_{P_\gamma})+\Tr(\widetilde{\chi}_1\, X_i A\, \widetilde{\chi}_1)=\Tr(\widetilde{\chi}_1\, X_i A\, \widetilde{\chi}_1).
\end{align*}
which yields the claim. 
\end{proof}

For any $\gamma\in I$, consider the operator  $\Id_{\set{T_\eta T_\gamma P_\gamma\,:\, \eta \in \Gamma}}=\Id_{\set{T_\eta P_\gamma\,:\, \eta \in \Gamma}}$, which is periodic by its very definition.  
(Here $\Id_\Omega$ is an alternative notation for the indicatrix function of the set $\Omega$).
By applying \eqref{eqn:tauPeriodic1}, one has
\begin{equation}
\label{eqn:rewr of mesopointwise spin torque}
|{\FC}_1|\tau(\Id_{\set{T_\eta P_\gamma\,:\, \eta \in \Gamma}}A)= \Tr({\chi}_1\Id_{\set{T_\eta P_\gamma\,:\, \eta \in \Gamma}} A {\chi}_1)=\Tr(\chi_{P_\gamma}\, A\, \chi_{P_\gamma}),
\end{equation}
for every operator $A$ which is periodic and trace class on compact sets. Using the previous rewriting, we are in position to prove that the restriction of the $S$-rotation part $\chi_{P_\gamma}\, \mathsf{R}\, \chi_{P_\gamma}$, defined in \eqref{eqn:defn T and R}, has vanishing trace whenever the model enjoys a discrete rotational symmetry.

\subsection{Models with discrete rotational symmetries}
Let us fix indices $i\neq j \in \set{1,\ldots,d}$, and denote by $\mathrm{R}_{\vartheta,(ij)}$ the counterclockwise rotation of angle $\vartheta\in [0,2\pi)$ in the plane $(x_i,x_j)$:
\begin{multline*}
\mathrm{R}_{\vartheta,(ij)}(x_1,\ldots,x_i,\ldots,x_j,\ldots,x_d) \\
:= (x_1,\ldots,(\cos\vartheta) x_i - (\sin\vartheta) x_j,\ldots,(\sin \vartheta) x_i +(\cos\vartheta) x_j,\ldots,x_d).
\end{multline*}
\emph{Rotation operators in the plane $(x_i,x_j)$} on $\Hi$ are defined via
\[
(R_{\vartheta,(ij)}\psi)(x):=\rho_{\vartheta,(ij)}\psi(\mathrm{R}_{\vartheta,(ij)}^{-1}x), \quad \text{for } \psi\in\Hi,
\]
where $\rho_{\vartheta,(ij)}$ is a unitary matrix acting on $\C^N$\footnote{In $2$-level systems one defines $\rho_{\vartheta,(12)}:=\E^{-\iu \vartheta s_z}$ to encode the rotation of angle $\theta$ around the $z$-axis on $\C^2$.}.

Suppose that  the $d$-dimensional crystal under consideration is invariant under a rotation of angle $\vartheta = 2\pi/n$, for some $n \in \N^*$, in the plane $(x_i,x_j)$, namely $\gamma\in\Gamma$ if and only if $\mathrm{R}_{\vartheta, (ij)} \gamma\in\Gamma$ (then it trivially follows that $x\in\X$ if and only if $\mathrm{R}_{\vartheta, (ij)} x\in\X$).
A periodic Hamiltonian $H_0$ is said to be itself \emph{rotationally symmetric} or \emph{invariant under rotation of angle $\vartheta$  in the plane $(x_i,x_j)$} if and only if $R_{\vartheta,(ij)}^{-1} \, H_0 \, R_{\vartheta,(ij)} = H_0$. For example, several models on the honeycomb structure, including \eg the Kane--Mele model (see \cite{KaneMele2005a} or \cite[Appendix A]{MarcelliPanatiTauber18}), are invariant under the rotation $R_{2\pi/3, (12)}$.

\begin{proposition}
\label{prop:van mesopointwise spin torque}
Let $\vartheta=2\pi/n$ for some $n\in\N^*$. 
Let the Bravais lattice $\Gamma$ be invariant under the rotation $\mathrm{R}_{\vartheta,(ij)}$, \ie $\gamma\in\Gamma$ if and only if $\mathrm{R}_{\vartheta, (ij)} \gamma\in\Gamma$, and
\begin{equation}
\label{eqn:subclassofmodel}
R_{\vartheta, (ij)}^{-1}\Id_{\set{T_\eta P_\gamma\,:\, \eta \in \Gamma}}R_{\vartheta, (ij)}=\Id_{\set{T_\eta P_\gamma\,:\, \eta \in \Gamma}} 
\qquad \forall\gamma\in I.
\end{equation}
Let the operator $H_0$, as in Assumption~\ref{assum:1}, be rotationally symmetric of angle $\vartheta$  in the plane $(x_i,x_j)$. Let $S= \Id_{L^2(\X)} \otimes s$, as in Definition~\ref{def:$S$-current}, be such that $\rho_{\vartheta,(ij)}^{-1}s\rho_{\vartheta,(ij)}=s$. Then

\begin{enumerate}[label=(\roman*), ref=(\roman*)]
\item \label{item:pointwisemesospintorq}
\[
\Tr(\chi_{P_\gamma}\, \mathsf{R} \,\chi_{P_\gamma})=0  \qquad \forall\gamma\in I,
\]
where $\mathsf{R}=\iu  [H_0, S ]  \Pi_1$ and 
$\set{P_\gamma}_{\gamma \in I}$ are the sets defined in \eqref{P_gamma}.

\item \label{item:persistentcurrent} the persistent \emph{conventional} $S$-current vanishes, namely
$\tau(J\subm{conv}{i}^S\Pi_0)=0.$ 
\end{enumerate}
\end{proposition}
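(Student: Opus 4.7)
The plan is to exploit the discrete rotational symmetry $R_{\vartheta,(ij)}$ in the $(x_i,x_j)$-plane to derive, for each of (i) and (ii), a $2\times 2$ linear system for the relevant traces whose unique solution (for $n \geq 2$) is zero.

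First I would establish that $\tau$ is invariant under $R_{\vartheta,(ij)}$-conjugation on periodic operators which are trace class on compact sets, i.e.\ $\tau(R_{\vartheta,(ij)}^{-1}\,A\,R_{\vartheta,(ij)}) = \tau(A)$. Indeed, $R_{\vartheta,(ij)}\chi_1 R_{\vartheta,(ij)}^{-1}$ is the indicator of $\mathrm{R}_{\vartheta,(ij)}\FC_1$, which is still a fundamental cell for $\Gamma$ by the $\mathrm{R}_{\vartheta,(ij)}$-invariance of $\Gamma$, so Proposition~\ref{cor:tauindprcell}\ref{it: tau indep for Aperiodic} (UCC) identifies the two traces per unit volume. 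This is the main (minor) technical point of the argument, since $\mathrm{R}_{\vartheta,(ij)}\FC_1$ is generally not of the parallelepipedal form~\eqref{eqn:defn FCL}. By the remaining hypotheses, $R_{\vartheta,(ij)}$ also commutes with $H_0$, $\Pi_0$, $S$, $[H_0,S]$ and $\Xi_\gamma := \Id_{\set{T_\eta P_\gamma\,:\, \eta\in\Gamma}}$, while it mixes the position operators as $R_{\vartheta,(ij)}^{-1} X_i R_{\vartheta,(ij)} = (\cos\vartheta) X_i - (\sin\vartheta) X_j$ and $R_{\vartheta,(ij)}^{-1} X_j R_{\vartheta,(ij)} = (\sin\vartheta) X_i + (\cos\vartheta) X_j$.

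For (i), using that $\mathcal{I}$ commutes with $R_{\vartheta,(ij)}$-conjugation (a consequence of the commutation of $R_{\vartheta,(ij)}$ with the resolvent of $H_0$ and with $\Pi_0$), I set $\Pi_1^{(k)} := \mathcal{I}(\overline{[X_k,\Pi_0]})$ and $\mathsf{R}^{(k)} := \iu[H_0,S]\,\Pi_1^{(k)}$, so that $\mathsf{R} = \mathsf{R}^{(j)}$ and $\mathsf{R}^{(k)} \in \B_1^\tau$ by an immediate adaptation of Proposition~\ref{cor:welldef sigmaijeps}\ref{item:welldef sigmaijeps 1}. Linearity of $\mathcal{I}$ then gives $R_{\vartheta,(ij)}^{-1}\mathsf{R}^{(j)}R_{\vartheta,(ij)} = (\sin\vartheta)\,\mathsf{R}^{(i)} + (\cos\vartheta)\,\mathsf{R}^{(j)}$ and analogously $R_{\vartheta,(ij)}^{-1}\mathsf{R}^{(i)}R_{\vartheta,(ij)} = (\cos\vartheta)\,\mathsf{R}^{(i)} - (\sin\vartheta)\,\mathsf{R}^{(j)}$. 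Multiplying by $\Xi_\gamma$ (which commutes with $R_{\vartheta,(ij)}$) and applying $\tau$ yields the linear system
\begin{align*}
(1-\cos\vartheta)\,\tau(\Xi_\gamma\mathsf{R}^{(j)}) &= (\sin\vartheta)\,\tau(\Xi_\gamma\mathsf{R}^{(i)}), \\
(1-\cos\vartheta)\,\tau(\Xi_\gamma\mathsf{R}^{(i)}) &= -(\sin\vartheta)\,\tau(\Xi_\gamma\mathsf{R}^{(j)}),
\end{align*}
whose determinant is $2(1-\cos\vartheta) > 0$ for $n \geq 2$. Hence $\tau(\Xi_\gamma\mathsf{R}) = 0$, equivalently $\Tr(\chi_{P_\gamma}\,\mathsf{R}\,\chi_{P_\gamma}) = 0$ by identity~\eqref{eqn:rewr of mesopointwise spin torque}.

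Part (ii) follows from the same pattern. By Lemma~\ref{Lem:reduce_Jconv}, the claim reduces to $\re\,\tau(C_i) = 0$ with $C_k := \iu\overline{[H_0,X_k]}\,S\,\Pi_0 \in \B_1^\tau$ (Proposition~\ref{cor:welldef sigmaijeps}\ref{item:welldef sigmaijeps 1}). The rotation argument applied directly to $C_i,C_j$ produces $R_{\vartheta,(ij)}^{-1} C_i R_{\vartheta,(ij)} = (\cos\vartheta) C_i - (\sin\vartheta) C_j$ and the analogous identity for $C_j$; the resulting $2\times 2$ system forces $\tau(C_i) = 0$ for $n \geq 2$, which is stronger than the required vanishing of its real part.
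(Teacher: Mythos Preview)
Your proof is correct and rests on the same core idea as the paper's: unitary conjugation by $R_{\vartheta,(ij)}$ leaves $H_0$, $S$, $\Pi_0$, $\mathcal{I}$ and $\Xi_\gamma$ invariant while rotating $(X_i,X_j)$, and the induced action on $\tau$ is trivial by UCC (Proposition~\ref{cor:tauindprcell}\ref{it: tau indep for Aperiodic}) applied to the rotated fundamental cell. You are right to flag that $\mathrm{R}_{\vartheta,(ij)}\FC_1$ need not be parallelepipedal; the paper glosses over this, but its proof of Proposition~\ref{cor:tauindprcell}\ref{it: tau indep for Aperiodic} indeed works for arbitrary primitive cells.

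The only genuine difference is in how the conclusion is extracted. The paper iterates the rotation $n$ times and uses linearity of $\mathcal{I}$ to reduce to $\sum_{k=0}^{n-1}\E^{\iu 2\pi k/n}=0$, whereas you apply a single rotation and observe that the resulting $2\times 2$ system for $\big(\tau(\Xi_\gamma\mathsf{R}^{(i)}),\tau(\Xi_\gamma\mathsf{R}^{(j)})\big)$ has determinant $2(1-\cos\vartheta)>0$ for $n\geq 2$. These are equivalent linear-algebra endgames; your version is arguably cleaner, and it makes explicit that the case $n=1$ (trivial rotation) is excluded, a restriction the paper's argument also needs but does not state. For (ii), the paper actually proves the pointwise statement $\Tr(\chi_{P_\gamma}\,\iu\overline{[H_0,X_i]}S\Pi_0\,\chi_{P_\gamma})=0$ for each $\gamma\in I$, while you go directly to $\tau(C_i)=0$; both imply the claim as stated.
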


\begin{proof}
\ref{item:pointwisemesospintorq}
In view of Proposition~\ref{cor:welldef sigmaijeps}\ref{item:welldef sigmaijeps 1} $\Id_{\set{T_\eta P_\gamma\,:\, \eta \in \Gamma}} \mathsf{R}\in\B_1^\tau$ and thus applying \eqref{eqn:rewr of mesopointwise spin torque} the thesis is equivalent to show that 
$$
\tau(\Id_{\set{T_\eta P_\gamma\,:\, \eta \in \Gamma}} \mathsf{R} )=\iu\tau(\Id_{\set{T_\eta P_\gamma\,:\, \eta \in \Gamma}}   [H_0, S ]  \Pi_1 )=0.
$$
Using the invariance of the trace under unitary conjugation and the identities $R_{\vartheta, (ij)}^{-1}\chi_1 R_{\vartheta, (ij)}=\widetilde{\chi}_1$, $R_{\vartheta, (ij)}^{-1} H_0 R_{\vartheta, (ij)}=H_0$ and $R_{\vartheta, (ij)}^{-1} S R_{\vartheta, (ij)}=S$ and \eqref{eqn:subclassofmodel}, and $ \Pi_1 = \mathcal{I}\( \overline{[  X_j,\Pi_0]}\) $ by Proposition~\ref{prop:Pi1}\ref{item:propSA3}, we obtain that
\begin{align*}
&|{\FC}_1|\tau(\Id_{\set{T_\eta P_\gamma\,:\, \eta \in \Gamma}}   [H_0, S ]  \mathcal{I}\( {[  X_j,\Pi_0]}\) )\\
&\quad=\Tr({\chi}_1\Id_{\set{T_\eta P_\gamma\,:\, \eta \in \Gamma}}   [H_0, S ]  \mathcal{I}\( {[  X_j,\Pi_0]}\) {\chi}_1)\\
&\quad=\Tr(\widetilde{\chi}_1\Id_{\set{T_\eta P_\gamma\,:\, \eta \in \Gamma}}   [H_0, S ]  \mathcal{I}\( {[ R_{\vartheta, (ij)}^{-1} X_j R_{\vartheta, (ij)},\Pi_0]}\) \widetilde{\chi}_1)\\
&\quad=|\widetilde{\FC}_1|\widetilde{\tau}\(\Id_{\set{T_\eta P_\gamma\,:\, \eta \in \Gamma}}  [H_0, S ]  \mathcal{I}\( {[ R_{\vartheta, (ij)}^{-1} X_j R_{\vartheta, (ij)},\Pi_0]}\)\)\\
&\quad=|{\FC}_1|{\tau}\(\Id_{\set{T_\eta P_\gamma\,:\, \eta \in \Gamma}}  [H_0, S ]  \mathcal{I}\( {[ R_{\vartheta, (ij)}^{-1} X_j R_{\vartheta, (ij)},\Pi_0]}\)\),
\end{align*}
where we have used Proposition~\ref{prop:charge-tauPeriodic}\ref{item:per+traceclassoncompact} and Proposition~\ref{cor:tauindprcell}\ref{it: tau indep for Aperiodic}. Therefore, by iterating the previous computation we have that 
\begin{align}
&\tau(\Id_{\set{T_\eta P_\gamma\,:\, \eta \in \Gamma}}   [H_0, S ]  \mathcal{I}\( {[  X_j,\Pi_0]}\) )\cr
\label{eqn:spintorquek}
&\quad=\frac{1}{n}\sum_{k=0}^{n-1}{\tau}\(\Id_{\set{T_\eta P_\gamma\,:\, \eta \in \Gamma}}  [H_0, S ]  \mathcal{I}\( {[ R_{\vartheta, (ij)}^{-k} X_j R_{\vartheta, (ij)}^{k},\Pi_0]}\)\).
\end{align}
Now we are going to compute $\sum_{k=0}^{n-1} R_{\vartheta, (ij)}^{-k} X_j
 R_{\vartheta, (ij)}^{k}$.
The rotation of angle $\vartheta$ acts non-trivially only in the plane $(x_i,x_j)$, which we parametrize with the complex coordinate $z := x_i + \iu x_j$. In this parametrization, the rotation of angle $\vartheta$ is implemented as $\hat{\mathrm{R}}_{\vartheta}z:= \E^{\iu \vartheta}z$.  Introducing the complex position operator $Z:=X_i+\iu X_j$, one has then that $\hat{R}_{\vartheta, (ij)}^{-k}\,\iu X_j\,\hat{R}_{\vartheta, (ij)}^{k}=\im\(\E^{\iu k\vartheta}Z\)$ and thus 
\[  \sum_{k=0}^{n-1} \hat{R}_{\vartheta, (ij)}^{-k} \,\iu X_j\, \hat{R}_{\vartheta, (ij)}^{k} =\im \left( \sum_{k=0}^{n-1} \E^{\iu 2 \pi k/n}  Z\right). \]
As $\sum_{k=0}^{n-1}\E^{\iu 2\pi k/n}=0$, we deduce that the term in \eqref{eqn:spintorquek} vanishes. This concludes the proof.

\ref{item:persistentcurrent} In view of decomposition \eqref{eq:decomps C_1 and tildeC_1}, it suffices to show that 
\[
\Tr(\chi_{P_\gamma}\,i[H_0,X_i]S\Pi_0  \,\chi_{P_\gamma})=0\quad \forall\gamma\in I,
\]
whose proof is analogous to the previous one since $[H_0,X_i]S\Pi_0\in\B_1^\tau.$
\end{proof}

\begin{remark}
In general, even exploiting the peculiar discrete rotational symmetries in the hypotheses of Proposition~\ref{prop:van mesopointwise spin torque}, it is not obvious that the persistent \emph{proper} $S$-current vanishes, \ie $\tau( J\subm{prop}{i}^S\Pi_0)=0$, since the argument we used relies on the periodicity of the operators involved and $J\subm{prop}{i}^S$ is not periodic. Nevertheless, in the Kane--Mele model this property holds true thanks to the specific structure of the model.
\end{remark}

\subsection{Vanishing of persistent $S$-current when $S$ is conserved} \label{app:persistentS}

When $S$ is a conserved quantity, namely when $[H_0, S]=0$, the vanishing of the persistent $S$-current $J_i^S=J\subm{prop}{i}^S=J\subm{conv}{i}^S$ holds true without any symmetry assumption on $H_0$ (compare \cite{Bellissard94, BoucletGerminetKleinSchenker05}, where similar results are deduced in the case $S = \Id$, and \cite{Bachmann_et_al18}, which offers a proof in the context of many-body quantum spin systems).

To show this, notice first that $[\Pi_0,S]=0$ as well, and that $J_i^S = \iu \overline{[H_0,  X_i]}\, S$ is a periodic operator in view of Lemma~\ref{lemma:derivata}. 
Moreover, $J_i^S \Pi_0=\iu \overline{[H_0,  X_i]}\, S\Pi_0$ is $\tau$-class in view of Proposition~\ref{cor:welldef sigmaijeps}\ref{item:welldef sigmaijeps 1}. Consequently, by the identity $\Pi_0^2=\Pi_0$, Lemma~\ref{prop:cycl of tau} and Proposition~\ref{prop:charge-tauPeriodic}\ref{item:per+traceclassoncompact}, we have that
\begin{align*}
\tau(\overline{[H_0,  X_i]} S\Pi_0)&=\tau(\Pi_0\overline{[H_0,  X_i]} S\Pi_0)=\frac{1}{|\FC_1|}\Tr(\chi_1\Pi_0\overline{[H_0,  X_i]} S\Pi_0\chi_1).
\end{align*}
By Remark~\ref{rmk:SmoothBF}, 
after some algebra, we get that
\begin{align}
\chi_1\Pi_0\overline{[H_0,  X_i]} S\Pi_0\chi_1&=\chi_1\Pi_0[H_0,  X_i] S\Pi_0\chi_1=\chi_1[\Pi_0 H_0 S \Pi_0, \Pi_0 X_i\Pi_0] \chi_1\cr
\label{eqn:perscurrent two terms}
&=\chi_1[\Pi_0 H_0 S \Pi_0, X_i] \chi_1-\chi_1[\Pi_0 H_0 S \Pi_0, {X_i}\su{OD}] \chi_1.
\end{align}
Notice that the trace of the first summand above vanishes:
\begin{align*}
\Tr(\chi_1[\Pi_0 H_0 S \Pi_0, X_i] \chi_1)&=\Tr(\chi_1\Pi_0 H_0 S \Pi_0\chi_1 X_i \chi_1 - \chi_1 X_i \chi_1\Pi_0 H_0 S \Pi_0 \chi_1),
\end{align*}
where both summands inside the trace are trace class. Indeed, $\Pi_0 \chi_1$ is an Hilbert--Schmidt operator, since $\Tr((\Pi_0 \chi_1)^* \Pi_0 \chi_1) = \Tr(\chi_1 \Pi_0 \chi_1) = |\FC_1| \tau(\Pi_0) < \infty$ by Lemma~\ref{lem:Pi in tau-class}. This implies that the adjoint $\chi_1 \Pi_0$ is Hilbert--Schmidt as well. Since $S$, $\Pi_0 H_0 \Pi_0$ and $\chi_1 X_i \chi_1$ are all bounded operators, the desired claim follows in view of the conditional cyclicity of the trace $\Tr(\,\cdot\,)$. 
Finally, we have that the trace of the second summand in \eqref{eqn:perscurrent two terms} vanishes as well. Indeed, by Remark~\ref{rmk:SmoothBF}, Lemma~\ref{lemma:diag}\ref{item:diag2} and definition~\eqref{eqn:closure der of Pi0 and H0} we obtain that
\[
\chi_1[\Pi_0 H_0 S \Pi_0, {X_i}\su{OD}] \chi_1=\chi_1[\Pi_0 H_0 S \Pi_0, [\,\overline{[X_i,\Pi_0]}\,,\Pi_0]] \chi_1
\]
and thus, using Proposition~\ref{prop:charge-tauPeriodic}\ref{item:per+traceclassoncompact}, we deduce
\begin{align*}
\frac{1}{|\FC_1|}\Tr(\chi_1[\Pi_0 H_0 S \Pi_0, {X_i}\su{OD}] \chi_1)=& \tau\(\Pi_0 H_0 S \Pi_0 [\,\overline{[X_i,\Pi_0]}\,,\Pi_0]\)+\\
&-\tau\([\,\overline{[X_i,\Pi_0]}\,,\Pi_0] \Pi_0 H_0 S \Pi_0 \).
\end{align*}
The conditional cyclicity of $\tau$ implies the conclusion, since $\Pi_0 H_0 S \Pi_0\in\B_\infty^\tau$ and $[\,\overline{[X_i,\Pi_0]}\,,\Pi_0]\in \B_1^\tau$.


\bigskip \bigskip


{\tiny  

\begin{tabular}{ll}

(G.~Marcelli) & \textsc{Fachbereich Mathematik, Eberhard Karls Universit\"{a}t T\"{u}bingen} \\
 &  Auf der Morgenstelle 10, 72076 T\"{u}bingen, Germany \\
 &  {E-mail address}: \href{mailto:giovanna.marcelli@uni-tuebingen.de}{\texttt{giovanna.marcelli@uni-tuebingen.de}}\\
\\
(G.~Panati) & \textsc{Dipartimento di Matematica, ``La Sapienza'' Universit\`{a} di Roma} \\
 &  Piazzale Aldo Moro 2, 00185 Rome, Italy \\
 &  {E-mail address}: \href{mailto:panati@mat.uniroma1.it}{\texttt{panati@mat.uniroma1.it}} \\
\\
(S.~Teufel) & \textsc{Fachbereich Mathematik, Eberhard Karls Universit\"{a}t T\"{u}bingen} \\
 &  Auf der Morgenstelle 10, 72076 T\"{u}bingen, Germany \\
 &  {E-mail address}: \href{mailto:stefan.teufel@uni-tuebingen.de}{\texttt{stefan.teufel@uni-tuebingen.de}} \\
 \\

\end{tabular}

}

\begin{thebibliography}{00}


\bibitem{AizenmanGraf98}
M.~Aizenman, G.M.~Graf. 
Localization bounds for an electron gas. {\it J. Phys. A} {\bf 31}, 6783--6806 (1998).

\bibitem{AizenmanWarzel15}
M.~Aizenman, S.~Warzel. {\it Random Operators}. Vol.~168 in Graduate Studies in Mathematics. American Mathematical Society, Providence (2015).

\bibitem{AntinucciMastropietroPorta17}
G.~Antinucci, V.~Mastropietro, M.~Porta. 
Universal edge transport in interacting Hall systems. {\it Commun. Math. Phys.} {\bf 362}, 295--362 (2018).

\bibitem{AshcroftMermin}
 N.W.~Ashcroft, N.D.~Mermin. {\it Solid State Physics}. College edition. Harcourt College Publishers, United States of America (1976).
 
\bibitem{AvronSeilerSimon}
J.~Avron, R.~Seiler, B.~Simon. 
Charge deficiency, charge transport and comparison of dimensions. {\it Commun. Math. Phys.} {\bf 159}, 399--422 (1994).

\bibitem{Bachmann_et_al17}
S.~Bachmann, A.~Bols, W.~De Roeck, M.~Fraas. 
Quantization of conductance in gapped interacting systems. {\it Ann. H. Poincar\'{e}} {\bf 19}, 695--708 (2018).

\bibitem{BachmannDeRoeckFraas17}
S.~Bachmann, W.~De Roeck, M.~Fraas. 
The adiabatic theorem and linear response theory for extended quantum systems. {\it Commun. Math. Phys.} {\bf 361}, 997--1027 (2018).

\bibitem{Bachmann_et_al18}
S.~Bachmann, A.~Bols, W.~De Roeck, M.~Fraas. 
A many-body index for quantum charge transport.  
{\it Commun. Math. Phys.}, Online First (2019).

\bibitem{Bellissard86}
J.~Bellissard. $K$-theory of $C^*$-algebras in solid state physics. In: T.C.~Dorlas, N.M.~Hugenholtz, and M.~Winnink (eds.), {\it Statistical Mechanics and Field Theory: Mathematical Aspects}, pp.~99--156. Vol.~257 in Lecture Notes in Physics. Springer-Verlag, Berlin (1986).

\bibitem{Bellissard94}
J.~Bellissard, A.~van Elst, H.~Schulz-Baldes.
The non commutative geometry of the quantum Hall effect. {\it J. Math. Phys.} {\bf 35}, 5373--5471 (1994).

\bibitem{BirmanSuslina}
S.H.~Birman, T.A.~Suslina. 
Periodic magnetic Hamiltonian with a variable metric. The problem of absolute continuity. {\it Algebra i Analiz} {\bf 11}, 1--40 (1999). English translation in {\it St. Petersburg Math. J.} {\bf 11}, 1--30 (2000).

\bibitem{BoucletGerminetKleinSchenker05}
J.-M.~Bouclet, F.~Germinet, A.~Klein, J.H.~Schenker. Linear response theory for magnetic Schr\"{o}dinger operators in disordered media. {\it J. Funct. Anal.} {\bf 226}, 301--372 (2005).

\bibitem{BrudeSiqueiraPedra16}
J.-B.~Bru, W.~de Siqueira Pedra. {\it Lieb--Robinson Bounds for Multi-Commutators and Applications to Response Theory}. Vol.~13 in Springer Briefs in Mathematical Physics. Springer (2016).

\bibitem{CFLSD20}
E.~C\'ances, C.~Fermanian Kammerer, A.~Levitt, S.~Siraj-Dine. 
Coherent electronic transport in periodic systems. 
Preprint \href{http://arxiv.org/abs/2002.01990}{\texttt{arXiv:2002.01990}} (2020).

\bibitem{CorneanMonacoTeufel2017}
H.D.~Cornean, D.~Monaco, S.~Teufel. 
Wannier functions and $\Z_2$ invariants in time-reversal symmetric topological insulators.
{\it Rev. Math. Phys.} {\bf 29}, 1730001 (2017). 

\bibitem{CulcerYaoNiu05}
D.~Culcer, Y.~Yao, Q.~Niu. Coherent wave-packet evolution in coupled bands. {\it Phys. Rev. B} {\bf 72}, 085110 (2005).

\bibitem{DeNittisLein16}
G.~De Nittis, M.~Lein. {\it Linear Response Theory: A Modern Analytic-Algebraic Approach}. Vol.~21 in Springer Briefs in Mathematical Physics. Springer (2017).

\bibitem{DuKnezSullivanDu2016}
L.~Du, I.~Knez, G.~Sullivan, R.R.~Dui. 
Robust helical edge transport in gated InAs/GaSb bilayers.
{\it Phys. Rev. Lett.} {\bf 114}, 096802 (2015).    

\bibitem{ElgartSchlein04}
A.~Elgart, B.~Schlein. Adiabatic charge transport and the Kubo formula for Landau-type hamiltonians. {\it Commun. Pur. Appl. Math.} {\bf 57}, 590--615 (2004).

\bibitem{FiorenzaMonacoPanati2016b}
D.~Fiorenza, D.~Monaco, G.~Panati. 
$\Z_2$-invariants of topological insulators as geometric obstructions.
{\it Commun. Math. Phys. }{\bf 343}, 1115--1157 (2016). 

\bibitem{FuKane2006}
L.~Fu, C.L.~Kane. Time reversal polarization and a $\Z_2$ adiabatic spin pump. 
{\it Phys. Rev.~B} {\bf 74}, 195312 (2006).  


\bibitem{Graf07}
G.M.~Graf. Aspects of the Integer Quantum Hall Effect. In : F.~Gesztesy, P.~Deift, C.~Galvez, P.~Perry, and W.~Schlag (eds.), {\it Spectral Theory and Mathematical Physics: A Festschrift in Honor of Barry Simon's 60th Birthday}, pp.~429--442. No. 76 in Proceedings of Symposia in Pure Mathematics. American Mathematical Society, Providence (2007).

\bibitem{GrafPorta2013}
G.M.~Graf, M.~Porta. 
Bulk-edge correspondence for two-dimensional topological insulators.
{\it Commun. Math. Phys. }{\bf 324}, 851--895 (2013). 

\bibitem{GiulianiMastropietroPorta17}
A.~Giuliani, V.~Mastropietro, M.~Porta. 
Universality of the Hall conductivity in interacting electron systems. {\it Commun. Math. Phys.} {\bf 349}, 1107--1161 (2017).

\bibitem{Haldane88}
F.D.M.~Haldane. Model for a quantum Hall effect without Landau levels: condensed-matter realization of the ``parity anomaly''. {\it Phys. Rev. Lett.} {\bf 61}, 2015  (1988).

\bibitem{HasanKane10}
M.Z.~Hasan, C.L.~Kane. Colloquium: topological insulators. {\it Rev. Mod. Phys.} {\bf 82}, 3045--3067 (2010).

\bibitem{HastingsMichalakis15}
M.~Hastings, S.~Michalakis. Quantization of Hall conductance for interacting electrons on a torus. {\it Commun. Math. Phys.} {\bf 334}, 433--471 (2015).


\bibitem{Teufel20} 
J.~Henheik, S.~Teufel. Justifying Kubo's formula for gapped systems at zero temperature: A brief review and some new results. {\it Rev. Math. Phys.}, Online Ready (2020).

\bibitem{JOP1}
V. Jak\v{s}i\'c  , Y. Ogata, C.A. Pillet. The Green-Kubo formula and Onsager reciprocity relations in quantum statistical mechanics. {\it Commun. Math. Phys.} {\bf 265}, 721--738 (2006). 

\bibitem{JOP2}
V. Jak\v{s}i\'c , Y. Ogata, C.A. Pillet. 
Linear response theory for thermally driven quantum open systems.
{\it J. Stat. Phys.} {\bf 123}, 547--569 (2006).

\bibitem{JOP3}
V. Jak\v{s}i\'c  , Y. Ogata, C.A. Pillet. 
The Green-Kubo formula for the spin-fermion system.
{\it Commun. Math. Phys.} {\bf 268}, 369--401 (2006). 

\bibitem{JOP4}
V. Jak\v{s}i\'c  , Y. Ogata, C.A. Pillet. 
The Green-Kubo formula for locally interacting fermionic systems.
{\it Ann. H. Poincar\'{e}} {\bf 8}, 1013--1036 (2007).

\bibitem{KaneMele2005a}
C.L.~Kane, E.J.~Mele. 
Quantum spin Hall effect in graphene.
{\it Phys. Rev. Lett.} {\bf 95}, 226801 (2005).

\bibitem{KaneMele2005b}
C.L.~Kane, E.J.~Mele. 
$\Z_2$ topological order and the quantum spin Hall effect. {\it Phys.~Rev.~Lett.} {\bf 95}, 146802 (2005).

\bibitem{Kato66}
T.~Kato. {\it Perturbation Theory for Linear Operators}. Springer-Verlag, Berlin Heidelberg (1966).

\bibitem{Kato72}
T.~Kato. Schr\"{o}dinger operators with singular potentials. {\it Israel J. Math.} {\bf 13}, 135--148 (1972).

\bibitem{KatsuraKoma16}
H.~Katsura, T.~Koma. 
The $\Z_2$ index of disordered topological insulators with time-reversal symmetry. 
{\it J. Math. Phys.} {\bf 57}, 021903 (2016). 

\bibitem{KleinSeiler90}
M.~Klein, R.~Seiler. 
Power-law corrections to the Kubo formula vanish in quantum Hall systems. {\it Commun. Math. Phys.} {\bf 128}, 141--160 (1990).

\bibitem{Koenig_et_al08}
M.~K\"{o}nig \textsl{et al.}. The quantum spin Hall effect: theory and experiment. {\it J. Phys. Soc. Jpn.} {\bf 77}, 031007 (2008).

\bibitem{Kubo57}
R.~Kubo. Statistical-mechanical theory of irreversible processes I: General theory and simple applications to magnetic and conduction problems. {\it J. Phys. Soc. Jpn.} {\bf 12}, 570--586 (1957).

\bibitem{Kuchment16}
P.~Kuchment. An overview of periodic ellipic operators. {\it Bull. AMS} {\bf 53}, 343--414 (2016).


\bibitem{MarcelliTeufel19}
G.~Marcelli, S.~Teufel. Non-equilibrium almost-stationary states and linear response for gapped non-interacting quantum systems. In preparation (2020).


\bibitem{MarcelliPanatiTauber18}
G.~Marcelli, G.~Panati, C.~Tauber. 
Spin conductance and spin conductivity in topological insulators: Analysis of Kubo-like terms.  {\it Ann. H. Poincar\'{e}} {\bf 20}, 2071--2099 (2019).

\bibitem{MastropietroPorta17}
V.~Mastropietro, M.~Porta. 
Spin Hall insulators beyond the helical Luttinger model. {\it Phys.~Rev.~B} {\bf 96}, 245135 (2017).

\bibitem{MonacoPanatiPisanteTeufel16}
D.~Monaco, G.~Panati, A.~Pisante, S.~Teufel. 
Optimal decay of Wannier functions in Chern and quantum Hall insulators. {\it Commun. Math. Phys.} {\bf 359}, 61--100 (2018).

\bibitem{MonacoTeufel17}
D.~Monaco, S.~Teufel. 
Adiabatic currents for interacting electrons on a lattice. {\it Rev. Math. Phys.} {\bf 31}, 1950009 (2019).

\bibitem{Nenciu93}
G. Nenciu. 
Linear adiabatic theory, exponential estimates. {\it Commun. Math. Phys.} {\bf 152}, 479--496 (1993).

\bibitem{Nenciu02}
G. Nenciu. 
On asymptotic perturbation theory for quantum mechanics: almost invariant subspaces and gauge invariant magnetic perturbation theory. {\it J. Math. Phys.} {\bf 43}, 1273 (2002).

\bibitem{MorettiOppio18}
V.~Moretti, M.~Oppio. The correct formulation of Gleason's theorem in quaternionic Hilbert spaces.
{\it Ann. H. Poincar\'{e}} {\bf 19},  3321--3355 (2018). 



\bibitem{PanatiPisante13}
G.~Panati, A.~Pisante. Bloch bundles, Marzari--Vanderbilt functional and maximally localized Wannier functions. {\it Commun.  Math. Phys.} {\bf 322}, 835--875 (2013).

\bibitem{PanatiSparberTeufel09}
G.~Panati, C.~Sparber, S.~Teufel. 
Geometric currents in piezoelectricity. {\it Arch. Rat. Mech. Analysis} {\bf 191}, 387--422 (2009).

\bibitem{PanatiSpohnTeufel03}
G.~Panati, H.~Spohn, S.~Teufel. 
Effective dynamics for Bloch electrons: Peierls substitution and beyond. 
{\it Commun.  Math. Phys.} {\bf 242}, 547--578 (2003).

\bibitem{PanatiSpohnTeufel03b}
G.~Panati, H.~Spohn, S.~Teufel. 
Space-adiabatic perturbation theory. {\it  Adv. Theor. Math. Phys.} {\bf 7}, 145-204 (2003).

\bibitem{PasturFigotin92}
L.~Pastur, A.~Figotin. {\it Spectra of Random and Almost-Periodic Operators}. Springer-Verlag, Berlin Heidelberg (1992).

%

\bibitem{Prodan2008c}
E.~Prodan. An edge index for the quantum spin Hall effect. 
{\it J.~Phys.~A: Mathematical and Theoretical} {\bf 42}, (2009).


\bibitem{Prodan2009}
E.~Prodan. Robustness of the spin-Chern number. {\it Phys. Rev. B} {\bf 80}, 125327 (2009).

\bibitem{ReedSimon78}
M.~Reed and B.~Simon, {\it Methods of Modern Mathematical Physics. Volume IV: Analysis of Operators}. Academic Press, New York (1978).

\bibitem{Schulz-Baldes2013}
H.~Schulz-Baldes. 
Persistence of spin edge currents in disordered quantum spin Hall systems. 
{\it Commun. Math. Phys.} {\bf 324}, 589--600 (2013).

\bibitem{Schulz-Baldes15}
H.~Schulz-Baldes.  $\Z_2$-indices and factorization properties
of odd symmetric Fredholm operators.  
{\it Documenta Math.} {\bf 20}, 1481--1500 (2015).

\bibitem{ShengShengTingHaldane2005}
L.~Sheng, D.N.~Sheng, C.S.~Ting, F.D.M.~Haldane.
Nondissipative spin Hall effect via quantized edge transport.
{\it Phys. Rev. Lett.} {\bf 95}, 136602 (2005).  

\bibitem{ShengWengShengHaldane2006}
D.N.~Sheng, Z.Y.~Weng, L.~Sheng, F.D.M.~Haldane. 
Quantum spin-Hall effect and topologically invariant Chern numbers. 
{\it Phys. Rev. Lett.} {\bf 97}, 036808 (2006).  


\bibitem{ShiZhangXiaoNiu06}
J.~Shi, P.~Zhang, D.~Xiao, Q.~Niu. 
Proper definition of spin current in spin-orbit coupled systems. 
{\it Phys. Rev. Lett.} {\bf 96}, 076604 (2006).

\bibitem{ShindouImura05}
R.~Shindou, K.-I.~Imura. 
Noncommutative geometry and non-abelian Berry phase in the wave-packet dynamics of Bloch electrons. 
{\it Nucl. Phys. B} {\bf 720}, 399--435 (2005).


\bibitem{Sinovaetalii}
J.~Sinova, D.~Culcer, Q. Niu, N.A.~Sinitsyn, T.~Jungwirth, A.H.~MacDonald. 
Universal intrinsic spin Hall effect. {\it Phys. Rev. Lett.} {\bf 92}, 126603 (2004).


\bibitem{SinovaValenzuelaWunderlichBackJungwirth15}
J.~Sinova, S.O.~Valenzuela, J.~Wunderlich, C.H.~Back, T.~Jungwirth.
Spin Hall effects. {\it Rev. Mod. Phys.} {\bf 87}, 1213 (2015).

\bibitem{StiepanTeufel13}
H.-M.~Stiepan, S.~Teufel. Semiclassical approximations for hamiltonians with operator-valued symbols.
{\it Commun. Math. Phys.} {\bf 320}, 821--849 (2013).

\bibitem{Teufel03}
S.~Teufel, {\it Adiabatic Perturbation Theory in Quantum Dynamics}. 
No. 1821 in Lecture Notes in Mathematics. Springer, Berlin (2003).

\bibitem{Teufel19}
S.~Teufel. 
Non-equilibrium almost-stationary states for interacting electrons on a lattice. {\it Commun. Math. Phys.} 
{\bf 373}, 621--653 (2020). 


\bibitem{YangHeZheng2020}
X.~Yang, Z.~He, X.~Zheng.
Unit cell consistency of maximally localized Wannier functions
{\it Electronic Structure} {\bf 2}, (2020). 

\bibitem{Yosida68}
K.~Yosida. {\it Functional Analysis}. Springer-Verlag, Berlin Heidelberg New York (1980).

\bibitem{Zak64}
J.~Zak. Magnetic translation group. {\it Phys. Rev.} {\bf 134}, A1602 (1964).

\bibitem{cinesi}
P.~Zhang, Z.~Wang, J.~Shi, D.~Xiao, Q.~Niu. 
Theory of conserved spin current and its application to a two-dimensional hole gas.
{\it Phys. Rev. B} {\bf 77}, 075304 (2008).


\end{thebibliography}
\end{document}